\normalfont\fontsize{10}{17}\bfseries}{\thesubsection}{1em}{}
\journal{TBA}
\newtheorem{theorem}{Theorem}
\newtheorem{rem}{Remark}
\newtheorem{cond}{Condition}
\newtheorem{lemma}{Lemma}
\newtheorem{proposition}{Proposition}
\theoremstyle{remark}
\numberwithin{equation}{section}
\DeclareMathOperator*{\esssup}{ess\,sup}
\title{
\textbf{
Equilibrium Reward for Liquidity Providers in Automated Market Makers
}
}
\author[label1,label2]{Alif Aqsha}
\address[label1]{Mathematical Institute, University of Oxford}
\address[label2]{Oxford-Man Institute of Quantitative Finance}
\address[label3]{Ceremade, Université Paris Dauphine-PSL}
\ead{alif.aqsha@maths.ox.ac.uk}
\author[label3]{Philippe Bergault}
\ead{bergault@ceremade.dauphine.fr}
 \author[label1,label2]{Leandro S\'{a}nchez-Betancourt}
\ead{sanchezbetan@maths.ox.ac.uk}
\DeclareMathSymbol{\shortminus}{\mathbin}{AMSa}{"39}
\begin{document}

\newcommand{\la}{\left \langle}
\newcommand{\ra}{\right\rangle}
\newcommand{\cb}[1]{{\color{blue} #1}}
\newcommand{\norm}[1]{\left\lVert #1 \right\rVert}
\newcommand{\bae}{\begin{equation}\begin{aligned}}
\newcommand{\eae}{\end{aligned}\end{equation}}
\newcommand{\beq}{\begin{equation}}
\newcommand{\eeq}{\end{equation}}
\newcommand{\N}{\mathbb{N}}
\newcommand{\R}{\mathbb{R}}
\newcommand{\E}{\mathbb{E}}
\newcommand{\Pb}{\mathbb{P}}
\newcommand{\PbI}{\mathbb{P}^I}
\newcommand{\PbB}{\mathbb{P}^B}
\newcommand{\Lb}{\mathbb{L}}

\newcommand{\mfT}{{\mathfrak{T}}}
\newcommand{\mfO}{{\mathfrak{O}}}
\newcommand{\tT}{{t\in\mfT}}
\newcommand{\mcA}{{\mathcal{A}}}
\newcommand{\mcC}{{\mathcal{C}}}
\newcommand{\mcF}{{\mathcal{F}}}
\newcommand{\mcN}{{\mathcal{N}}}
\newcommand{\mcB}{{\mathcal{B}}}
\newcommand{\mcH}{{\mathcal{H}}}

\newcommand{\transB}{\mathfrak{t}}
\newcommand{\decayB}{\mathfrak{p}}
\newcommand{\instantB}{\mathfrak{h}}

\newcommand{\assign}{:=}
\newcommand{\nobracket}{}
\newcommand{\tmop}[1]{\ensuremath{\operatorname{#1}}}
\newcommand{\tmtextit}[1]{\text{{\itshape{#1}}}}
%

\newcommand{\F}{\mathcal{F}}
\newcommand{\Fb}{\mathbb{F}}
\newcommand{\Prob}{\mathbb{P}}
\newcommand{\X}{\mathbb{X}}
\newcommand{\Ss}{\mathcal{S}}

\newcommand{\Real}{\mathbb{R}}
\newcommand{\Y}{\mathbb{\mathcal{Y}}}
\newcommand{\tmL}{\mathbb{L}}
\newcommand{\Ll}{\mathcal{L}}
\newcommand{\Pp}{\mathcal{P}}
\newcommand{\Bb}{\mathcal{B}}
\newcommand{\Dd}{\mathcal{D}}
\newcommand{\Ee}{\mathcal{E}}
\newcommand{\Nn}{\mathcal{N}}
\newcommand{\Q}{\mathbb{Q}}
\newcommand{\ch}{\mathds{1}}

\newcommand{\wealth}{\text{M}}
\newcommand{\rwealth}{Q}
\newcommand{\rwealthD}{q}

\newcommand{\tempLP}{\mathfrak{a}}
\newcommand{\tempB}{\mathfrak{a}}
\newcommand{\tempI}{\mathfrak{b}}
\newcommand{\tempU}{\mathfrak{c}}
\newcommand{\permB}{\mathfrak{p}}
\newcommand{\termpB}{\phi}
\newcommand{\termpI}{\psi}
\newcommand{\runnB}{r^B}
\newcommand{\runnI}{r^I}

\renewcommand{\d}{\mathrm{d}}

\newcommand{\stateB}{\mathfrak{y}}
\newcommand{\stateI}{\mathfrak{x}}
\newcommand{\state}{\mathfrak{y}}
\newcommand{\statevn}{\mathfrak{y}}

\newcommand{\nuI}{\eta}
\newcommand{\nuB}{\nu}
\newcommand{\nuU}{\xi}

\newcommand{\nuBstar}{\nu^*}
\newcommand{\nuIstar}{\eta^*}

\newcommand{\varI}{\mathbb{V}^I}
\newcommand{\varB}[1][B]{\mathbb{V}^{#1}}

\newcommand{\betaI}{\beta_0^I}
\newcommand{\betaaI}{\beta_1^I}
\newcommand{\rhoI}{\rho_0^I}
\newcommand{\rhooI}{\rho_1^I}

\newcommand{\betaB}{\beta_0^B}
\newcommand{\betaaB}{\beta_1^B}
\newcommand{\rhoB}{\rho_0^B}
\newcommand{\rhooB}{\rho_1^B}

\newcommand{\runcostI}{\rho_0^I + \rho_1^I\, \varI}
\newcommand{\runcostB}{\rho_0^B + \rho_1^B\, \varB}

\newcommand{\termcostI}{\beta_0^I + \beta_1^I\, \varI}
\newcommand{\termcostB}{\beta_0^B + \beta_1^B\, \varB}

\newcommand{\constrB}{\tempB - f_2^2\,\tempI}

\newcommand{\MI}[1][I]{M^{#1}}
\newcommand{\MB}[1][B]{M^{#1}}
\newcommand{\MBa}[1][B]{\tilde{N}^{#1}}
\newcommand{\MBb}[1][B]{\tilde{M}^{#1}}
\newcommand{\MZ}[1][Z]{M^{#1}}

\newcommand{\QI}[1][I]{Q^{#1}}
\newcommand{\QB}[1][B]{Q^{#1}}
\newcommand{\QU}[1][U]{Q^{#1}}
\newcommand{\XI}[1][I]{X^{#1}}
\newcommand{\XB}[1][B]{X^{#1}}
\newcommand{\XU}[1][U]{X^{#1}}

\newcommand{\QIstar}[1][I]{Q^{#1,*}}
\newcommand{\QBstar}[1][B]{Q^{#1, *}}
\newcommand{\XIstar}[1][I]{X^{#1, *}}
\newcommand{\XBstar}[1][B]{X^{#1, *}}

\newcommand{\mcFB}{\mcF^B}
\newcommand{\mcFI}{\mcF^I}
\newcommand{\vfB}{J^{B*}}
\newcommand{\vfI}{J^{I*}}
\newcommand{\pcB}{J^B}
\newcommand{\pcI}{J^I}
\newcommand{\mcAB}{\mcA^B}
\newcommand{\mcAI}{\mcA^I}

\newcommand{\g}{g}
\newcommand{\gI}{g^{I}}
\newcommand{\gB}{g^{B}}
\newcommand{\gZ}{g^{Z}}
\newcommand{\gY}{g^{Y}}
\newcommand{\h}{h}
\newcommand{\hI}{h^{I}}
\newcommand{\hB}{h^{B}}
\newcommand{\hZ}{h^{Z}}
\newcommand{\hY}{h^{Y}}
\newcommand{\f}{f}
\newcommand{\fI}{f^{I}}
\newcommand{\fB}{f^{B}}
\newcommand{\fZ}{f^{Z}}
\newcommand{\fY}{f^{Y}}

\newcommand{\Pn}{P^{\hat{\nu}}}
\newcommand{\Pa}{P^{\hat{\alpha}}}

\newcommand{\sigmaU}{\sigma^U}
\newcommand{\sigmaM}{\sigma^M}
\newcommand{\sigmaS}{\sigma^S}

\newcommand{\vp}{\varphi}

\newcommand{\rew}{\mathfrak{R}}
\newcommand{\fee}{\mathfrak{r}}
\newcommand{\feeV}{\mathfrak{r}}
\newcommand{\cost}{\mathfrak{c}}

\newcommand{\Na}{N^-}
\newcommand{\Nb}{N^+}
\newcommand{\Nab}{N^\pm}
\newcommand{\Ni}{N^i}
\newcommand{\hatNa}{\hat{N}^-}
\newcommand{\hatNb}{\hat{N}^+}
\newcommand{\hatNab}{\hat{N}^\pm}
\newcommand{\hatNi}{\hat{N}^i}
\newcommand{\tildeNa}{\tilde{N}^-}
\newcommand{\tildeNb}{\tilde{N}^+}
\newcommand{\tildeNab}{\tilde{N}^\pm}
\newcommand{\tildeNi}{\tilde{N}^i}

\newcommand{\lambdaa}{\lambda^-}
\newcommand{\lambdab}{\lambda^+}
\newcommand{\lambdaab}{\lambda^\pm}
\newcommand{\lambdai}{\lambda^i}
\newcommand{\barlambdaa}{\bar{\lambda}^-}
\newcommand{\barlambdab}{\bar{\lambda}^+}
\newcommand{\barlambdaab}{\bar{\lambda}^\pm}
\newcommand{\barlambdai}{\bar{\lambda}^i}

\newcommand{\Aa}{A^-}
\newcommand{\Ab}{A^+}
\newcommand{\Aab}{A^\pm}
\newcommand{\Ai}{A^i}
\newcommand{\AW}{A^W}
\newcommand{\AB}{A^B}

\newcommand{\Deltarwealtha}{\Delta^-} 
\newcommand{\Deltarwealthb}{\Delta^+}
\newcommand{\Deltarwealthab}{\Delta^\pm}
\newcommand{\Deltarwealthi}{\Delta^i}

\newcommand{\deltaa}{\delta^-}
\newcommand{\deltab}{\delta^+}
\newcommand{\deltai}{\delta^i}
\newcommand{\deltaab}{\delta^\pm}

\newcommand{\Za}{Z^-}
\newcommand{\Zb}{Z^+}
\newcommand{\Zi}{Z^i}
\newcommand{\Zab}{Z^\pm}

\newcommand{\va}{v^-}
\newcommand{\vb}{v^+}
\newcommand{\vi}{v^i}
\newcommand{\vab}{v^\pm}

\newcommand{\Vvn}{V}  
\newcommand{\Vvna}{V^-}
\newcommand{\Vvnb}{V^+}
\newcommand{\Vvni}{V^i}

\setlength\parindent{0pt}

\begin{abstract}
We find the equilibrium contract that  an automated market maker (AMM) offers to their strategic liquidity providers (LPs) in order to maximize the order flow that gets processed by the venue. Our model is formulated as a leader-follower stochastic game, where the venue is the leader and a representative LP is the follower. We derive approximate closed-form equilibrium solutions to the stochastic game and analyze the reward structure. 
Our findings suggest that under the equilibrium contract, LPs have incentives to add liquidity to the pool only when higher liquidity on average attracts more noise trading.
The equilibrium contract depends on the external price, the pool reference price, and the pool reserves. Our framework offers  insights into AMM design for maximizing order flow while ensuring LP profitability. 
\vspace{0.5cm}

Keywords: automated market makers; design; fees; leader-follower games; Stackelberg equilibrium; optimal contract; noise trading; arbitrageurs; principal agent.
\end{abstract}

\maketitle

\section{Introduction}

Automated market makers (AMMs) are one of the latest developments in financial technology. An AMM is a venue with predefined trading rules where liquidity takers (LTs) and liquidity providers (LPs) trade. These venues became popular after Uniswap V2 was released in May 2020. As of 2025, the fourth version of the protocol has already been announced; see \cite{adams2021uniswap,adams2023uniswap}.\\

The academic literature on AMMs is still young but already quite rich. The first papers studying the mechanisms of AMMs are in \cite{chiu2019blockchain,angeris2020improved, angeris2021analysis, capponi2021adoption, lipton2021blockchain}, the arbitrage between AMMs and external centralized venues is studied in \cite{cartea2023decentralised,cartea2023execution}, and price formation in AMMs is studied in \cite{capponi2024price}. It is now well-known that, in the absence of fees, an agent providing liquidity in a standard constant function market (CFM) ---where prices are only determined as a function of the reserves--- are exposed to a concave payoff that is inferior to that of holding coins outside of the
pool, a phenomenon called impermanent loss (IL). In fact, empirical studies such as \cite{loesch2021impermanent} show that currently, even in the presence of fees, LPs are  on average incurring a loss. Recently, \cite{fukasawa2023model} study the hedging of this impermanent loss and  \cite{milionis2022automated} introduce the notion of loss-versus-rebalancing  
---see also \cite{milionis2023automated}. Lastly,  \cite{cartea2024decentralized}  consider a more general notion of predictable loss and derive optimal liquidity provision strategies.\\ 

In order to tackle the IL issue, another active area of research is that of AMM design. \cite{bergault2024automated} proposes an alternative mechanism in which the pricing function uses external information about the current  market exchange rates, allowing the AMM to update its bid and offer prices not only after a trade but also after a price oracle has fired an update.\footnote{ Models with more complex price dynamics have been considered in \cite{bergault2024price}; recently, \cite{bergault2024pegged} study the dynamics of stablecoins.  } \cite{cartea2024strategic} introduce a new family of AMMs that they call decentralized liquidity pools (DLPs). In their paper they optimize a range of performance criteria involving the profitability of LPs. The optimal policy of a given performance criterion gives rise to a  new DLP design. Depending on the model, the resulting DLP could monitor external prices or even filter arrivals in search of fundamental values.\footnote{\cite{cartea2023automated} is an earlier version of this paper where arithmetic liquidity pools were introduced. } Lastly, \cite{adams2024amm} introduce the so-called `auction-managed AMM', that reduces losses to informed order flow, and maximizes revenue from uninformed order flow. With the exception of \cite{adams2024amm}, these designs do not aim to maximize the trading activity in the venue. \\

In this paper we find the optimal contract  that a CFM should offer to their strategic LPs in order to attract the maximum amount of order flow to the CFM. We formulate the design as the solution to a leader-follower stochastic game in which the venue is the leader and the representative LP is the follower.\footnote{Similar to \cite{FukasawaLiquidity2025}, we study an aggregate of LPs as a representative provider. }  Closest to our paper is \cite{euch2021optimal} who work in
a centralized limit order book (LOB) and find the optimal make-take fees in such a setting;\footnote{In their paper, the regulator is the leader and the LPs are the followers.} see also \cite{baldacci2021optimal} for a multi-player version of the model, and \cite{baldacci2024optimal} for a different approach to make-take fees. In our paper, the entity designing the contract is the CFM. We find approximate closed-form solutions to the stochastic game and obtain insights into the reward structure that attracts, in equilibrium,  the highest possible order flow to the venue. \\

To the best of our knowledge, this is the first article to study the design of AMMs as a leader-follower stochastic game. In our framework, LPs aim to maximize exponential utility of wealth and the venue wishes to maximize trading activity. As customary in leader-follower games, for a given reward structure, we compute the optimal liquidity providing strategy from the representative LP (follower), and using this, we compute the optimal reward structure to be provided by the venue (leader). The new design comes in the form of a ``contract'' that encodes the reward structure obtained in the second optimization. 
We find that if higher liquidity (in the pool) does not attract noise trading, then LPs do not have incentives to add liquidity to the pool.
On the other hand, if higher liquidity attracts noise trading, then, the optimal contract incentivises LPs to add liquidity to the pool in order to benefit from the noise order flow. In such a scenario, the transaction costs in the external market scale the amount that the representative LP adds to the pool.
In equilibrium, the optimal contract depends on the external price, the pool’s reference price, and the reserves in the pool. Our equilibrium solutions provide insights into the design of these emerging venues. 
\\

The remainder of the paper proceeds as follows.
Section \ref{sec:model} introduces the probabilistic framework of the model. In particular, Section \ref{sec:representative-LP-problem} introduces the problem of the representative LP and Section \ref{sec:venue-problem} introduces the problem of the venue.
The solutions to these two problems are derived in Sections \ref{sec:solution-LP-problem} and \ref{sec:solution-venue-problem}, respectively. Finally, Section \ref{sec:numerical-results} presents the numerical results and we collect proofs in the appendix.

\section{The model}\label{sec:model}

\subsection{Probabilistic framework}

Let $T>0$ be a trading horizon (e.g., one day), $\Omega_c$ the set of continuous functions from $\mfT = [0,T]$ into $\mathbb R$, $\Omega_d$ the set of piecewise constant càdlàg functions from $\mfT$ into $\mathbb N$, and $\Omega = \big(\Omega_c\big)^2 \times \big( \Omega_d \big)^2$ with the corresponding Borel algebra $\mathcal F$. The observable state is the canonical process $ (\chi_t)_{t \in \mfT} = \big(W_t, B_t, \hatNb_t, \hatNa_t \big)_{t \in \mfT}$ of the measurable space $(\Omega, \mathcal F)$, with
\[
 W_t(\omega) := w(t), \; B_t(\omega) := b(t), \; \hatNb_t(\omega) :=  \hat n^+(t), \; \hatNa_t(\omega) :=  \hat n^-(t), \; \text{for all } t \in \mfT,
\]
where $\omega := (w, b, \hat n^+, \hat n^-) \in \Omega.$\\

We introduce a probability measure $\Pb$ such that $W$ and $B$ are Brownian motions and $\hatNa$, $\hatNb$ are Poisson processes with intensity $a_0>0$. In this probability measure, $W$, $B$, $\hatNa$ and $\hatNb$ are independent.\footnote{The independence assumption between $W$ and $B$ can be relaxed.  }\\

We study trading in a pair of assets $X$ and $Y$ (e.g., USDC and ETH) that takes place in a constant function market (CFM). In addition to the CFM, there is an external limit order book (LOB) venue where trading in $X$ and $Y$ occurs. We let $S=(S_t)_{t\in \mfT}$ be the external midprice of asset $Y$ in terms of asset $X$. Within the CFM, we let $X_t$ and $Y_t$ be the quantities of assets $X$ and $Y$  in the pool at time $t\in\mfT$. The CFM has trading function $f(x,y)=x\,y$; this is the most popular choice of $f$ and is known as the constant product market (CPM).  We let the function $\varphi_{c_t}(y)$ be the level function of $f$ such that $\varphi_{c_t}(Y_t) = X_t$ with 
$$c_t = f(X_{t\shortminus}, Y_{t\shortminus}) = X_{t\shortminus}\,Y_{t\shortminus}\,.$$ 
More precisely, we have that
\begin{align}
    \varphi_{c}(y) &= \frac{c}{y}\,,\qquad 
    (\varphi_{c})'(y) = -\frac{c}{y^2}\,.
\end{align}

The external midprice satisfies
\begin{equation}
    \d S_t = \sigma \,\d W_t\,,\qquad S_0\in\R^+\,,
\end{equation}
where $\sigma>0$ is the volatility parameter. We think of the external price as that of a venue where price formation takes place (e.g., Binance, Kraken). \\

In the CFM, liquidity takers arrive according to the counting processes $\Na$ (LT buys) and $\Nb$ (LT sells) that model the number of trades of size $\xi>0$ through time. We denote $Z=(Z_t)_{t\in \mfT}$ as the marginal price of $Y$ in terms of $X$ in the venue. We follow the characterization of CFMs in \cite{cartea2024strategic} that describes the mechanics of the reserves $(X_t)_{t\in \mfT}$, $(Y_t)_{t\in \mfT}$, and the instantaneous rate $(Z_t)_{t\in \mfT}$,  according to the arrival of orders $\Na$ and $\Nb$, and extend it to include stochastic pool depths (given by $c_t$) due to the activity of LPs. In our model we write 
\begin{align*}
    \d X_t &= \underbrace{ Z_t\, \eta \, \d B_t }_{\text{activity from LPs}} \underbrace{ - \left[ \vp_{c_t}(Y_{t\shortminus}) - \vp_{c_t}(Y_{t\shortminus} + \xi) \right]\, \d \Nb_t + \left[ \varphi_{c_t}(Y_{t\shortminus} - \xi) - \vp_{c_t}(Y_{t\shortminus}) \right]\, \d \Na_t }_{\text{activity from LTs}}\, ,\\
    \d Y_t &= \underbrace{\eta \, \d B_t }_{\text{activity from LPs}} \underbrace{+ \xi\, \d \Nb_t - \xi\, \d \Na_t}_{\text{activity from LTs}}\, ,\\
    \d Z_t &= \underbrace{\left[-(\vp_{c_t})'(Y_{t\shortminus}+\xi) +  (\vp_{c_t})'(Y_{t\shortminus}) \right]\, \d \Nb_t + \left[-(\vp_{c_t})'(Y_{t\shortminus}-\xi) +  (\vp_{c_t})'(Y_{t\shortminus}) \right]\, \d \Na_t}_{\text{activity from LTs}}\,,\\
    \d \Na_t &= \ch_{Y_{t\shortminus}>\xi}\, \d \hatNa_t \,,\\
    \d \Nb_t &= \d \hatNb_t\,,
\end{align*}
with $\eta>0$ representing the volatility in liquidity provision, $\Nb_0=\Na_0 =0$, $X_0 \in \R^+$, $Y_0\in(\xi,\infty)$, and $Z_0=X_0/Y_0$. 
The constraint $\ch_{\{Y_{t\shortminus} > \xi\}}$ is to guarantee that there is a strong solution to the above dynamics; in  practical applications this indicator is always equal to one. The interpretation of the above equations is as follows, given a liquidity taking buy order (resp.~sell order), the quantity $Y_t$ goes down (resp.~up) by $\xi$, the quantity in $X_t$ increases (resp.~decreases) to its new level $\vp_{c_t}(Y_{t\shortminus} - \xi)$ (resp.~$\vp_{c_t}(Y_{t\shortminus} + \xi)$), and the instantaneous rate moves to its new level $\vp_{c_t}'(Y_{t\shortminus}-\xi)$ (resp.~$\vp_{c_t}'(Y_{t\shortminus}+\xi)$). \\

Finally, we endow the space $(\Omega, \mathcal F )$ with the $\mathbb P-$augmented canonical filtration $\mathbb F := \left(\mathcal F_t  \right)_{t\in \mfT} $ generated by $ (\chi_t)_{t \in \mfT}$.\\

\subsection{The weak formulation}

We introduce $(L_t)_{t\in\mfT}$ as the Doléans-Dade exponential of
$$\int_0^t\, (\lambdaa_u - a_0)\, \d \tildeNa_u + \int_0^t\, (\lambdab_u - a_0)\, \d \tildeNb_u\,, $$
where
\begin{align}
    \lambdaab_t = \barlambdaab(Z_{t\shortminus}, Y_{t\shortminus}, S_t)\,,\qquad 
      \d \tilde{N}^\pm_t = \d \hat{N}^\pm_t - \lambda^\pm_t\, \d t\,,
\end{align}
with
\begin{align}\label{eq:intensities}
    \barlambdaab(Z, Y, S) &= \max\left\{a_0, a_1 + a_2\,Y \pm a_3\,(Z - S)\right\}\,,\qquad a_0>0, \,\,\, a_1, a_2,a_3\geq 0\,.
\end{align}
The linear model for the intensities (inspired by \cite{stoikov2009option}) tries to capture three stylized facts: (i) $a_1$ is the baseline intensity of order arrivals of liquidity motivated trades, (ii) $a_2$ models the relationship between arrivals and depth (locally, all else being the same, higher depth may imply more arrivals),\footnote{The rationale for this is that more depth implies less slippage for any liquidity taking trade executed in the pool. Thus, all else being the same, it is more desirable to trade in a pool with more depth. For mathematical tractability, we employ the quantity $a_2\,Y_t$ as a proxy for the depth. This is a reasonable approximation over a short time horizon, during which large moves in $Y_t$ are mainly driven by LPs activity.} and (iii) $a_3$ captures the impact of trading by the arbitrageurs that align the quotes between the external LOB venue and the CFM.  The effect of $S-Z$ in the intensities is asymmetric, this is because when $S>Z$ (resp.~$S<Z$), all else being equal,  we expect a higher (resp.~lower) buying intensity and a lower (resp.~higher) selling intensity given that the instantaneous rate in the pool is underpriced (resp.~overpriced). The minimum intensity $a_0>0$ is a technical condition to keep the intensity away from zero. This is non-restrictive and in practice we expect both intensities to be bounded away from zero. In Subsection \ref{sec:approx} we carry out experiments with market data to support these claims.\\

Let $\nu_\infty >0$ be a bound for the speed at which the representative LP adds or subtracts liquidity from the pool. 
The set of admissible strategies of the representative LP is
\begin{equation}
    \mcA := \left\{ \nu=(\nu_t)_{t\in\mfT}:\,  \nu \text{ is $\Fb$-progressively measurable and } |\nu_t| \le \nu_\infty \; \mathbb P \otimes \d t \text{ a.e.} \right\}\,,
\end{equation}
and for each $\nu \in \mcA$ we introduce the Radon-Nikodym derivative $(K^\nu_t)_t$ given by
\[
K^\nu_t := \exp \Bigg(  \int_0^t \frac{\nu_s}{\eta} \mathrm dB_s - \frac 12 \int_0^t  \left(\frac{\nu_s}{\eta}\right)^2 \mathrm ds \Bigg).
\]
Given that $\nu\in\mcA$, the process $(K^\nu_t)_{t\in\mfT}$ is a martingale (by Novikov's condition using that $\nu$ is bounded) and we define the probability measure $\mathbb P^\nu$ given by
\[
\frac{\mathrm d\mathbb P^\nu}{\mathrm d\mathbb P} := K^\nu_T\, L_T\,.
\]
Given that all the probability measures $(\mathbb P^\nu)_{\nu\in\mcA}$ are equivalent, we use the notation \textit{a.s.}~for almost surely without ambiguity.\\

In summary, under $\mathbb P^{\nu},$ the processes $\hatNb$ and $\hatNa$ have respective intensities $\left(\lambdab_t \right)_{t \in \mfT}$ and $\left(\lambdaa_t \right)_{t \in \mfT}$, 
the process
$$B^\nu_t = B_t - \int_0^t \frac{\nu_s}{\eta}\d s$$
is a standard Brownian motion, and $W$ is a standard Brownian motion with $\langle W, B^\nu \rangle = 0$. Thus, under $\mathbb P^{\nu}$ we write
\begin{align*}
    \d S_t &= \sigma\,\d W_t\,,\\
    \d X_t &=  \underbrace{ Z_t\,\nu_t \d t }_{\text{representative LP}} +  \underbrace{Z_t\, \eta \, \d B^\nu_t}_{\text{other LPs}} \quad  \underbrace{ -\left[ \varphi_{c_t}(Y_{t\shortminus}) - \varphi_{c_t}(Y_{t\shortminus} + \xi) \right]\, \d \Nb_t + \left[ \varphi_{c_t}(Y_{t\shortminus} - \xi) - \varphi_{c_t}(Y_{t\shortminus}) \right]\, \d \Na_t}_{\text{activity from LTs}}\, ,\\
    \d Y_t &= \underbrace{\nu_t \d t}_{\text{representative LP}}  + \underbrace{\eta \, \d B^\nu_t}_{\text{other LPs}} \quad  \underbrace{+ \xi\, \d \Nb_t - \xi\, \d \Na_t}_{\text{activity from LTs}}\, ,\\
    \d Z_t &= \underbrace{\left[-(\vp_{c_t})'(Y_{t\shortminus}+\xi) +  (\vp_{c_t})'(Y_{t\shortminus}) \right]\, \d \Nb_t + \left[-(\vp_{c_t})'(Y_{t\shortminus}-\xi) +  (\vp_{c_t})'(Y_{t\shortminus}) \right]\, \d \Na_t}_\text{activity from LTs}\, .
\end{align*}

\subsection{The problem of the follower: the representative LP}\label{sec:representative-LP-problem}

The representative LP finances her stake in the pool by trading in the external market. 
That is, to add (resp.~subtract) $\nu_t\,\d t$ units of $Y$ in (resp.~from) the pool at time $t$, the representative LP purchases (resp.~sells) $\nu_t\,\d t$ units in the external market at price $(S_t + \tempLP\,\nu_t)\,\nu_t\,\d t$ where $\tempLP$ is the temporary price impact in the external venue. 
Thus, for $0\le s \le t \le T$, the performance criterion of the representative LP is defined as
\begin{equation}
    \begin{split}
        \rwealth_{s,t}^\nu :&= \underbrace{ \left(Z_t\, Y_t + Y_t\, S_t\right)  - \left(Z_s\, Y_s + Y_s\, S_s\right)}_{\left(X_t + Y_t\,S_t\right) - \left(X_s + Y_s \, S_s\right)} - \int_s^t Z_{u\shortminus}\, \nu_{u\shortminus}\, \d u - \int_s^t (S_u + \tempLP\, \nu_{u\shortminus})\, \nu_{u\shortminus}\, \d u\;.
    \end{split}
\end{equation} 
The first two terms account for the mark-to-market of the stake in the pool, with the quantity $Y_t$ valued at the external price  $S_t$. The term $- \int_s^t Z_{u\shortminus}\, \nu_{u\shortminus}\, \d u$ is the financing of the units added or subtracted in asset $X$.\footnote{The calculations below can be carried out if one also includes a running penalty of the form $-\int_s^t p(u,Z_u,Y_u,S_u)\d u$. For simplicity of formulae, we do not include the running penalty.} Lastly, the term $- \int_s^t (S_u + \tempLP\, \nu_{u\shortminus})\, \nu_{u\shortminus}\, \d u$ is the financing of the units added or subtracted in asset $Y$ ---the representative  LP pays the costs of trading (e.g., walking the LOB and spread) through the quadratic term $\tempLP\,\nu^2_t$.\\

The above can be written as
\begin{equation}
    \begin{split}
        \rwealth_{s,t}^\nu &=   - \int_s^t \tempLP\, \nu_{u\shortminus}^2 \, \d u + \eta\, \int_s^t (S_{u\shortminus} + Z_{u\shortminus})\, \d B_u^{\nu} \\
        &\qquad\qquad + \sigma\, \int_s^t Y_{u\shortminus}\, \d W_u 
        + \int_s^t \Deltarwealtha_u\, \d \Na_u
        + \int_s^t \Deltarwealthb_u\, \d \Nb_u\,,
    \end{split}
\end{equation}
for $s<t$  where
\begin{equation}
    \Deltarwealthab_u := \pm\, \xi\, \left( S_{u} - \frac{Z_{u\shortminus}\, Y_{u\shortminus}}{Y_{u\shortminus} \pm\, \xi} \right).
\end{equation}
Thus, the control problem that the representative LP wishes to solve is given by
\begin{equation}
    V_t(\rew) = \sup_{\nu \in \mcA} \E_t^\nu \left[ -\exp \left\{ - \gamma\, (\rew + \rwealth^\nu_{t,T} )\right\}  \right]\,,
\end{equation}
where $\rew\in L^2(\Omega, \mcF_T, \Pb^\nu)$ is the reward offered by the venue to the representative LP and $\gamma>0$ is the risk aversion parameter of the representative LP.\footnote{$\E_t^\nu$ represents the conditional expectation with respect to the sigma-algebra $\mcF_t$ under the probability measure $\Pb^\nu$.} To ensure that the above expectation is well-defined, we need the following technical condition.
\begin{cond}\label{assumptionRgamma}
There exists $\gamma'>\gamma$ such that 
   \begin{align}
    \underset{\nu \in \mathcal A}{\sup}\; \mathbb E^\nu \left[ \exp \{ - \gamma' \rew\} \right] < +\infty \,.
\end{align} 
\end{cond}

In Section \ref{sec:solution-LP-problem}, for each reward $\rew$, we  prove that there exists a unique optimal response $\nu^*(\rew)$.

\subsection{The problem of the leader: the venue}\label{sec:venue-problem}

The venue provides a contract $\rew$ to the representative LP and wishes to attract as much order flow as possible to the venue (i.e., $\Na_T + \Nb_T$). Thus,  the venue wishes to maximize the following performance criterion
\begin{equation}
    \E^{\nu^*(\rew)} \left[-\exp\left\{-\zeta\, \left(\feeV\,(\Na_T + \Nb_T) - \rew\right)\right\} \right]\,,
\end{equation}
where $\zeta>0$ is the venue's risk aversion parameter and $\feeV>0$ is a constant transaction fee paid by LTs to the venue. For the above expectation to be well-defined, we need the following technical condition.
\begin{cond}\label{assumptionReta}
There exists $\zeta'>\zeta$ such that 
   \begin{align}
    \underset{\nu \in \mathcal A}{\sup}\; \mathbb E^\nu \left[ \exp \{ \zeta'\, \rew\} \right] < +\infty\,. 
\end{align} 
\end{cond}

The set of admissible contracts is given by
\begin{equation}
    \mcA^R = \left\{ \rew,\; \mathcal F_T-\text{measurable, such that } V_{0}(\rew) \ge R \text{ and Conditions } \ref{assumptionRgamma} \text{ and } \ref{assumptionReta} \text{ are satisfied}\right\}\,,
\end{equation}
where $R<0$ is the reservation level of the LP, i.e. the smallest acceptable utility level, in the sense that the LP refuses the contract if it doesn't allow them to reach this utility level; see Remark \ref{rem:R} below for a more detailed discussion.\\

Next, we characterize the solution to the leader-follower game between the venue and the representative LP. Section \ref{sec:solution-LP-problem} studies the optimal response of the follower $\nu^*(\rew)$ and Section \ref{sec:solution-venue-problem} finds the optimal contract $\rew^*$.

\section{Solving the problem of the LP}\label{sec:solution-LP-problem}

For any $(\nu, A) \in [-\nu_\infty, \nu_\infty] \times \mathbb R^4 $, with $A = \left(\AW, \AB, \Ab, \Aa \right)$, we define
 $$h(\nu, A) =  -\tempLP\, \nu^2 + \frac 1{\eta} \AB \nu\,.$$

It is easy to see that for any $A\in \R^4$, the maximizer of $h(\cdot, A)$ is reached at 
$$\bar{\nu}(A) = \left( \frac{\AB}{2\,\tempLP\, \eta} \vee -\nu_{\infty}\right)\wedge \nu_\infty\,.$$

We then define
$$H(A) = \underset{|\nu| \le \nu_{\infty} }{\sup} h(\nu, A) = h(\bar{\nu}(A), A)\,.$$

For any constant $P_0 \in \mathbb R$ and predictable process $(A_t)_{t\in\mfT} = \left(\AW_t, \AB_t, \Ab_t, \Aa_t \right)_{t\in\mfT}$ satisfying  that
$$\sup_{\nu \in \mcA} \E^{\nu} \left[\int_0^T \left(\AB_t\right)^2 + \left(\AW_t\right)^2\, \d t\right] < \infty\,,$$
\noindent and 
\begin{align}
     \int_0^T  &\Bigg|   \frac{\gamma}{2}\, \left[ \left(\AW_s + \sigma\, Y_{s\shortminus}\right)^2 + \left(\AB_s + \eta\, (S_{s\shortminus} + Z_{s\shortminus})\right)^2 \right] \\
     &\qquad\quad - \sum_{i\in\{+,-\}} \frac{\lambdai_s\,(1 - e^{-\gamma(\Ai_s + \Deltarwealthi_s)})}{\gamma}  - H(A_s)\Bigg|\, \d s < \infty \qquad \Pb\text{-a.s.}\,,
\end{align}
we introduce the process
\begin{align}
    P_t^{P_0, A} &= P_0 + \sum_{i\in\{+,-\}} \int_0^t  \Ai_s\, \d \hat \Ni_s + \int_0^t \AW_s\, \d W_s +\int_0^t \AB_s\, \d B_s \\
    &+ \int_0^t  \Bigg\{   \frac{\gamma}{2}\, \left[ \left(\AW_s + \sigma\, Y_{s\shortminus}\right)^2 + \left(\AB_s + \eta\, (S_{s\shortminus} + Z_{s\shortminus})\right)^2 \right] - \sum_{i\in\{+,-\}} \frac{\lambdai_s\,(1 - e^{-\gamma(\Ai_s + \Deltarwealthi_s)})}{\gamma} - H(A_s)\Bigg\}\, \d s\,,
\end{align}
and we denote by $\Lambda$ the set of all processes $(A_t)_{t\in\mfT}$ such that Conditions \ref{assumptionRgamma} and \ref{assumptionReta} are satisfied with reward $\rew=P_T^{0, A}$ and

\begin{equation}
    \sup_{\nu \in \mcA} \sup_{t\in \mfT} \E^\nu \left[ \exp\left\{ -\gamma'\, P_t^{0,A} \right\}  \right] < \infty \quad \text{ for some } \gamma'>\gamma\,.
\end{equation}

We denote by $\mathcal P$ the set
$$\mathcal P = \left\{P_T^{P_0, A} \text{ such that } P_0\in \mathbb R,\; A\in \Lambda,\; \text{and } V_{0}\left( P_T^{P_0, A} \right)\ge R\right\}.$$

It is clear that $\mathcal P \subset \mathcal A^R.$ The next two theorems are proved in \ref{proofcontractrep}.

\begin{theorem}\label{contractrep}
    For any $\rew \in \mathcal A^R$, there exists a unique $(P_0, A)\in \mathbb R \times \Lambda$ such that $\rew = P_T^{P_0, A}.$ In particular, $\mathcal A^R = \mathcal P.$
\end{theorem}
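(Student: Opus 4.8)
The plan is to prove a bijective correspondence between admissible contracts $\rew \in \mcA^R$ and pairs $(P_0, A) \in \R \times \Lambda$ via the dynamic programming / martingale-optimality machinery for exponential utility. The key observation is that the value process of the follower's problem, $V_t(\rew)$, should have an exponential-affine representation whose "driver" is exactly the process $P_t^{P_0,A}$: heuristically, $V_t(\rew) = -\exp\{-\gamma(P_t + \rwealth^\nu_{0,t})\}$ along the optimal control, where the choice of $P_0$ fixes the level and the choice of the integrands $A = (\AW, \AB, \Ab, \Aa)$ fixes the martingale part. The $\d s$-drift appearing in the definition of $P_t^{P_0,A}$ is precisely the Hamiltonian one obtains after an Itô expansion of $-\exp\{-\gamma(P_t + \rwealth^\nu_{0,t})\}$, so that this product becomes a supermartingale for every admissible $\nu$ and a martingale under the optimal response $\nu^*$. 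This is the standard "$Y = $ BSDE solution" picture from the Sannikov/Cvitani\'c–Possama\"i–Touzi approach to Stackelberg problems.

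\textbf{Step 1 (existence of the representation).} Given $\rew \in \mcA^R$, I would consider the quadratic-exponential BSDE on $[0,T]$ whose terminal condition is $\rew$ and whose generator is the one that makes $e^{-\gamma(P_t + \rwealth^\nu_{0,t})}$ into a submartingale-for-all-$\nu$, martingale-at-the-optimum; solving this BSDE produces a pair $(P_0, A)$ with $P_0 = P_0^{\rew}$ the initial value and $A$ the martingale integrands, and by construction $\rew = P_T^{P_0, A}$. The integrability requirements built into the definition of $\Lambda$ — namely $\sup_\nu \E^\nu[\int_0^T (\AB_t)^2 + (\AW_t)^2\,\d t] < \infty$, the $\Pb$-a.s.\ finiteness of the $\d s$-integral of the Hamiltonian, and the uniform exponential moment bound $\sup_\nu \sup_t \E^\nu[\exp\{-\gamma' P_t^{0,A}\}] < \infty$ — are exactly the conditions one needs to make the BSDE well-posed in the appropriate space, and they should follow from Conditions \ref{assumptionRgamma} and \ref{assumptionReta} on $\rew$ together with the boundedness of $\Deltarwealthab$ and the linear-growth structure of the intensities. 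Because $\rew \in \mcA^R$ one has $V_0(\rew) \ge R$, which is exactly the membership condition for $\mathcal P$, so $\rew \in \mathcal P$.

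\textbf{Step 2 (uniqueness).} For uniqueness, suppose $P_T^{P_0, A} = P_T^{P_0', A'}$ a.s.\ with both pairs in $\R \times \Lambda$. Subtracting the two representations, the bounded-variation (drift) parts and the martingale parts must match by uniqueness of the semimartingale / Doob–Meyer decomposition under $\Pb$ (using that $W$, $B$, $\hatNa$, $\hatNb$ are independent with the stated laws, so the martingale representation is unique); comparing the $\d W$, $\d B$, $\d\hat N^\pm$ integrands forces $\AW = \AW{}'$, $\AB = \AB{}'$, $\Aa = \Aa{}'$, $\Ab = \Ab{}'$ $\Pb\otimes\d t$-a.e., and then matching the remaining drift forces $P_0 = P_0'$. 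The only subtlety is that $H(A_s)$ and the jump-compensator terms depend nonlinearly on $A$, but since those terms are continuous functions of the already-identified integrands, no new information is needed. The inclusion $\mathcal P \subset \mcA^R$ is already noted in the text, so together with Step 1 this gives $\mcA^R = \mathcal P$.

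\textbf{The main obstacle} I anticipate is Step 1: establishing existence of a solution to the relevant quadratic BSDE with jumps in the right integrability class, and in particular verifying the uniform (over $\nu \in \mcA$) exponential-moment estimate $\sup_\nu\sup_t \E^\nu[\exp\{-\gamma' P_t^{0,A}\}] < \infty$ that defines $\Lambda$. The quadratic growth in $\AW, \AB$ (from the $(\AW + \sigma Y)^2$ and $(\AB + \eta(S+Z))^2$ terms) means one cannot invoke Lipschitz BSDE theory directly; instead I would use the exponential change of variables $P \mapsto e^{-\gamma P}$ to linearize — turning the quadratic BSDE into a linear one whose solution is an explicit conditional expectation of $e^{-\gamma\rew}$ times a Doléans-Dade exponential — and then read off $A$ from a martingale representation theorem applied to that conditional expectation. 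The a priori bounds then reduce to Conditions \ref{assumptionRgamma}–\ref{assumptionReta} combined with Hölder/Young inequalities exploiting the gap $\gamma' > \gamma$ (resp.\ $\zeta' > \zeta$), and the boundedness of $\nu$ on $\mcA$, which controls the Radon–Nikodym densities $K^\nu_T L_T$ uniformly.
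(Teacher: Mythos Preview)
Your plan is essentially the paper's route: the exponential change of variables you propose at the end is exactly what the paper does in the other direction, setting $P_t := -\tfrac{1}{\gamma}\log(-V_t)$ with $V_t$ the follower's value process, and then reading off $(P_0,A)$ from the semimartingale decomposition of $V$. Your uniqueness argument (Step~2) is also the standard one and is fine.

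There is, however, a real gap in Step~1. You assert that the linearised object is ``an explicit conditional expectation of $e^{-\gamma\rew}$ times a Dol\'eans--Dade exponential'', but this is not true: $V_t$ is an \emph{essential supremum} over $\nu\in\mcA$, not a single conditional expectation, so the linearised BSDE is not solved by a closed-form formula. Concretely, the paper shows via the DPP that $U^\nu_t=e^{-\gamma\rwealth^\nu_{0,t}}V_t$ is only a $\Pb^\nu$-\emph{super}martingale, whose Doob--Meyer decomposition a~priori carries nontrivial continuous and purely discontinuous non-decreasing parts $A^{\nu,c},A^{\nu,d}$. After the change of variables these leave a residual drift and residual jumps in $\d P_t$ beyond the Hamiltonian term $H(A_t)\,\d t$; nothing in your sketch kills them. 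The paper handles this with a separate lemma (if $\Pb(\Gamma)>0$ then $\inf_{\nu\in\mcA}\Pb^\nu(\Gamma)>0$, proved from the uniform bound $|\nu|\le\nu_\infty$) together with the DPP identity $\sup_\nu\E^\nu[U^\nu_T]=V_0$ to force the residual non-positive random variable to vanish a.s. You would need an analogous device.

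Two smaller points. First, $\Deltarwealthab_u=\pm\xi\big(S_u-Z_{u\shortminus}Y_{u\shortminus}/(Y_{u\shortminus}\pm\xi)\big)$ is \emph{not} bounded (all of $S,Z,Y$ are unbounded), so you cannot use boundedness of the jump sizes to get the BSDE estimates. Second, for the membership $A\in\Lambda$ you need the uniform-in-$\nu$, uniform-in-$t$ exponential moment; the paper obtains this by first upgrading Condition~\ref{assumptionRgamma} via H\"older to $\sup_\nu\E^\nu[|U^\nu_T|^{1+r}]<\infty$, then using that $|U^\nu|^{1+r}$ is a non-negative submartingale (since $U^\nu$ is a negative supermartingale) to pass to $\sup_t$, and finally undoing the $e^{-\gamma\rwealth^\nu_{0,t}}$ factor with another H\"older. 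Your sketch gestures at this but does not carry it out.
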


\begin{theorem}\label{theorem: contract rep value}
    
    For any reward structure $\rew = P_T^{P_0, A}$ as stated in Theorem \ref{contractrep}, the LP's value function is
    \begin{equation}
        V_0\big(P_T^{P_0,A}\big) = -\exp(-\gamma\, P_0)\,,
    \end{equation}
    with optimal liquidity providing speed
    \begin{equation}
        \nu^*_t = \bar{\nu}(A_t) = \left( \frac{\AB_t}{2\,\tempLP\, \eta} \vee -\nu_{\infty}\right)\wedge \nu_\infty\,.
    \end{equation}
\end{theorem}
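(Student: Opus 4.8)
The plan is to prove Theorem~\ref{theorem: contract rep value} by a martingale optimality principle (verification argument). Fix a reward $\rew = P_T^{P_0,A}$ with $(P_0,A)\in\R\times\Lambda$ as in the statement, and for each $\nu\in\mcA$ introduce the process
\begin{equation}
  U^\nu_t \assign -\exp\left\{-\gamma\big(\rwealth^\nu_{0,t} + P_t^{P_0,A}\big)\right\}, \qquad t\in\mfT .
\end{equation}
Since $\rwealth^\nu_{0,0}=0$ we have $U^\nu_0 = -e^{-\gamma P_0}$, and since $\rwealth^\nu_{0,T}=\rwealth^\nu_{0,t}+\rwealth^\nu_{t,T}$ and $P_T^{P_0,A}=\rew$ we have $U^\nu_T = -\exp\{-\gamma(\rew+\rwealth^\nu_{0,T})\}$, so that $\E^\nu[U^\nu_T]$ is exactly the quantity maximized in $V_0(\rew)$. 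The argument rests on two claims: (i) for every $\nu\in\mcA$ the process $U^\nu$ is a $\Pb^\nu$-supermartingale; and (ii) for $\nu^*_t\assign\bar{\nu}(A_t)$ the process $U^{\nu^*}$ is a true $\Pb^{\nu^*}$-martingale. Note $\nu^*\in\mcA$ because $A$ is predictable and $\bar{\nu}$ is $[-\nu_\infty,\nu_\infty]$-valued. Granting (i)--(ii), for every $\nu\in\mcA$ we get $\E^\nu[U^\nu_T]\le U^\nu_0 = -e^{-\gamma P_0}$, with equality at $\nu^*$; hence $V_0(\rew) = -e^{-\gamma P_0}$ and $\nu^*$ attains the supremum.

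To prove (i)--(ii) I would first compute the $\Pb^\nu$-dynamics of $G_t\assign\rwealth^\nu_{0,t}+P_t^{P_0,A}$. Combining the $\Pb^\nu$-dynamics of $\rwealth^\nu$ recalled in Section~\ref{sec:model} with the definition of $P^{P_0,A}$, and substituting $\d B_t = \d B^\nu_t + \frac{\nu_t}{\eta}\,\d t$ (so the extra drift $\frac{1}{\eta}\AB_t\nu_t$ combines with $-\tempLP\nu_t^2$ into $h(\nu_t,A_t)$), one obtains
\begin{equation}
  \begin{aligned}
    \d G_t ={}& (\AW_t + \sigma\,Y_{t\shortminus})\,\d W_t + (\AB_t + \eta\,(S_{t\shortminus}+Z_{t\shortminus}))\,\d B^\nu_t + \sum_{i\in\{+,-\}}(\Ai_t + \Deltarwealthi_t)\,\d\hatNi_t \\
    &+ \Big\{ h(\nu_t,A_t) + \frac{\gamma}{2}\big[(\AW_t+\sigma Y_{t\shortminus})^2 + (\AB_t+\eta(S_{t\shortminus}+Z_{t\shortminus}))^2\big] - \sum_{i\in\{+,-\}}\frac{\lambdai_t(1-e^{-\gamma(\Ai_t+\Deltarwealthi_t)})}{\gamma} - H(A_t)\Big\}\,\d t .
  \end{aligned}
\end{equation}
Applying Itô's formula to $U^\nu=-e^{-\gamma G}$ and compensating the jumps via $\d\hatNi_t=\d\tildeNi_t+\lambdai_t\,\d t$ under $\Pb^\nu$, the term $\frac{\gamma}{2}[\cdots]$ in the drift of $G$ cancels the second-order Itô correction, and the term $-\sum_i\lambdai_t(1-e^{-\gamma(\Ai_t+\Deltarwealthi_t)})/\gamma$ cancels the jump compensator, leaving
\begin{equation}
  \begin{aligned}
    \d U^\nu_t ={}& U^\nu_{t\shortminus}\Big[-\gamma(\AW_t+\sigma Y_{t\shortminus})\,\d W_t - \gamma(\AB_t+\eta(S_{t\shortminus}+Z_{t\shortminus}))\,\d B^\nu_t + \sum_{i\in\{+,-\}}(e^{-\gamma(\Ai_t+\Deltarwealthi_t)}-1)\,\d\tildeNi_t\Big] \\
    &+ \gamma\,U^\nu_{t\shortminus}\,(H(A_t)-h(\nu_t,A_t))\,\d t .
  \end{aligned}
\end{equation}
Since $U^\nu_{t\shortminus}<0$, $\gamma>0$, and $H(A_t)=\sup_{|\nu|\le\nu_\infty}h(\nu,A_t)\ge h(\nu_t,A_t)$, the finite-variation part is nonpositive; moreover $h(\cdot,A_t)$ is strictly concave with unique maximizer $\bar{\nu}(A_t)$, so this part vanishes identically precisely when $\nu_t=\bar{\nu}(A_t)$. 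Hence $U^\nu$ is a local supermartingale, and a local martingale when $\nu=\nu^*$.

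The step I expect to be the main obstacle is upgrading these local statements to genuine ones — a nonpositive local supermartingale need not be a supermartingale — which is exactly what the integrability built into $\Lambda$, $\mcA^R$ and Conditions~\ref{assumptionRgamma}--\ref{assumptionReta} is for. Writing $|U^\nu_t| = e^{-\gamma P_0}\,e^{-\gamma(\rwealth^\nu_{0,t}+P_t^{0,A})}$ and applying Hölder's inequality with conjugate exponents $p,q$ chosen so that $\gamma q\le\gamma'$, the bound $\sup_{\nu\in\mcA}\sup_{t\in\mfT}\E^\nu[e^{-\gamma' P_t^{0,A}}]<\infty$ from the definition of $\Lambda$ controls the $P^{0,A}$-factor, while exponential-moment estimates for the Gaussian and compensated-Poisson stochastic integrals in $\rwealth^\nu$ — where $-\tempLP\nu^2\le0$, the bound $|\nu|\le\nu_\infty$, and $\sup_{\nu\in\mcA}\E^\nu[\int_0^T((\AB_t)^2+(\AW_t)^2)\,\d t]<\infty$ are the useful ingredients — control the $\rwealth^\nu$-factor; this yields uniform integrability of $\{U^\nu_{T\wedge\tau_n}\}_n$ along a localizing sequence $(\tau_n)$, giving (i), and likewise (ii) for $\nu^*$. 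Finally, for uniqueness of the optimal response: if $\tilde\nu\in\mcA$ is also optimal, then integrating the drift above and using that the stochastic-integral part has zero $\Pb^{\tilde\nu}$-expectation gives $0 = \E^{\tilde\nu}[U^{\tilde\nu}_T]-U^{\tilde\nu}_0 = \gamma\,\E^{\tilde\nu}\big[\int_0^T U^{\tilde\nu}_{t\shortminus}(H(A_t)-h(\tilde\nu_t,A_t))\,\d t\big]$; the integrand being nonpositive, it must vanish $\Pb^{\tilde\nu}\otimes\d t$-a.e., and since $U^{\tilde\nu}_{t\shortminus}<0$ this forces $h(\tilde\nu_t,A_t)=H(A_t)$, i.e.\ $\tilde\nu_t=\bar{\nu}(A_t)$ by strict concavity.
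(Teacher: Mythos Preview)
Your proposal is correct and follows essentially the same verification/martingale-optimality argument as the paper: define $\bar P_t = P_t^{P_0,A}+\rwealth^\nu_{0,t}$, compute $\d(-e^{-\gamma\bar P_t})$, observe the drift is $\gamma U^\nu_{t\shortminus}(H(A_t)-h(\nu_t,A_t))\le 0$ with equality iff $\nu_t=\bar\nu(A_t)$, and conclude. You are in fact more careful than the paper about the passage from local to true (super)martingale --- the paper simply asserts this follows from $A\in\Lambda$ --- and you also spell out the uniqueness argument, which the paper states but does not detail.
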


\begin{rem}\label{rem:R}
    Theorem \ref{theorem: contract rep value} shows that $P_0$ is the certainty equivalent of an  LP that faces contract $P^{P_0,A}_T$ and trades optimally; it also shows that any admissible contract must satisfy $-\exp(-\gamma P_0) \ge R$. In papers such as \cite{euch2021optimal}, the reservation level $R$ is chosen to be the utility level of the agent in the absence of a contract. This is reasonable  in their framework since an agent behaving in an optimal way is already profitable even without a contract. This is not the case in our paper: it is well-known that, in the absence of fees, an agent providing liquidity for a CPM is exposed to a concave and negative payoff. Therefore, if the venue wants to attract LPs, it should redistribute some of the fees. More precisely, since the LP can simply choose not to participate in the pool, if the venue wants to retain them it should at least offer a contract that will allow the LP to have a non-negative certainty equivalent, i.e., $P_0$ should be non-negative. It is therefore reasonable to consider $-1 \le R <0$. Competition between venues in order to attract LPs should tend to increase this value.  In the numerical examples below, we choose $R$ so that the representative LP and the venue share the profits collected from fees.
\end{rem}

\section{Solving the problem of the venue}\label{sec:solution-venue-problem}
Here, we formulate the control problem of the venue. First, we discuss the risk-neutral case in Subsection \ref{subsec: risk neutral} and then the exponential utility case in Subsection \ref{subsec: exponential utility}.

\subsection{Risk-neutral venue}\label{subsec: risk neutral}

We first consider the case where $\zeta\downarrow 0$, in which case the venue's optimization problem amounts to
\begin{equation}
    \sup_{\rew \in \mcA^R } \E^{\nu^*(\rew)} \left[\feeV\,(\Na_T + \Nb_T) - \rew \right]\,.
\end{equation}

In view of Theorem \ref{theorem: contract rep value}, this is equivalent to 
\begin{equation}
    \sup_{P_0\ge \hat P_0}\sup_{A \in \Lambda}  \E^{\nu^*(P_T^{ P_0, A})} \left[\feeV\,(\Na_T + \Nb_T) - P_T^{\hat P_0, A} \right]\,,
\end{equation}
where $\hat P_0 = -1/\gamma \log(-R)$. For any $A \in \Lambda$, the supremum over $P_0$ of the objective function is reached at $P_0 = \hat P_0$. The problem of the venue is therefore equivalent to
\begin{align}
    \sup_{A \in \Lambda} \E^{\nu^*(P_T^{\hat P_0, A})} \Bigg[&\int_0^T  \Bigg\{ - \frac{\gamma}{2}\, \left[ \left(\AW_s + \sigma\, Y_{s\shortminus}\right)^2 + \left(\AB_s + \eta\, (S_{s\shortminus} + Z_{s\shortminus})\right)^2 \right] + \sum_{i\in\{+,-\}} \frac{\lambdai_s\,(1 - e^{-\gamma(\Ai_s + \Deltarwealthi_s)})}{\gamma} \\
    &\qquad \qquad  + H(A_s) - \frac{\AB_s\, \nu^*_s}{\eta} \Bigg\}\, \d s +\sum_{i\in\{+,-\}} \int_0^T  (\feeV - \Ai_s)\, \d \Ni_s - \int_0^T  \Aa_s\,\ch_{Y_{s\shortminus}\leq \xi}\, \d \hat{N}^-_s\Bigg]\,.
\end{align}
 We define
\begin{align}
    v(t, z, y, s) = \sup_{A \in \Lambda} &\E_{t,z,y,s}^{\nu^*(P_T^{\hat P_0, A})} \Bigg[\int_t^T  \Bigg\{   - \frac{\gamma}{2}\, \left[ \left(\AW_s + \sigma\, Y_{s\shortminus}\right)^2 + \left(\AB_s + \eta\, (S_{s\shortminus} + Z_{s\shortminus})\right)^2 \right]\\
    & \hspace{3cm} + \sum_{i\in\{+,-\}} \frac{\lambdai_s\,(1 - e^{-\gamma(\Ai_s + \Deltarwealthi_s)})}{\gamma} + H(A_s)- \frac{\AB_s\, \nu^*_s}{\eta}\Bigg\}\, \d s \\
    & \hspace{3cm}\qquad  +\sum_{i\in\{+,-\}} \int_t^T  (\feeV - \Ai_s)\, \d \Ni_s - \int_0^T  \Aa_s\,\ch_{Y_{s\shortminus}\leq \xi}\, \d \hat{N}^-_s \Bigg]\,.
\end{align}

We derive the Hamilton–Jacobi–Bellman (HJB) equation and substitute $H(A) = h(\bar{\nu}(A), A)$ to obtain
\begin{align}
    0 &= \barlambdaa\, \left[ \feeV + v(t, \Za, Y - \xi, S) - v(t, Z, Y, S) \right]\, \ch_{Y > \xi} + \barlambdab\, \left[ \feeV + v(t, \Zb, Y + \xi, S) - v(t, Z, Y, S) \right]\\
    &\quad  + \frac{1}{4} \Bigg[ -2\, \gamma\,  \eta^2\, S^2 - 4\, \gamma\,  \eta^2\, S\, Z - 2\, \gamma\,  \eta^2\, Z^2 - 2\, \gamma\,  \sigma^2\, Y^2 + 2\, \sigma^2\, \partial_{SS}v + 2\, \eta^2\, \partial_{YY}v + 4\, \partial_t v\Bigg] \\
    &\quad + \sup_{A} \Bigg\{ \bigg[-\frac{\gamma}{2}\, (\AW)^2 - Y\, \gamma\,  \sigma\, \AW\bigg] \\
    &\hspace{2cm} + \Bigg[- \frac{\gamma}{2}\, (\AB)^2 + h(\bar{\nu}(A),A) - \frac{\AB\, \bar{\nu}(A)}{\eta} -(S+Z)\,\gamma\,\eta\, \AB + \partial_Y v\, \bar{\nu}(A)\Bigg]\\
    &\hspace{2cm}  + \barlambdaa(Z, Y, S)\,  \left[-\Aa + \frac{1-e^{-\gamma\, \left(\Aa + \Deltarwealtha\right)}}{\gamma}\right]\, \ch_{Y > \xi} +  \barlambdaa\,  \left[-\Aa + \frac{1-e^{-\gamma\, \Aa}}{\gamma}\right]\, \ch_{Y \leq \xi}\\
    &\hspace{2cm}  + \barlambdab(Z, Y, S)\,  \left[-\Ab + \frac{1-e^{-\gamma\, \left(\Ab + \Deltarwealthb\right)}}{\gamma}\right]\Bigg\}\,, 
    \label{eq:hjb risk neutral}
\end{align}

where $\Za = Z\, Y^2/(Y-\xi)^2$, $\Zb = Z\, Y^2/(Y+\xi)^2$. We define a new function
$$\alpha(t, Z, Y, S) = \frac{\frac{\partial_Y v}{\tempLP\, \eta} - 2\,(S+Z)\,\gamma\,\eta}{2\, \gamma + \frac{1}{\tempLP\, \eta^2}}\,.$$

Then the supremum in the HJB equation is attained at
\begin{align}
    (\AW)^* &= - Y\, \sigma\,, \label{eq:AW opt}\\
    (\AB)^* &= \begin{cases}
        \alpha(t, Z, Y, Z) \quad & \text{if } \alpha(t, Z, Y, S)\in [-2\, \tempLP\, \eta\, \nu_\infty, 2\, \tempLP\, \eta\, \nu_\infty]\,,\\
        -(S+Z)\, \eta \quad & \text{otherwise}\,,
    \end{cases}\, \label{eq:AB opt}\\
    (\Aa)^* &= - \Deltarwealtha\,\ch_{Y > \xi}\,, \label{eq:Aa opt}\\
    (\Ab)^* &= - \Deltarwealthb\,. \label{eq:Ab opt}
\end{align}
Observe that with this optimal reward structure, the venue compensate any risk that the LPs face from the external price movement and liquidity taking.
The PDE obtained from the HJB equation \eqref{eq:hjb risk neutral} and substituting $(\Aa, \Ab, \AW, \AB)$ as in \noeqref{eq:AB opt} \noeqref{eq:Aa opt} \eqref{eq:AW opt}--\eqref{eq:Ab opt} is difficult to solve, and we propose in Section \ref{sec:approx} an approximation technique. However, following \cite{phamcontrol}, we employ the below verification theorem assuming that the solution exists in the classical sense. 

\begin{proposition}
    If there exists a function $v\in C^{1,2}([0,T]\times \R^3; \R)$ such that $v$ has quadratic growth in $(Z,Y,S)$ (uniformly in $t$) and it satisfies the HJB equation \eqref{eq:hjb risk neutral} with $v(T, \cdot, \cdot, \cdot) = 0$, then
    $$v(t, Z_t, Y_t, S_t) = \sup_{A \in \Lambda } \E_t^{\nu^*(P_T^{P_0,A})} \left[\feeV\,(\Na_T + \Nb_T) - P_T^{P_0,A} \right] \quad \d t \otimes \d \Pb\,\text{-a.e.}\,,$$
     with optimal $(\Aa)^*, (\Ab)^*, (\AW)^*, (\AB)^*$ as in \eqref{eq:AW opt}--\eqref{eq:Ab opt}.
\end{proposition}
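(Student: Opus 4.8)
The plan is to follow the standard martingale-verification argument for the control problem defining $v$, exploiting that Theorem \ref{theorem: contract rep value} has already reduced the venue's objective to an expectation of an additive functional of $(t,Z_t,Y_t,S_t)$ along the LP's optimal response. First I would fix $P_0 = \hat P_0$ (as justified in the text) and an arbitrary admissible $A\in\Lambda$, and write $\nu^* = \bar\nu(A)$. Under $\Pb^{\nu^*(P_T^{P_0,A})}$, the dynamics of $(Z,Y,S)$ are those displayed in the weak formulation with $\hatNab$ of intensity $\barlambdaab(Z_{t-},Y_{t-},S_t)$ and $B^{\nu^*}$ a Brownian motion. For $v\in C^{1,2}$ with quadratic growth, I would apply Itô's formula to $v(t,Z_t,Y_t,S_t)$ between $t$ and $T$, splitting the jump part of the compensated Poisson integrals and the continuous martingale part from $W$ and $B^{\nu^*}$.

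The second step is to add to the Itô expansion the running term and the jump rewards appearing in the definition of $v$, so that the drift of the resulting process equals exactly the left-hand side of the HJB equation \eqref{eq:hjb risk neutral} evaluated at the running choice $A_s$, which by the supremum characterization is $\le 0$, with equality when $A_s = A_s^*$ as in \eqref{eq:AW opt}--\eqref{eq:Ab opt}. I would check carefully that the algebraic rearrangement matches: the $-\gamma/2$ quadratic terms in $\AW,\AB$, the $h(\bar\nu(A),A) - \AB\bar\nu(A)/\eta + \partial_Y v\,\bar\nu(A)$ grouping, the $\barlambdaab$-terms with the $(1-e^{-\gamma(\Ai+\Deltarwealthi)})/\gamma - \Ai$ structure, and the $\ch_{Y\le\xi}$ correction on the $-$ side (which accounts for $\d\Na_t = \ch_{Y_{t-}>\xi}\d\hat N^-_t$ versus $\d\hat N^-_t$). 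This yields that the process $\Theta_s := v(s,Z_s,Y_s,S_s) + \int_t^s(\text{running})\,\d u + \sum_{i}\int_t^s(\feeV-\Ai_u)\d\Ni_u - \int_t^s \Aa_u\ch_{Y_{u-}\le\xi}\d\hat N^-_u$ is a $\Pb^{\nu^*}$-supermartingale for every $A\in\Lambda$ and a martingale for $A=A^*$; taking expectations, using $v(T,\cdot)=0$, gives $v(t,Z_t,Y_t,S_t)\ge \E_t^{\nu^*(P_T^{P_0,A})}[\feeV(\Na_T+\Nb_T)-P_T^{P_0,A}]$ for all $A$, with equality at $A^*$, which is the claim.

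The main obstacle is the integrability needed to promote the local (super)martingale to a true (super)martingale and to justify interchanging expectation and limit. The quadratic-growth hypothesis on $v$ in $(Z,Y,S)$ together with the moment bounds built into the admissible class (the $L^2$ bound $\sup_\nu\E^\nu\int_0^T (\AB_t)^2+(\AW_t)^2\,\d t<\infty$, the uniform exponential moment $\sup_\nu\sup_t\E^\nu[\exp\{-\gamma' P_t^{0,A}\}]<\infty$, and Conditions \ref{assumptionRgamma}--\ref{assumptionReta}) should suffice: one controls $\sup_{t\le T}\E^{\nu^*}[Z_t^2+Y_t^2+S_t^2]$ via the dynamics (noting $\barlambdaab$ is bounded by an affine function of $Y$ and the jump sizes $\Deltarwealthab$, $\varphi$-increments are controlled on $\{Y>\xi\}$), then dominates the stochastic integrals by Burkholder--Davis--Gundy and the running terms by the same moment estimates, so the localizing sequence can be removed by dominated/uniform-integrability arguments. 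A secondary, more delicate point is verifying $A^*\in\Lambda$ — i.e.\ that the reward $P_T^{0,A^*}$ still satisfies Conditions \ref{assumptionRgamma}--\ref{assumptionReta} and the exponential-moment bound; this follows from the explicit form of $(\AW)^*,(\AB)^*,(\Aa)^*,(\Ab)^*$, which are at most affine in $(Z,Y,S)$ (hence the relevant exponential moments reduce to Gaussian/compound-Poisson computations), but it must be stated rather than left implicit.
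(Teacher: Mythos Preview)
The paper does not actually prove this proposition: it states it and refers the reader to the standard verification argument in \cite{phamcontrol}. Your proposal lays out precisely that standard martingale-verification route (It\^o's formula on $v$, identify the drift with the Hamiltonian so that it is nonpositive for generic $A$ and zero at $A^*$, upgrade from local to true (super)martingale via the quadratic-growth and moment hypotheses, then take expectations with the terminal condition), so it is fully consistent with the paper's approach and correct in outline. Your additional remarks on the integrability step and on checking $A^*\in\Lambda$ go beyond what the paper makes explicit, which is appropriate given that the paper outsources these details to the reference.
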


\subsection{Risk-averse venue}\label{subsec: exponential utility}

Observe that under $\Pb^\nu$, we have

\begin{align}
    P_t = P_0 &+ \int_0^t  \Bigg\{ \frac{\gamma}{2}\, \left[ \left(\AW_s + \sigma\, Y_{s\shortminus}\right)^2 + \left(\AB_s + \eta\, (S_{s\shortminus} + Z_{s\shortminus})\right)^2 \right] \\
    &\qquad \qquad - \sum_{i\in\{+,-\}} \frac{\lambdai_s\,(1 - e^{-\gamma(\Ai_s + \Deltarwealthi_s)})}{\gamma} - H(A_s) + \frac{\AB_s\, \bar \nu(A_s)}{\eta}\Bigg\}\, \d s \\
    &+\sum_{i\in\{+,-\}} \int_0^t  \Ai_s\, \d \Ni_s + \int_0^t \Aa_s\, \ch_{Y_{s\shortminus} \leq \xi}\, \d \hat{N}^-_s + \int_0^t \AW_s\, \d W_s +\int_0^t \AB_s\, \d B_s^{\nu^*} \,.
\end{align}

We let $V$ be the value function of the risk averse venue, given by
\begin{equation}
    \Vvn(t, Z_t, Y_t, S_t, \hatNa_t, \Na_t, \Nb_t, P_t^{P_0,A}) = \sup_{A\in \Lambda} \E^{\nu^*(\rew)}_{t} \left[-\exp\left\{-\zeta\, \left(\feeV\,(\Na_T + \Nb_T) - \rew \right)\right\} \right]\,.
\end{equation}

The full HJB equation for the above control problem is
\begin{equation}
    \begin{split}
        0 &=\sup_{\Aa, \Ab, \AW, \AB} \Bigg( \partial_t \Vvn + (\partial_Y \Vvn)\, \bar \nu(A) + (\partial_P \Vvn)\, \Bigg\{ \frac{\gamma}{2}\, \left[ \left(\AW + \sigma\, Y\right)^2 + \left(\AB + \eta\, (S + Z)\right)^2 \right]\\
        &\qquad  - \sum_{i\in\{+,-\}} \frac{\barlambdai(Z, Y, S)\,(1 - e^{-\gamma(\Ai + \Deltarwealthi)})}{\gamma} - H(A) + \frac{\AB\, \bar \nu(A)}{\eta}\Bigg\} + (\partial_{YY} \Vvn)\, \frac{\eta^2}{2} + (\partial_{YP} \Vvn)\, \AB\, \eta\\
        &\qquad + (\partial_{SS} \Vvn)\, \frac{\sigma^2}{2} + (\partial_{SP} \Vvn)\, \AW\, \sigma + (\partial_{PP} \Vvn)\, \frac{(\AW)^2+(\AB)^2}{2} + \sum_{i\in\{+,-\}} \barlambdai(Z, Y, S)\, \left[ \Vvni - \Vvn \right]\Bigg)\,,
    \end{split}
\end{equation}
with
\begin{align}
    \Vvna &= \Vvn(t, Z\, Y^2/(Y-\xi\ch_{Y > \xi})^2, Y - \xi\ch_{Y > \xi}, S, \hatNa+1, \Na + \ch_{Y > \xi}, \Nb, P+\Aa)\,,\\
    \Vvnb &= \Vvn(t, Z\, Y^2/(Y+\xi)^2, Y+\xi, S, \hatNa, \Na, \Nb + 1, P+\Ab)\,.
\end{align}

By taking the ansatz
\begin{equation}
    \Vvn(t, Z, Y, S, \hatNa,\Na, \Nb, P) = -e^{-\zeta[\feeV(\Na+\Nb)-P+v(t,Z,Y,S)]}\,
\end{equation}
and dividing the HJB equation by $\zeta\, \Vvn(t, Z, Y, S, \hat{\Na}, \Na, \Nb, P)$, we obtain
\begin{align}
    0 &= \quad \frac{1}{2} \inf_{\AB} \Bigg\{ (\gamma+\zeta)\, (\AB)^2 + 2\,\tempLP\, \bar \nu(A)^2 + 2\,\left[\gamma\, \eta\, (S+Z) - \zeta\,\gamma\,\partial_Y v \right]\, \AB - 2\,(\partial_Y v)\, \bar \nu(A)\Bigg\}\\
    &\quad + \frac{1}{4} \inf_{\AW} \Bigg\{ 2\,(\gamma+\zeta)\,(\AW)^2 + 4\,\sigma\,(\gamma\, Y - \zeta\, \partial_S v)\, \AW \Bigg\}\\
    &\quad - \sup_{\Aa} \left\{ \barlambdaa(Z, Y, S)\, \left( \frac{1 - e^{-\zeta\,\left(\va - v + \feeV - \Aa \right)}}{\zeta}\, \ch_{Y>\xi} + \frac{1 - e^{\zeta\, \Aa}}{\zeta}\, \ch_{Y\leq\xi}  + \frac{1-e^{-\gamma(\Aa+\Deltarwealtha)}}{\gamma} \right)\right\}\\
    &\quad - \sup_{\Ab} \left\{ \barlambdab(Z, Y, S)\, \left( \frac{1 - e^{-\zeta\,\left(\vb - v + \feeV - \Ab \right)}}{\zeta} + \frac{1-e^{-\gamma(\Ab+\Deltarwealthb)}}{\gamma} \right)\right\}\\
    &\quad + \frac{1}{2}\, \gamma\, \eta^2\, S^2 + \gamma\, \eta^2\, S\, Z + \frac{1}{2}\, \gamma\, \eta^2\, Z^2 + \frac{1}{2}\, \gamma\, \sigma^2\, Y^2\\
    &\quad + \frac{1}{2}\, \zeta\, \sigma^2\, (\partial_S v)^2 - \frac{1}{2}\, \sigma^2\, \partial_{SS} v + \frac{1}{2}\, \zeta\, \eta^2\, (\partial_Y v)^2 - \frac{1}{2}\,\eta^2\, \partial_{YY} v - \partial_t v\,,
    \label{eq:hjb risk averse}
\end{align}
with
\begin{align}
    \va &= v(t, Z\,Y^2/(Y-\xi\ch_{Y>\xi})^2,Y-\xi\ch_{Y>\xi}, S)\,,\\
    \vb &= v(t, Z\,Y^2/(Y+\xi)^2,Y+\xi, S)\,
\end{align}
and terminal condition $v(T,Z,Y,S)=0$.\\

We define 
$$\alpha_\zeta(t, Z, Y, S) = \frac{\left( \frac{1}{\tempLP\, \eta} + 2\, \zeta\, \eta \right)\,\partial_Y v - 2\, \gamma\, \eta\, (S+Z)}{2\,(\gamma+\zeta) + \frac{1}{\tempLP\, \eta^2}}\,.$$

The optimal values for $\AB$, $\AW$, $\Aa$, and $\Ab$ are
\begin{align}
    (\AB_t)^* &= \begin{cases} \alpha_\zeta(t, Z_{t\shortminus}, Y_{t\shortminus}, S_t) & \text{if } \alpha_\zeta(t, Z_{t\shortminus}, Y_{t\shortminus}, S_t)\in [-2\,\tempLP\,\eta\,\nu_\infty, 2\,\tempLP\,\eta\,\nu_\infty]\\
    \frac{\zeta\,\gamma\, \partial_Y v - \gamma\, \eta\, (S+Z)}{\gamma+\zeta} & \text{otherwise}
    \end{cases}\,, \label{eq:AB opt ra}\\
    (\AW_t)^* &= \frac{\sigma\,( \zeta\, \partial_S v-\gamma\, Y)}{\gamma+\zeta}\,, \label{eq:AW opt ra}\\
    (\Aa_t)^* &= \begin{cases}
        \frac{-\gamma\, \Deltarwealtha+\zeta\,(\va-v+\feeV)}{\gamma+\zeta} & \text{if } Y_{t\shortminus}>\xi,\\
        0 & \text{otherwise,}
    \end{cases}\, \label{eq:Aa opt ra}\\
    (\Ab_t)^* &= \frac{-\gamma\, \Deltarwealthb+\zeta\,(\vb-v+\feeV)}{\gamma+\zeta}\,. \label{eq:Ab opt ra}
\end{align}

As before, we state the following standard verification theorem.
\newpage
\begin{proposition}
    If there exists a function $v\in C^{1,2}([0,T]\times \R^3; \R)$ such that $v$ has quadratic growth in $(Z,Y,S)$ (uniformly in $t$) and it satisfies the HJB equation \eqref{eq:hjb risk averse} with $v(T, \cdot, \cdot, \cdot) = 0$, then
    $$v(t, Z_t, Y_t, S_t) = \sup_{A \in \Lambda } \E_t^{\nu^*(P_T^{P_0,A})} \left[-\exp \left\{ - \zeta \left(\feeV\,(\Na_T + \Nb_T) - P_T^{P_0,A} \right) \right\} \right] \quad \d t \otimes \d \Pb\,\text{-a.e.}\,,$$
     with optimal $(\Aa)^*, (\Ab)^*, (\AW)^*, (\AB)^*$ as in \eqref{eq:AB opt ra}--\eqref{eq:Ab opt ra}. \noeqref{eq:AW opt ra} \noeqref{eq:Aa opt ra}
\end{proposition}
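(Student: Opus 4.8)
The plan is to run a classical verification argument for controlled jump--diffusions. Fix $P_0\in\R$ and $A\in\Lambda$, set $\rew=P_T^{P_0,A}$, and let $\nu^*=\nu^*(\rew)$ be the follower's optimal response from Theorem \ref{theorem: contract rep value}, so that $\nu^*_t=\bar\nu(A_t)$ and, under $\Pb^{\nu^*}$, the processes $W,B^{\nu^*}$ are independent Brownian motions while $\hatNa,\hatNb$ have intensities $\barlambdaa,\barlambdab$; the dynamics of $(Z,Y,S)$ are those of the weak formulation in Section \ref{sec:model}, and those of $P^{P_0,A}$ are the ones displayed at the start of Subsection \ref{subsec: exponential utility}. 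I would introduce the candidate
\[
\widehat\Vvn(t,z,y,s,n^-,n^+,p):=-e^{-\zeta\,[\feeV\,(n^-+n^+)-p+v(t,z,y,s)]},
\]
with $v$ the assumed $C^{1,2}$ solution of \eqref{eq:hjb risk averse}, and apply It\^o's formula to $t\mapsto\widehat\Vvn(t,Z_t,Y_t,S_t,\Na_t,\Nb_t,P^{P_0,A}_t)$ under $\Pb^{\nu^*}$. Gathering the finite--variation part, the drift of $\widehat\Vvn$ equals $\zeta\,\widehat\Vvn$ multiplied by the right--hand side of \eqref{eq:hjb risk averse} evaluated at the current control value $A_t$ (before the infima and suprema over controls are taken). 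Since $v$ solves \eqref{eq:hjb risk averse}, that factor is nonnegative, and as $\widehat\Vvn<0$ the drift is nonpositive, with equality holding precisely when $A_t$ realizes the pointwise optimizers \eqref{eq:AB opt ra}--\eqref{eq:Ab opt ra}; hence $t\mapsto\widehat\Vvn(t,\dots)$ is a $\Pb^{\nu^*}$--local supermartingale.

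Next I would upgrade this to a genuine supermartingale. Along a localizing sequence $(\tau_m)$ one has $\widehat\Vvn(t\wedge\tau_m,\dots)\ge\E^{\nu^*}_t[\widehat\Vvn(T\wedge\tau_m,\dots)]$, and I would pass to the limit by dominated convergence: the quadratic growth of $v$ in $(z,y,s)$ dominates $e^{-\zeta v(t,Z_t,Y_t,S_t)}$ by the exponential of a quadratic form in the Gaussian--Poisson state, while the moment requirements in the definition of $\Lambda$ (square--integrability of $\AW,\AB$ and $\sup_{\nu}\sup_{t}\E^{\nu}[e^{-\gamma'P^{0,A}_t}]<\infty$) together with Conditions \ref{assumptionRgamma}--\ref{assumptionReta} (which control $\E^{\nu^*}[e^{\zeta'\rew}]$ and $\E^{\nu^*}[e^{-\gamma'\rew}]$) supply a $\Pb^{\nu^*}$--integrable dominating variable for $\{\widehat\Vvn(t\wedge\tau_m,\dots)\}_m$. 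Because $v(T,\cdot,\cdot,\cdot)=0$, the limit gives
\[
\widehat\Vvn(t,\dots)\ \ge\ \E^{\nu^*}_t\!\Big[-\exp\big\{-\zeta\,\big(\feeV\,(\Na_T+\Nb_T)-P^{P_0,A}_T\big)\big\}\Big],
\]
and, after cancelling the common $\mcF_t$--measurable factor $e^{-\zeta(\feeV(\Na_t+\Nb_t)-P^{P_0,A}_t)}$ built into the ansatz and taking the supremum over $A\in\Lambda$, this yields the inequality ``$\ge$'' in the claimed identity for $v$.

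For the reverse inequality I would substitute the feedback control $A^*=((\Aa)^*,(\Ab)^*,(\AW)^*,(\AB)^*)$ from \eqref{eq:AB opt ra}--\eqref{eq:Ab opt ra}. I would first verify $A^*\in\Lambda$: the quadratic growth of $v$, and of $\partial_tv,\partial_Sv,\partial_Yv,\partial_{SS}v,\partial_{YY}v$ (a standard interior parabolic estimate, or else an extra hypothesis), forces every component of $A^*$ to grow at most linearly in $(Z,Y,S)$, so the square-integrability and exponential-moment constraints defining $\Lambda$ and Conditions \ref{assumptionRgamma}--\ref{assumptionReta} are met for $\rew=P^{P_0,A^*}_T$. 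With this choice the drift computed above vanishes identically, so $t\mapsto\widehat\Vvn(t,\dots)$ is then a true $\Pb^{\nu^*(P^{P_0,A^*}_T)}$--martingale, the displayed inequality becomes an equality, and the supremum is attained at $A^*$; combining the two inequalities gives the claim and identifies the optimal reward structure \eqref{eq:AB opt ra}--\eqref{eq:Ab opt ra}. I expect the main difficulty to lie in the integrability bookkeeping of the localization step --- showing that the stochastic integrals against $W$, $B^{\nu^*}$ and the compensated measures $\d\hatNa-\barlambdaa\,\d t$ and $\d\hatNb-\barlambdab\,\d t$ are true martingales and that the stopped candidate is uniformly integrable, despite the exponential nonlinearity and the jump shifts $p\mapsto p+\Aa$ and $p\mapsto p+\Ab$ of the $P$-variable; this is precisely what the exponential-moment hypotheses encoded in $\Lambda$ and in Conditions \ref{assumptionRgamma}--\ref{assumptionReta} are designed to control.
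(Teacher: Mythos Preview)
The paper does not supply a proof of this proposition: it simply records it as a ``standard verification theorem'' in the spirit of \cite{phamcontrol} (exactly as for the risk-neutral analogue). Your proposal is precisely that standard argument --- apply It\^o to the ansatz $\widehat\Vvn=-e^{-\zeta[\feeV(n^-+n^+)-p+v]}$, identify the drift with the Hamiltonian in \eqref{eq:hjb risk averse}, deduce the supermartingale property for arbitrary $A$ and the martingale property for the pointwise maximizers \eqref{eq:AB opt ra}--\eqref{eq:Ab opt ra}, localize, and pass to the limit --- so there is no methodological difference to comment on.

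Two small remarks. First, as you implicitly notice when ``cancelling the common $\mcF_t$--measurable factor'', the displayed identity in the proposition is really an identity for the full value function $\Vvn$, not literally for $v$; your argument proves the correct statement $\widehat\Vvn(t,Z_t,Y_t,S_t,\Na_t,\Nb_t,P_t)=\sup_{A\in\Lambda}\E^{\nu^*}_t[-e^{-\zeta(\feeV(\Na_T+\Nb_T)-P_T)}]$, from which the shorthand in the paper is recovered. Second, your parenthetical ``standard interior parabolic estimate, or else an extra hypothesis'' is honest: quadratic growth of $v$ alone does not give linear growth of $\partial_Yv,\partial_Sv$, so admissibility of $A^*$ strictly speaking needs a bit more than stated --- but this is exactly the kind of detail the paper sweeps under the rug by invoking \cite{phamcontrol}.
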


\subsection{Approximate closed-form solutions}\label{sec:approx}

The HJB equations above are difficult to solve. Next, we carry out approximations that make the above equations more tractable. 

\subsubsection{Risk-neutral venue}

First, we ignore the boundary imposed on the control, that is, we take $\nu_\infty\uparrow \infty$ so that $(\AB_t)^*=\alpha(t, Z_t, Y_t, S_t)$. Furthermore, we only consider the case where $Y > \xi$. Substituting this, we obtain
\begin{align}
    0 &= \sum_{i\in\{+,-\}}  \lambdai\, \left[ \feeV + h(t, \Zi, Y + \deltai\, \xi, S) - h(t, Z, Y, S) \right] \\
    &\phantom{{}={}} + \frac{1}{4} \Bigg[  -2\, \gamma\,  \eta^2\, S^2 - 4\, \gamma\,  \eta^2\, S\, Z - 2\, \gamma\,  \eta^2\, Z^2
      + 2\, \sigma^2\, \partial_{SS}h + 2\, \eta^2\, \partial_{YY}h + 4\, \partial_t h\Bigg] \\
    &\phantom{{}={}} + \frac{1}{4}\, \Bigg\{\frac{\left(\frac{\partial_Y h}{\tempLP\, \eta} - 2\,(S+Z)\,\gamma\,\eta\right)^2}{2\, \gamma + \frac{1}{\tempLP\, \eta^2}} + 4\,  \sum_{i\in\{+,-\}} \barlambdai(Z, Y, S)\, \bigg[  \deltai\, \xi\, \left( S - \frac{Z\, Y}{Y + \deltai\, \xi} \right) \bigg]\Bigg\}\,, \label{eq: hjb venue}
\end{align}
with $\deltaa = -1$, and $\deltab = 1$.
Next, we remove the indicator $\ch_{Y_t>\xi}$ from the   stochastic intensities to obtain 
\begin{align}
    \lambdaa_t & \approx \max\left\{a_0, a_1 + a_2\,Y_t + a_3\,(S_t - Z_t)\right\}\,,\qquad    \lambdab_t \approx \max\left\{a_0, a_1 + a_2\,Y_t + a_3\,(Z_t - S_t)\right\}\,.
\end{align}
We have four possibilities: (i) $\lambdaa, \lambdab>a_0$, (ii) $\lambdaa = a_0,\, \lambdab>a_0$, (iii)  $\lambdaa > a_0,\, \lambdab=a_0$, and (iv) $\lambdaa, \lambdab = a_0$. In what follows, we work under the assumption that model parameters are such that one is always in scenario (i), which helps us to obtain approximate closed-form solutions. First, we justify (with market data) that the above assumption is a good proxy for reality.\\

We employ market data from Uniswap V2 and from Binance for the pair ETH-USDC between 1 January 2022 and 30 April 2022. During these four months, we monitor the exchange rate in Binance (the midprice $S_t$), the instantaneous exchange rate in the Uniswap pool ($Z_t$), and the trading activity (liquidity taking orders) in the pool. We split the data into 10-minute buckets, and within a given 10-minute window, we compute the aggregate of the buy-initiated trades and sell-initiated trades.\footnote{Using a five-minute or a fifteen-minutes window yields similar results.} We think of these aggregates as a rough estimate for the intensities $\lambda^{\pm}_t$. Finally, we perform a linear regression that explains $\lambda^{\pm}_t$ in terms of the average differences $\pm(S_t-Z_t)$; for now, we set $a_2= 0$ for simplicity.\footnote{The coefficients $\hat{a}_1$, $\hat{a}_3$  are both significative with a $p$-value $p\ll 0.01$. }  We use the estimates $\hat{a}_1, \hat{a}_3$ to compute the boundaries $\pm d$ such that $0=\hat{a}_1 + \hat{a}_3\,d$. If $S_t - Z_t < d$ or $S_t - Z_t > - d$ then $\lambdaa_t = 0$ or $\lambdab_t = 0$, respectively.
Figure \ref{fig:violations-intensity} shows the boundaries $\pm d$ (in red dotted vertical lines) together with market data on $S_t - Z_t$.

\begin{figure}[H]
    \centering
    \includegraphics[width=0.5\linewidth]{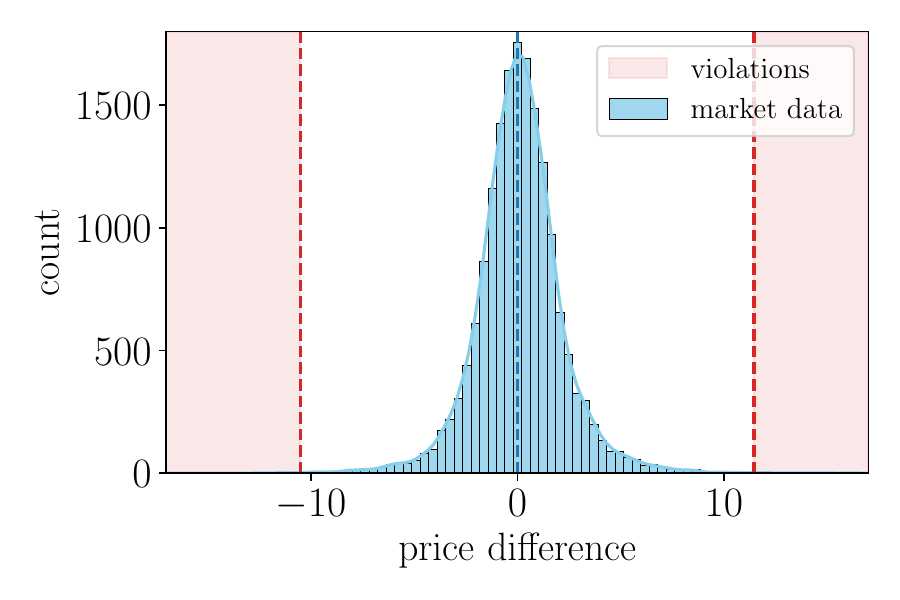}
    \caption{Histogram of price differences between Binance and Uniswap V2 for ETH-USDC between 1 January 2022 and 30 April 2022. The red shaded area represents the region in which $a_1\pm\,a_3 (S_t-Z_t)$ becomes negative. We take $a_2=0$ for simplicity so that the violation boundary is fixed and does not depend on the number of ETH units in the pool.  }
    \label{fig:violations-intensity}
\end{figure}

We find that of the 17,131 data points in the histogram above, the boundaries are breached 30 times to the left and 34 times to the right. Thus,  0.37\% of the data points violate the assumption that in our model $\hat{a}_1 \pm \hat{a}_3\,(S_t - Z_t) > 0$.\\

Next, we use Laurent series to approximate the following terms
    \begin{align}
        \frac{Y}{Y+\deltaab\,\xi} &= \sum_{n=0}^\infty (-1)^n\, \left(\frac{\deltaab\,\xi}{Y}\right)^n = 1 + O\left(\frac{\xi}{Y}\right)\,,\\
        \frac{Y^2}{Y+\deltaab\,\xi} &= Y + \sum_{n=1}^\infty (-1)^n\, \left(\frac{(\deltaab\,\xi)^n}{Y^{n-1}}\right) = Y - \deltaab\,\xi + O\left(\frac{\xi^2}{Y}\right)\,,\\
        \frac{Y^2}{(Y+\deltaab\,\xi)^2} &= \sum_{n=0}^\infty (n+1)\, (-1)^n\, \left(\frac{\deltaab\,\xi}{Y}\right)^n = 1 + O\left(\frac{\xi}{Y}\right)\,,\\
        \frac{Y^3}{(Y+\deltaab\,\xi)^2} &= \sum_{n=0}^\infty Y\, (n+1)\, (-1)^n\, \left(\frac{\deltaab\,\xi}{Y}\right)^n = Y - 2\, \deltaab\, \xi + O\left(\frac{\xi^2}{Y}\right)\,.\\
    \end{align}
Given that $\Zab = Z\, Y^2/(Y+\deltaab\, \xi)^2$, and using the above approximations,  we have that 
    \begin{equation}
        \begin{split}
            h(t, \Zi, Y+\deltaab\,\xi, S) &= h(t, Z, Y+\deltaab\,\xi, S) + Z\,\left( \frac{Y^2}{(Y+\deltaab\,\xi)^2} - 1 \right)\, \partial_Z h(t,Z,Y+\deltaab\,\xi,S)+ O\left(\frac{\xi^2}{Y^2}\right)\\
            &\approx h(t, Z, Y+\deltaab\,\xi, S)\,,\\
            Y\, h(t, \Zi, Y+\deltaab\,\xi, S) &= Y\, h(t, Z, Y+\deltaab\,\xi, S) + Z\,\left( \frac{Y^3}{(Y+\deltaab\,\xi)^2} - Y \right)\, \partial_Z h(t,Z,Y+\deltaab\,\xi,S) + O\left(\frac{\xi^2}{Y}\right)\\
            &\approx Y \,h(t, Z, Y+\deltaab\,\xi, S) - 2\,\deltaab\, \xi\, Z\, \partial_Z h(t, Z, Y+\deltaab\,\xi, S)\,.
        \end{split}
    \end{equation}
Thus, the approximated HJB equation becomes
    \begin{align}
        0 &\approx \partial_t h + \sum_{i\in\{+,-\}} \lambdai\, \left[ \feeV + h(t, Z, Y + \deltai\, \xi, S) - h(t, Z, Y, S) \right]\\
        &\phantom{{}={}} + \frac{1}{4} \Bigg[  -2\, \gamma\,  \eta^2\, S^2 - 4\, \gamma\,  \eta^2\, S\, Z - 2\, \gamma\,  \eta^2\, Z^2 + 2\, \sigma^2\, \partial_{SS}h + 2\, \eta^2\, \partial_{YY}h\Bigg] \\
        &\phantom{{}={}} + \frac{1}{4}\, \Bigg\{\frac{\left(\frac{\partial_Y h}{\tempLP\, \eta} - 2\,(S+Z)\,\gamma\,\eta\right)^2}{2\, \gamma + \frac{1}{\tempLP\, \eta^2}} + 4\,  \sum_{i\in\{+,-\}} \barlambdai(Z, Y, S)\, \bigg[  \deltai\, \xi\, \left(S - Z \right) \bigg]\Bigg\}\\
        &\phantom{{}={}} + a_2\, \sum_{i\in\{+,-\}} \bigg\{ \xi^2\, Z - 2\, \deltai \xi\,Z\, \partial_Z g(t, Z, Y+\deltai\, \xi, S)\bigg\}\,. \label{eq: hjb venue approx 1}
    \end{align}
Let $\hat{h}$ be the approximation of $h$ such that  sign `$\approx$' in \eqref{eq: hjb venue approx 1} turns into an equality. By taking the ansatz $$\hat{h}(t,Z,Y,S)=h_0(t,Y) + h_1(t,Y)\, Z + h_2(t,Y)\,S+ h_3(t)\, Z^2 + h_4(t)\, Z\, S + h_5(t)\, S^2\,,$$ where
    \begin{align}
        h_i(t, Y) = h_{i0}(t) + h_{i1}(t)\, Y, \quad i=0,1,2\,,
    \end{align}
    we obtain
    \begin{align}
        \partial_Y \hat{h} (t, Z, Y, S) &= h_{01}(t) + h_{11}(t)\, Z + h_{21}(t)\, S =  2\, a_2\, \feeV\, (T-t)\,.
    \end{align}
    Substituting this into $(\AB)^*$, we obtain
    \begin{equation}
        (\AB_t)^* \approx \frac{1}{2\, \gamma+ \frac{1}{\tempLP\, \eta^2}}\left({\frac{2\,a_2\,\feeV\, (T-t)}{\tempLP\, \eta} - 2\,(S+Z)\,\gamma\,\eta}\right)\,.
    \end{equation}
    Let $\hat{\nu}^*$ be the approximation of $\nu^*$ (where we replace $h$ with $\hat{h}$); we obtain that
    \begin{equation}
        \begin{split}
            \hat{\nu}^*_t &= \frac{1}{4\, \eta\, \tempLP\, \gamma+ \frac{2}{\eta}}\left({\frac{2\,a_2\,\feeV\,(T-t)}{\tempLP\, \eta} - 2\,(S+Z)\,\gamma\,\eta}\right)\\
            &= \frac{2\,a_2\,\feeV\, (T-t) - 2\,(S+Z)\,\gamma\,\tempLP\, \eta^2}{4\, \eta^2\, \tempLP^2\, \gamma +2\, \tempLP}\,.
        \end{split}
    \end{equation}

\begin{rem}
For a risk-neutral venue, if the noise in the liquidity provision is small $(\eta\downarrow 0)$, we have that
\begin{equation}
    \hat{\nu}^*_t \to \frac{a_2\,\feeV\,(T-t)}{\tempLP}\,.
\end{equation}    
That is, the provision of liquidity is inversely proportional to the transaction costs in the external venue $\tempLP$, and directly proportional to (i) the fees collected by the venue $\feeV$ and (ii) the sensitivity of order flow to the depth of the pool $a_2$. In particular, if $a_2 = 0$, the LP does not add nor remove liquidity from the pool in equilibrium. 
\end{rem}

\subsubsection{Risk-averse venue}
Here we work out the approximate solution when the venue is risk-averse.
We take $\nu_{\infty}\uparrow \infty$ such that $\nu^*(A_t) = \AB_t/(2\,\tempLP\, \eta)$. We also employ the approximation $e^{-x}\approx 1 - x $ and we only consider the case where $Y_{t\shortminus}>\xi$. We obtain that
\begin{align}
    0 &\approx - \frac{1}{4} \frac{\left[\left( \frac{1}{\tempLP\, \eta} + 2\, \zeta\, \eta \right)\,\partial_Y v - 2\, \gamma\, \eta\, (S+Z)\right]^2}{2\,(\gamma+\zeta) + \frac{1}{\tempLP\, \eta^2}} - \frac{1}{2}\, \frac{\sigma^2\,(\zeta\,\partial_S v - \gamma\,Y)^2}{\gamma+\zeta}\\
    &\quad - \sum_{i\in\{+,-\}} \barlambdai(Z, Y, S)\,\left(\vi - v + \feeV + \deltai\,\xi\,\left(S - \frac{Z\,Y}{Y+\deltai\,\xi}\right)\right)\\
    &\quad + \frac{1}{2}\, \gamma\, \eta^2\, S^2 + \gamma\, \eta^2\, S\, Z + \frac{1}{2}\, \gamma\, \eta^2\, Z^2 + \frac{1}{2}\, \gamma\, \sigma^2\, Y^2\\
    &\quad + \frac{1}{2}\, \zeta\, \sigma^2\, (\partial_S v)^2 - \frac{1}{2}\, \sigma^2\, \partial_{SS} v + \frac{1}{2}\, \zeta\, \eta^2\, (\partial_Y v)^2 - \frac{1}{2}\,\eta^2\, \partial_{YY} v - \partial_t v\,,
\end{align}
where  $\vab=v(t,Z,Y+\deltaab\,\xi, S)$. \\

We focus on the regime where $\barlambdaab(t, Z_{t\shortminus}, Y_{t\shortminus}, S_t) > a_0$. Furthermore, we make the following approximations
\begin{align}
    \frac{Y}{Y+\deltaab\,\xi} &= \sum_{n=0}^\infty (-1)^n\, \left(\frac{\deltaab\,\xi}{Y}\right)^n = 1 + O\left(\frac{\xi}{Y}\right)\,,\\
    \frac{Y^2}{Y+\deltaab\,\xi} &= Y + \sum_{n=1}^\infty (-1)^n\, \left(\frac{(\deltaab\,\xi)^n}{Y^{n-1}}\right) = Y - \deltaab\,\xi + O\left(\frac{\xi^2}{Y}\right)\,,\\
    \frac{Y^2}{(Y+\deltaab\,\xi)^2} &= \sum_{n=0}^\infty (n+1)\, (-1)^n\, \left(\frac{\deltaab\,\xi}{Y}\right)^n = 1 + O\left(\frac{\xi}{Y}\right)\,,\\
    \frac{Y^3}{(Y+\deltaab\,\xi)^2} &= \sum_{n=0}^\infty Y\, (n+1)\, (-1)^n\, \left(\frac{\deltaab\,\xi}{Y}\right)^n = Y - 2\, \deltaab\, \xi + O\left(\frac{\xi^2}{Y}\right)\,,
\end{align}
which implies that 
\begin{equation}
    \begin{split}
        &v(t, Z\, Y^2/(Y+\deltaab\,\xi)^2, Y+\deltaab\,\xi, S) \\
        &\qquad = v(t, Z, Y+\deltaab\,\xi, S) + Z\,\left( \frac{Y^2}{(Y+\deltaab\,\xi)^2} - 1 \right)\, \partial_Z v(t,Z,Y+\deltaab\,\xi,S) + O\left(\frac{\xi}{Y}\right)\\
        &\qquad \approx v(t, Z, Y+\deltaab\,\xi, S)\,,
    \end{split}
\end{equation}
and 
\begin{equation}
    \begin{split}
        &Y\, v(t, \Zab, Y+\deltaab\,\xi, S) \\
        &\qquad = Y\, v(t, Z, Y+\deltaab\,\xi, S) + Z\,\left( \frac{Y^3}{(Y+\deltaab\,\xi)^2} - Y \right)\, \partial_Z v(t,Z,Y+\deltaab\,\xi,S) +  O\left(\frac{\xi}{Y}\right)\\
        &\qquad \approx Y \,v(t, Z, Y+\deltaab\,\xi, S) - 2\,\deltaab\, \xi\, Z\, \partial_Z v(t, Z, Y+\deltaab\,\xi, S)\,.
    \end{split}
\end{equation}

We obtain that 
\begin{equation}
    \begin{split}
        0 &\approx  - \frac{1}{4} \frac{\left[\left( \frac{1}{\tempLP\, \eta} + 2\, \zeta\, \eta \right)\,\partial_Y v - 2\, \gamma\, \eta\, (S+Z)\right]^2}{2\,(\gamma+\zeta) + \frac{1}{\tempLP\, \eta^2}} - \frac{1}{2}\, \frac{\sigma^2\,(\zeta\,\partial_S v - \gamma\,Y)^2}{\gamma+\zeta}\\
        &\quad - \sum_{i\in\{+,-\}} \left(a_1+a_2\,Y+a_3\,\deltai\,(Z-S)\right)\,(\vi - v)- 2\, a_1\, \feeV - 2\, a_2\, \feeV\, Y\\
        &\quad - 2\,a_2\, \xi^2\, Z + 2\, a_3\, \xi\, (S-Z)^2 + a_2\,\sum_{i\in\{+,-\}} \left[2\, \deltai\, \xi\, Z\, \partial_Z v(t, Z, Y+\deltai\,\xi, S) \right]\\
        &\quad + \frac{1}{2}\, \gamma\, \eta^2\, S^2 + \gamma\, \eta^2\, S\, Z + \frac{1}{2}\, \gamma\, \eta^2\, Z^2 + \frac{1}{2}\, \gamma\, \sigma^2\, Y^2\\
        &\quad + \frac{1}{2}\, \zeta\, \sigma^2\, (\partial_S v)^2 - \frac{1}{2}\, \sigma^2\, \partial_{SS} v + \frac{1}{2}\, \zeta\, \eta^2\, (\partial_Y v)^2 - \frac{1}{2}\,\eta^2\, \partial_{YY} v - \partial_t v\,.
    \end{split}
\end{equation}
We let $\hat v$ be the approximation of $v$ such that the sign $\approx$ above turns into an equality, and employ the ansatz
$$\hat v(t, Z, Y, S) = g_{11}(t) + 2\, \statevn^\top\, G_1(t) + \statevn^\top\, G_2(t)\, \statevn\,,$$
where $\statevn =  \begin{pmatrix} Z & Y & S\end{pmatrix}^\top$ and
\begin{align}
    G_1(t)&= \begin{pmatrix}
        g_{12}(t) & g_{13}(t) & g_{14}(t)
    \end{pmatrix}^\top\,,\\
    G_2(t) &= \begin{pmatrix}
        g_{22}(t) & g_{23}(t) & g_{24}(t)\\
        g_{23}(t) & g_{33}(t) & g_{34}(t)\\
        g_{24}(t) & g_{34}(t) & g_{44}(t)
    \end{pmatrix}\,,
\end{align}
with terminal conditions $g_{ij}(T)=0$ for $i,j\in\{1,2,3,4\}$.\\

We obtain the following system of differential equations:
\begin{align}
    0 &= -G_2{'}(t) + G_2(t)\, U\, G_2(t) + V^\top\, G_2(t) + G_2(t)\, V + R\,,\\
    0 &= G_1{'}(t) + C(t)\,G_1(t)^\top + E(t)\,,\\
    0 &= g_{11}{'}(t)+2\,a_1\,\feeV + \frac{\left(1+2\,\tempLP\,\zeta\,\eta^2-4\,\tempLP^2\,\gamma\,\zeta\,\eta^4\right)\,g_{13}(t)^2}{\tempLP\,(1+2\,\tempLP\,(\gamma+\zeta)\,\eta^2}\\
    \quad &\qquad -\frac{2\,\gamma\,\zeta\,\sigma^2\,g_{14}(t)^2}{\gamma+\zeta}+(2\,a_1\,\xi^2+\eta^2)\,g_{33}(t)+\sigma^2\,g_{44}(t)\,,
\end{align}
where for the first equation, $U, V$, and $R$ are given by
\begin{align}
    U &=
    \begin{pmatrix}
        0 & 0 & 0\\
        0 & -\frac{1+2\,\tempLP\,\zeta\,\eta^2-4\,\tempLP^2\,\gamma\,\zeta\,\eta^4}{\tempLP\,(1+2\,\tempLP\,(\gamma+\zeta)\,\eta^2)} & 0\\
        0 & 0 & \frac{2\,\gamma\,\zeta\,\sigma^2}{\gamma+\zeta}
    \end{pmatrix}\,,\\
    V &= \begin{pmatrix}
        0 & 0 & 0\\
        -2\,a_3\,\xi+\frac{\gamma\,\eta^2\,(1+2\,\tempLP\,\zeta\,\eta^2)}{1+2\,\tempLP\,(\gamma+\zeta)\,\eta^2} & 0 & 2\,a_3\,\xi+\frac{\gamma\,\eta^2\,(1+2\,\tempLP\,\zeta\,\eta^2)}{1+2\,\tempLP\,(\gamma+\zeta)\,\eta^2} \\
        0 & \frac{\gamma\,\zeta\,\sigma^2}{\gamma+\zeta} & 0
    \end{pmatrix}\,,\\
    R &= \begin{pmatrix}
        2\,a_3 + \frac{\gamma\,\eta^2\,(1+2\,\tempLP\,\zeta\,\eta^2)}{2+4\,\tempLP\,(\gamma+\zeta)\,\eta^2} & 0 & -2\,a_3 + \frac{\gamma\,\eta^2\,(1+2\,\tempLP\,\zeta\,\eta^2)}{2+4\,\tempLP\,(\gamma+\zeta)\,\eta^2}\\
        0 & \frac{\gamma\,\zeta\,\sigma^2}{2(\gamma+\zeta)} & 0\\
        -2\,a_3 + \frac{\gamma\,\eta^2\,(1+2\,\tempLP\,\zeta\,\eta^2)}{2+4\,\tempLP\,(\gamma+\zeta)\,\eta^2} & 0 & 2\,a_3 + \frac{\gamma\,\eta^2\,(1+2\,\tempLP\,\zeta\,\eta^2)}{2+4\,\tempLP\,(\gamma+\zeta)\,\eta^2}
    \end{pmatrix}\,,
\end{align}    
and for the second equation, $C(t)$ and $E(t)$ are 
\begin{align}
    C(t) &= \begin{pmatrix}
            0 & \frac{-\tempLP\,\gamma\,\eta^2\,(1+2\,\tempLP\,\zeta\,\eta^2)+2\,a_3\,\tempLP\,\xi\,(1+2\,\tempLP\,(\gamma+\zeta)\,\eta^2)+(1+2\,\tempLP\,\zeta\,\eta^2-4\,\tempLP^2\,\gamma\,\zeta\,\eta^4)\,g_{23}(t)}{\tempLP\,(1+2\,\tempLP\,(\gamma+\zeta)\,\eta^2)} & -\frac{2\,\gamma\,\zeta\,\sigma^2\,g_{24}(t)}{\gamma+\zeta}\vspace{0.2cm}\\
        0 & \frac{(1+2\,\tempLP\,\zeta\,\eta^2-4\,\tempLP^2\,\gamma\,\zeta\,\eta^4)\,g_{33}(t)}{\tempLP\,(1+2\,\tempLP\,(\gamma+\zeta)\,\eta^2)} & -\frac{\gamma\,\zeta\,\sigma^2\,(1+2\,g_{24}(t))}{\gamma+\zeta}\vspace{0.2cm}\\
        0 & \frac{-\tempLP\,(\gamma\,\eta^2\,(1+2\,\tempLP\,\zeta\,\eta^2)+2\,a_3\,\xi\,(1+2\,\tempLP\,(\gamma+\zeta)\,\eta^2))+(1+2\,\tempLP\,\zeta\,\eta^2-4\,\tempLP^2\,\gamma\,\zeta\,\eta^4)\,g_{34}(t)}{\tempLP\,(1+2\,\tempLP\,(\gamma+\zeta)\,\eta^2)} & -\frac{2\,\gamma\,\zeta\,\sigma^2\,g_{44}(t)}{\gamma+\zeta}
    \end{pmatrix},\\
    E(t) &= \begin{pmatrix}
        a_2\, \xi^2\,(1-4\,g_{23}(t)) & a_2\, (\feeV + \xi^2\,g_{33}(t))& 0
    \end{pmatrix}^\top.
\end{align}

The equation for $G_2$ is a matrix Riccati equation. Let us introduce the matrices $C=I_3$ and $D=0$, where $I_3$ denotes the identity matrix in $ \mathbb R^{3\times 3}$ and $0$ denotes the zero matrix in the same space. The terminal condition of the equation is $G_2(T)=0$, and it is therefore  clear that
$C + DG_2(T)+ G_2(T)D=C >0$.
We now introduce the matrix $\Theta$ given by
$$\Theta = \begin{pmatrix} -CV & -CU-V^\top D + D V^\top \\ 0 & -U^\top D
\end{pmatrix}.$$

After some calculations we see that $\Theta + \Theta^\top$ is equal to\\
\scalebox{0.85}{%
$
\begin{pmatrix}0 & 2 a_{3} \xi - \frac{\eta^{2} \gamma \left(2 \tempLP \eta^{2} \zeta + 1\right)}{2 \tempLP \eta^{2} \left(\gamma + \zeta\right) + 1} & 0 & 0 & 0 & 0\\2 a_{3} \xi - \frac{\eta^{2} \gamma \left(2 \tempLP \eta^{2} \zeta + 1\right)}{2 \tempLP \eta^{2} \left(\gamma + \zeta\right) + 1} & 0 & - 2 a_{3} \xi - \frac{\eta^{2} \gamma \left(2 \tempLP \eta^{2} \zeta + 1\right)}{2 \tempLP \eta^{2} \left(\gamma + \zeta\right) + 1} - \frac{\gamma s^{2} \zeta}{\gamma + \zeta} & 0 & - \frac{4 a^{2} \eta^{4} \gamma \zeta - 2 \tempLP \eta^{2} \zeta - 1}{a \left(2 \tempLP \eta^{2} \left(\gamma + \zeta\right) + 1\right)} & 0\\0 & - 2 a_{3} \xi - \frac{\eta^{2} \gamma \left(2 \tempLP \eta^{2} \zeta + 1\right)}{2 \tempLP \eta^{2} \left(\gamma + \zeta\right) + 1} - \frac{\gamma s^{2} \zeta}{\gamma + \zeta} & 0 & 0 & 0 & - \frac{2 \gamma s^{2} \zeta}{\gamma + \zeta}\\0 & 0 & 0 & 0 & 0 & 0\\0 & - \frac{4 a^{2} \eta^{4} \gamma \zeta - 2 \tempLP \eta^{2} \zeta - 1}{a \left(2 \tempLP \eta^{2} \left(\gamma + \zeta\right) + 1\right)} & 0 & 0 & 0 & 0\\0 & 0 & - \frac{2 \gamma s^{2} \zeta}{\gamma + \zeta} & 0 & 0 & 0\end{pmatrix}.$
}\newline

It is then clear that all the leading principal minors of the above matrix are non-positive; in particular, they are all equal to zero, except for the determinant of 
$$
\begin{pmatrix}
    0 & 2 a_{3} \xi - \frac{\eta^{2} \gamma \left(2 \tempLP \eta^{2} \zeta + 1\right)}{2 \tempLP \eta^{2} \left(\gamma + \zeta\right) + 1} \\
    2 a_{3} \xi - \frac{\eta^{2} \gamma \left(2 \tempLP \eta^{2} \zeta + 1\right)}{2 \tempLP \eta^{2} \left(\gamma + \zeta\right) + 1} & 0
\end{pmatrix}
$$
which is given by
$$- \left(2 a_{3} \xi - \frac{\eta^{2} \gamma \left(2 \tempLP \eta^{2} \zeta + 1\right)}{2 \tempLP \eta^{2} \left(\gamma + \zeta\right) + 1} \right)^2 \le 0.$$
Therefore, the matrix $\Theta + \Theta^\top$ is negative semi-definite, and Theorem 3.6.6 in \cite{abou2012matrix} implies that there exists a well-defined solution of the Riccati equation on $(-\infty, T]$.

\section{Numerical results}\label{sec:numerical-results}
For the experiments below, we  discretize $[0,T]$ in 10,000 timesteps with $T=1$ (one day). We employ market data from Binance and Uniswap V2 in the pair ETH-USDC between 1 January 2022 and 30 April 2022 to calibrate model parameters. The initial exchange rate is $S_0 = Z_0 = 2820$ in the ETH-USDC pair and the daily volatility in Binance is $\sigma = 0.0569 \times S_0$.\footnote{We obtained the volatility estimate from \href{https://marketmilk.babypips.com/symbols/ETHUSD/volatility}{market-milk daily volatility}.} The average traded quantity in a ten minute window is $\xi = 300$, and the fee collected by the venue per transaction is approximated by the constant fee 
$\feeV = 0.01 \times \xi\times Z_0$. Here, each jump in $\Nab$ captures the aggregate trading that happened in a ten-minute window. 
The initial position in the pool is $Y_0 = 50,000$ ETH,\footnote{This value is motivated by those in \href{https://app.uniswap.org/explore/pools/ethereum/0x88e6A0c2dDD26FEEb64F039a2c41296FcB3f5640}{Uniswap's ETH-USDC pool} ---accessed 20 February 2025.  } and the depth of the pool is $c_0 = Y_0\,\times (Y_0 \times Z_0)$.
We use $\eta= 10^{-10}$ ETH as the volatility of the Brownian motion used in the provision of liquidity.
For $t\in\mfT$, we estimate $\lambdaab_t$ using data from the previous ten minutes. We then fit a linear regression to calibrate $a_1,a_3$ in \eqref{eq:intensities}; for now, we set $a_2 = 0$. We obtain the estimates $a_1= 142.7$ and $a_3= 13.6$. These values were scaled so that a jump corresponds to the activity in a ten-minute window in line with $\xi$. Below, we study the strategies for the  risk-averse case. The LP's risk aversion parameter is $\gamma =  10^{-18}$ and the venue's risk aversion is $\zeta=10^{-6}$. Lastly, for the figures 
below we let the temporary price impact parameter in Binance be $\tempLP = 10^{-14}$ (negligible walking the book costs in the external venue). Below, we show that as we increase the value of this parameter, the strategy collapses around zero and the representative LP does not add liquidity to the pool.\\

Figure \ref{fig:violations-simulations} shows the violations to the approximation assumption within the simulation. That is, the violations to the assumption that both $\lambdaa,\lambdab>0$. We observe that there are no violations to the non-negative assumption in the simulations.

\begin{figure}[H]
    \centering    \includegraphics[width=0.5\linewidth]{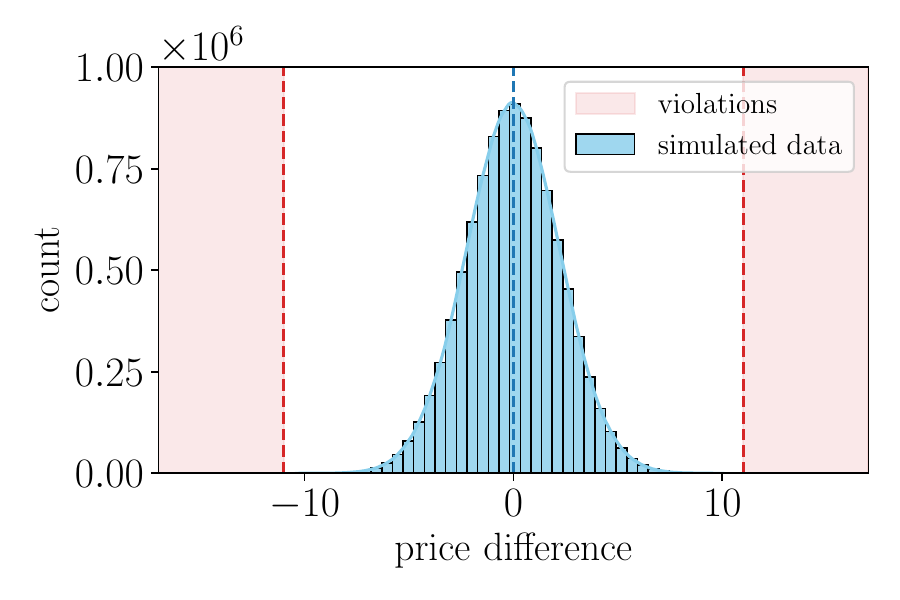}
    \caption{Histogram of simulated price differences between Binance and Uniswap V2 for ETH-USDC using 1,000 simulations and the model parameters at the start of the section. The red shaded area represents the region in which $a_1\pm\,a_3 (S_t-Z_t)$ becomes negative. We take $a_2=0$ for simplicity so that the violation boundary is fixed and does not depend on the number of ETH units in the pool. }
    \label{fig:violations-simulations}
\end{figure}

Figure \ref{fig:sample-path-and-regions} shows a sample path for the inventory $Y_t$, together with the prices $S_t$ (external) and $Z_t$ (pool).

\begin{figure}[H]
    \centering
    \includegraphics[width=0.4\linewidth]{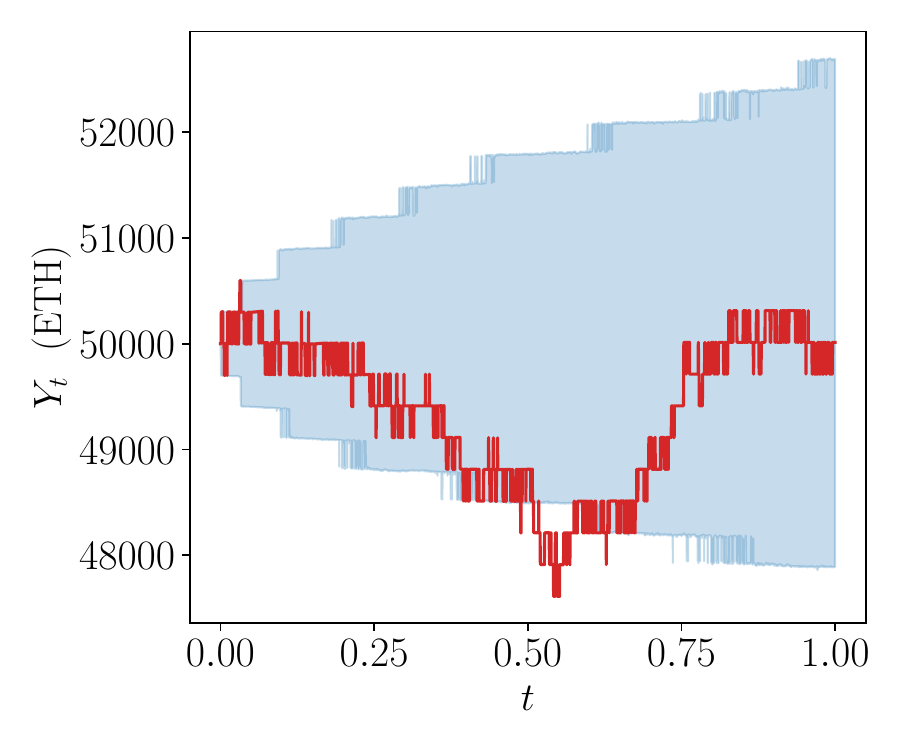}
    \includegraphics[width=0.4\linewidth]{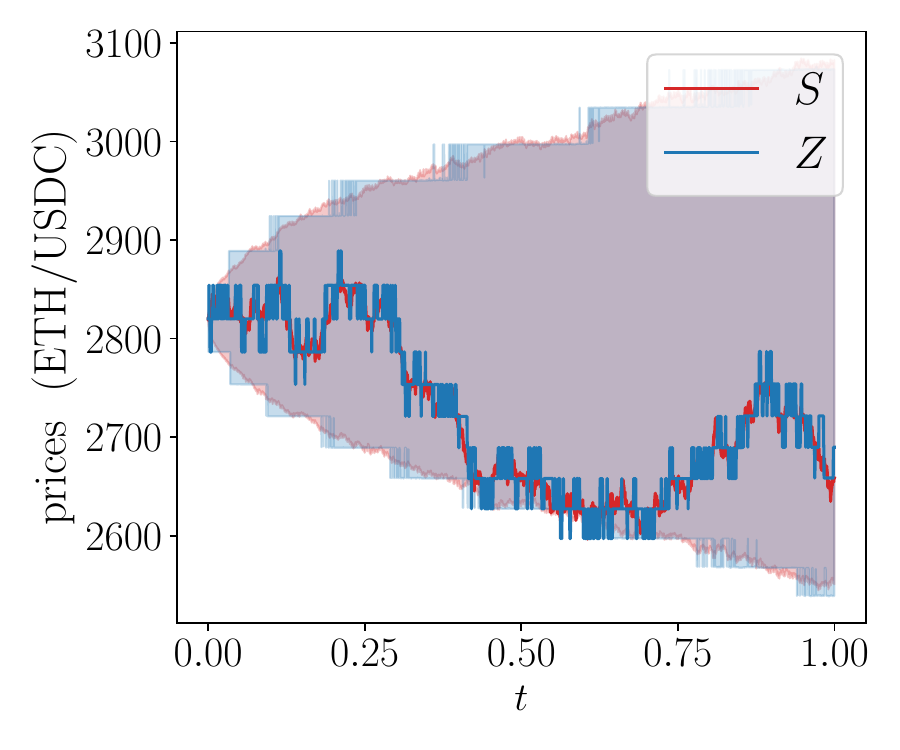}
    \caption{Sample path (with 90\% bands across time) for the inventory of ETH in the pool (left panel), and the instantaneous exchange rate in the pool and outside (right panel). }
    \label{fig:sample-path-and-regions}
\end{figure}

As expected, arbitrageurs keep the prices in the pool aligned to those outside of the pool. For these parameters, with $a_2=0$, the changes in $Y$ are relatively small and seem to be driven mostly by the liquidity taking activity;  we confirm this below. Figure \ref{fig:sample-path-and-regions-2} shows   the speed $\nu_t$ at which the LP adds or subtracts liquidity from the pool, and the cumulative liquidity provided (given by $\int_0^t \nu_s\,\d s$).

\begin{figure}[H]
    \centering
    \includegraphics[width=0.4\linewidth]{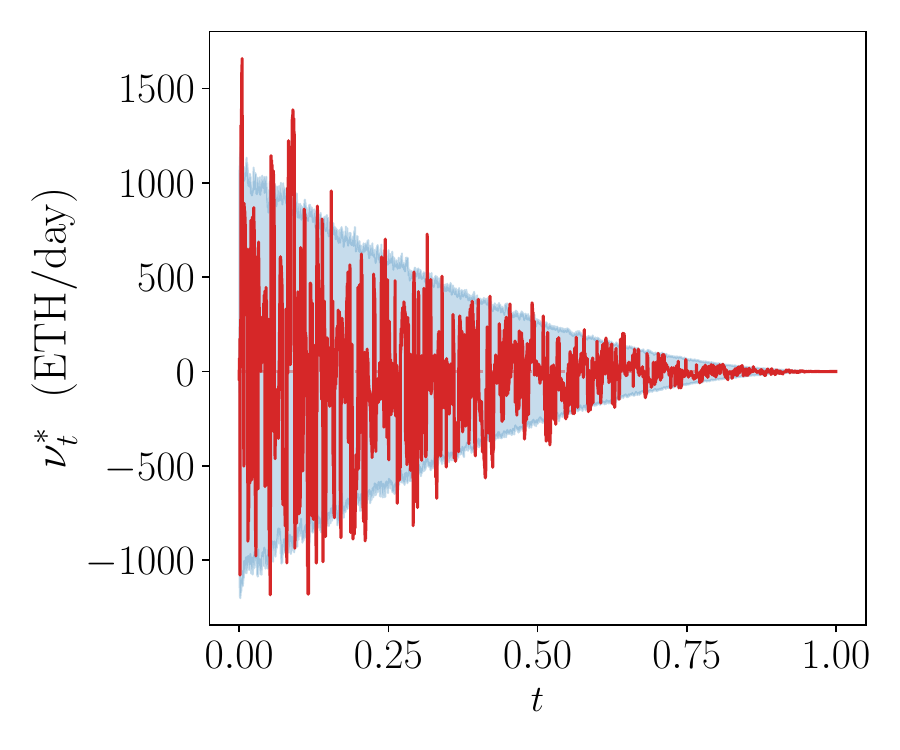}
    \includegraphics[width=0.4\linewidth]{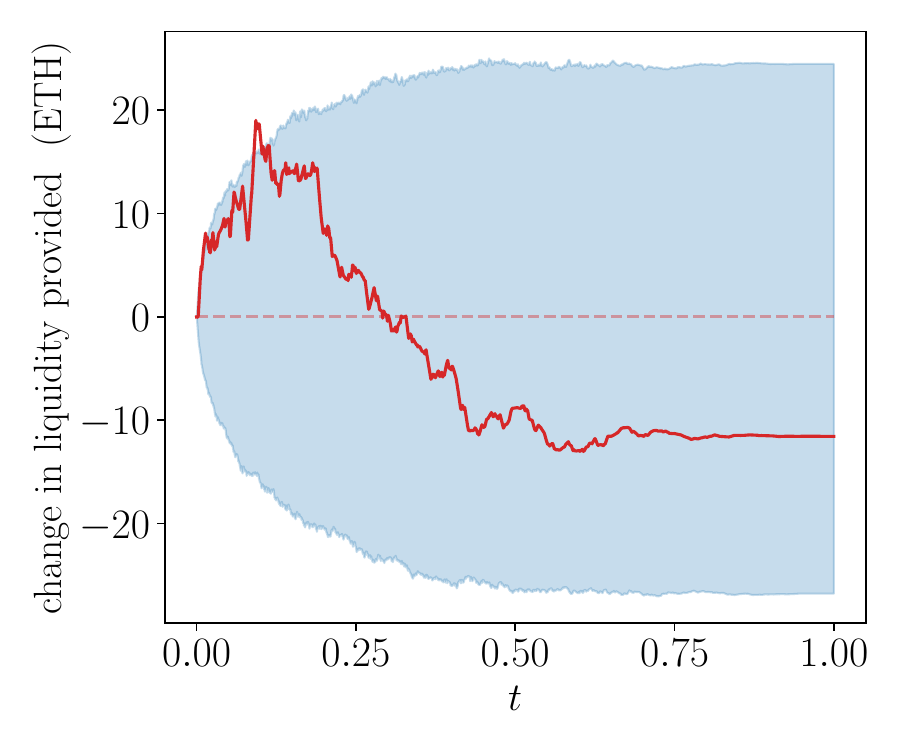}
    \caption{Sample path (with 90\% bands across time) for the  speed at which the LP adds/removes liquidity from the pool (left panel), and the cumulative change in liquidity provided (right panels), given by $\int_0^t \nu_s\,\d s$. }
    \label{fig:sample-path-and-regions-2}
\end{figure}

Next, we study the optimal strategy $\nu^*_t$ of the LP in more detail.
Figure \ref{fig:stressing-parameters-nustar} shows a sample path of the optimal strategy as we stress the values of model parameters (the randomness is the same for all simulations).

\begin{figure}[H]
    \centering
    \includegraphics[width=0.38\linewidth]{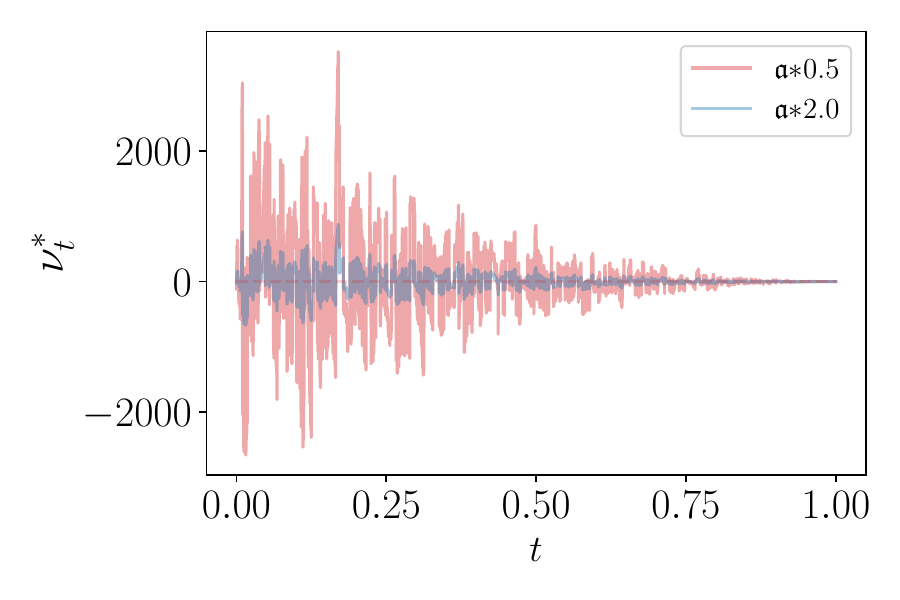}
    \includegraphics[width=0.38\linewidth]{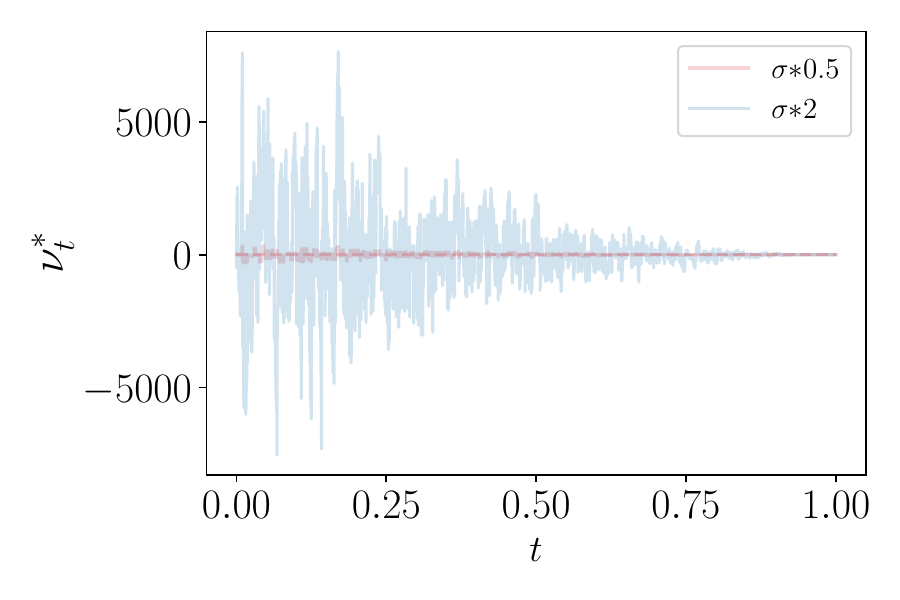}
    \caption{Sample path of the optimal strategy of the LP as model parameters $\tempLP$ and $\sigma$ change.}
    \label{fig:stressing-parameters-nustar}
\end{figure}

As expected, the higher the volatility, the higher the variability of the strategy of the the representative LP.  The smaller the temporary price impact in the external venue $\tempLP$, the more trading activity the representative LP can afford. When we increase the transaction cost parameter  for trading in the external venue to $\tempLP\in\{10^{-13}, 10^{-12}\}$, we obtain that the strategy of the representative LP collapses around zero, as shown in Figure \ref{fig:sample-path-and-regions-3}.\\

\begin{figure}[H]
    \centering
    \includegraphics[width=0.38\linewidth]{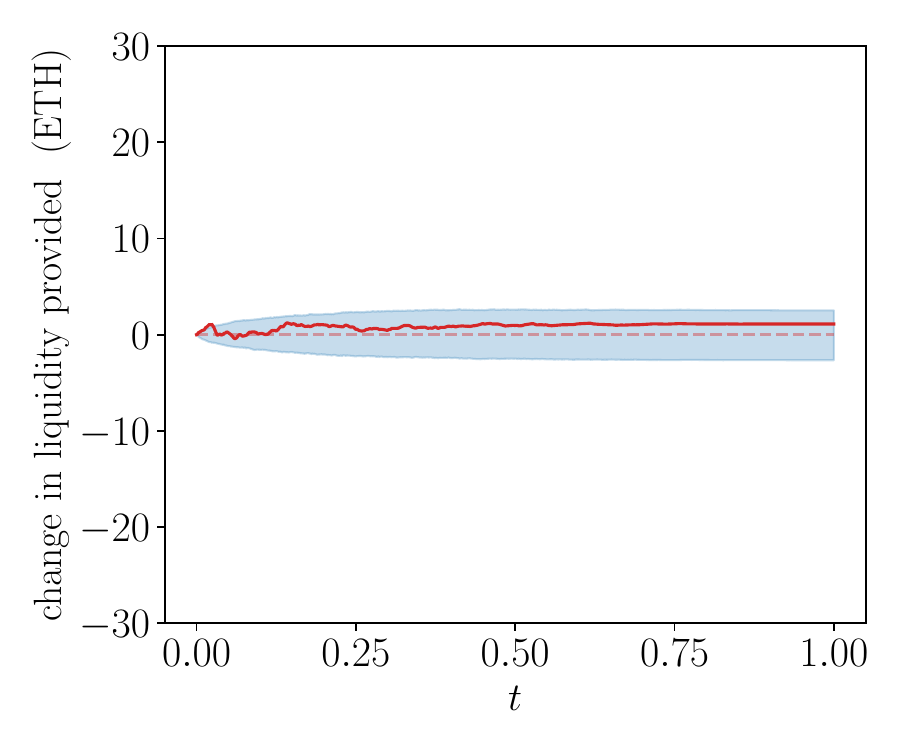}
    \includegraphics[width=0.38\linewidth]{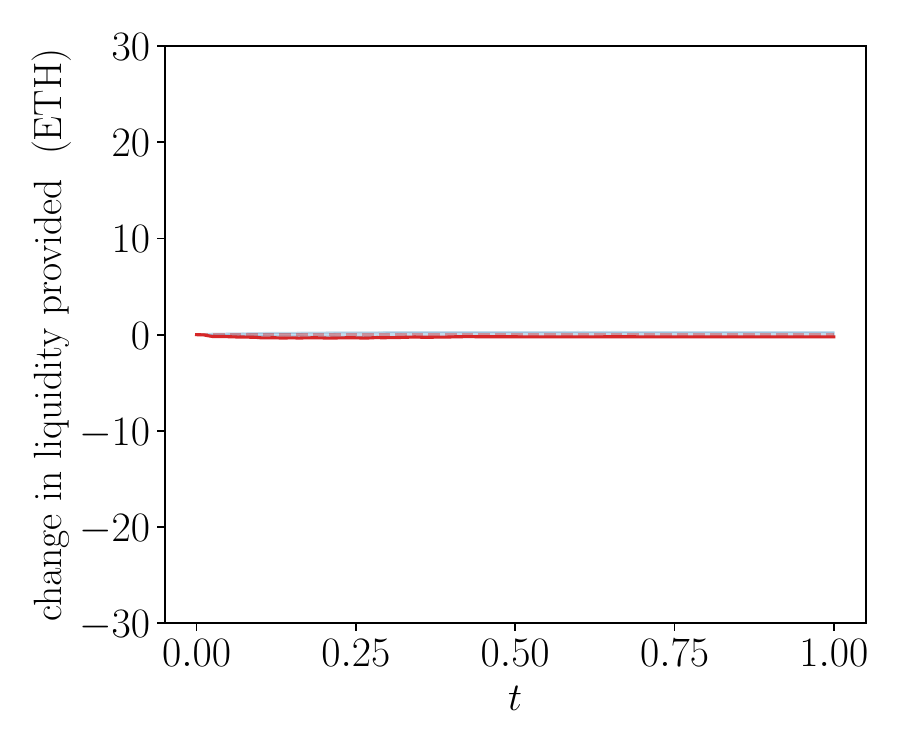}
    \caption{Sample path (with 90\% bands across time) for the  cumulative change in liquidity provided given by $\int_0^t \nu_s\,\d s$. The first  plot is for $\tempLP=10^{-13}$ and the second plots is for $\tempLP=10^{-12}$.  }
    \label{fig:sample-path-and-regions-3}
\end{figure}

\subsection{Distributing the fees collected}
The starting point of the contract representation, i.e., $P_0$, shifts the profitability of the LP to the right (the higher $P_0$ the more profitable), and it shifts the profitability of the venue to the left (the higher $P_0$ the less profitable). Here, we take $P_0$ such that both the venue and the representative LP have the same average wealth.\\

Figure \ref{fig:rew_venue} shows  $10,000$ Monte-Carlo simulations of the the equilibrium of the model. We report the distribution of the final reward $\rew$ and the venue's PnL. The red line is the mean of the distribution. On both sides, the mean is positive. The higher the right aversion of the venue, the more concentrated these profits would be.

\begin{figure}[H]
    \centering
    \includegraphics[width=0.4\linewidth]{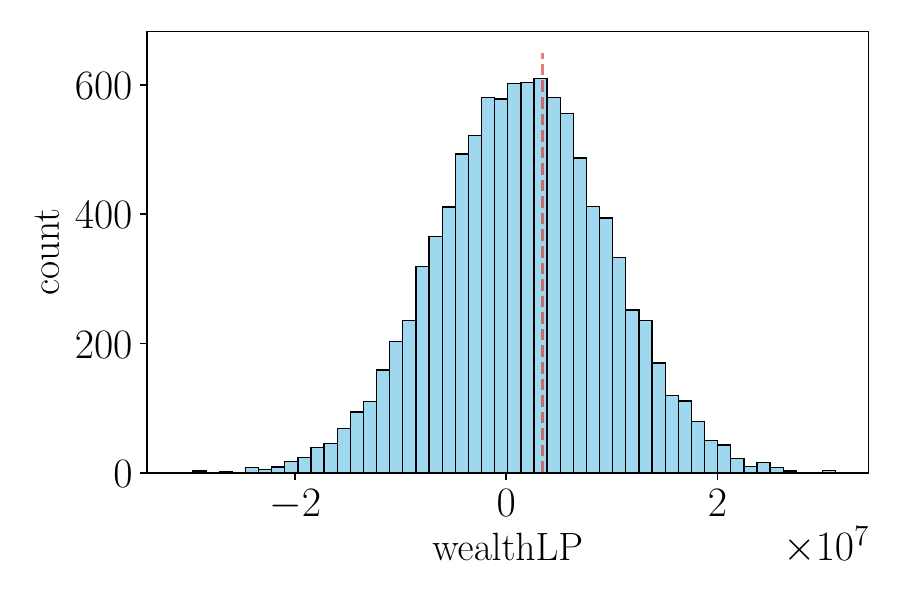}
    \includegraphics[width=0.4\linewidth]{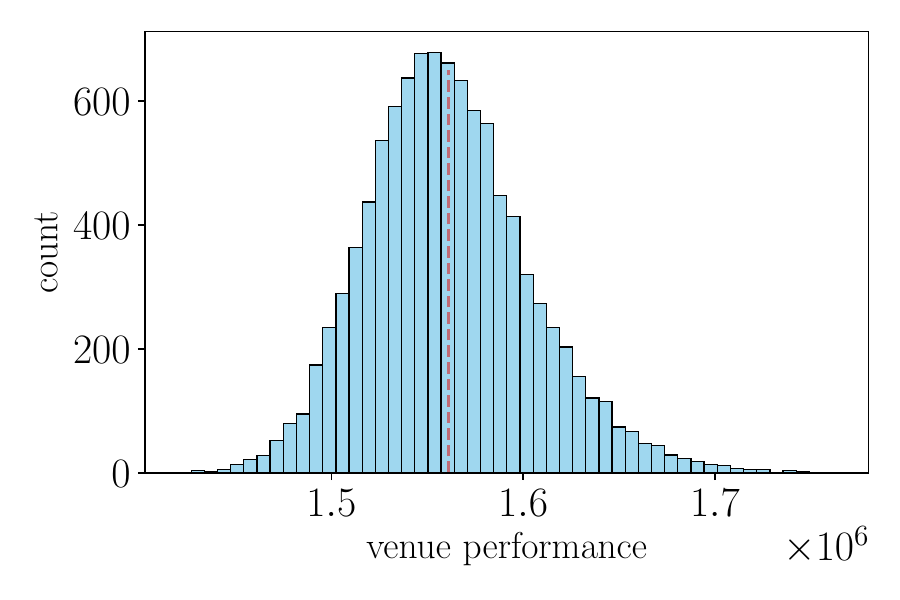}
    \caption{Histograms for the reward $\rew$ paid to the representative LP (left panel) and the performance  of the venue $\feeV\,(\Na_T + \Nb_T) - \rew$ (right panel). The mean values are in vertical dotted lines. }
    \label{fig:rew_venue}
\end{figure}

The optimal contract benefits both the LPs and the venue. This can be seen in the positive values of the means of both histograms.

\subsection{Attracting noise trading}
In the analysis we carried out so far, the  intensity of order arrivals does not react to the quantity of asset $Y$ in the pool, i.e.  $a_2 = 0$. As shown above, this implies that the 
strategy of the representative LP collapses around zero for meaningful values of the transaction cost parameter for trading in the external venue. 
Here, we show that this is no longer the case if $a_2>0$. In what follows, we use $a_2 = 10^{-5}$ and $\tempLP = 5\times 10^{-6}$. Note that this value of $a_2$ only represents around  $0.5$ ETH/10-minutes of added intensity (given that the amount of ETH in the pool  is around $50,000$). \\

Figure \ref{fig:real-1} shows a trajectory with their confidence bands of the quantity of ETH in the pool, the optimal speed at which the representative LP adds liquidity to the pool, the cumulative liquidity provided, and the cumulative fees paid to the external venue.

\begin{figure}[H]
    \centering
    \includegraphics[width=0.4\linewidth]{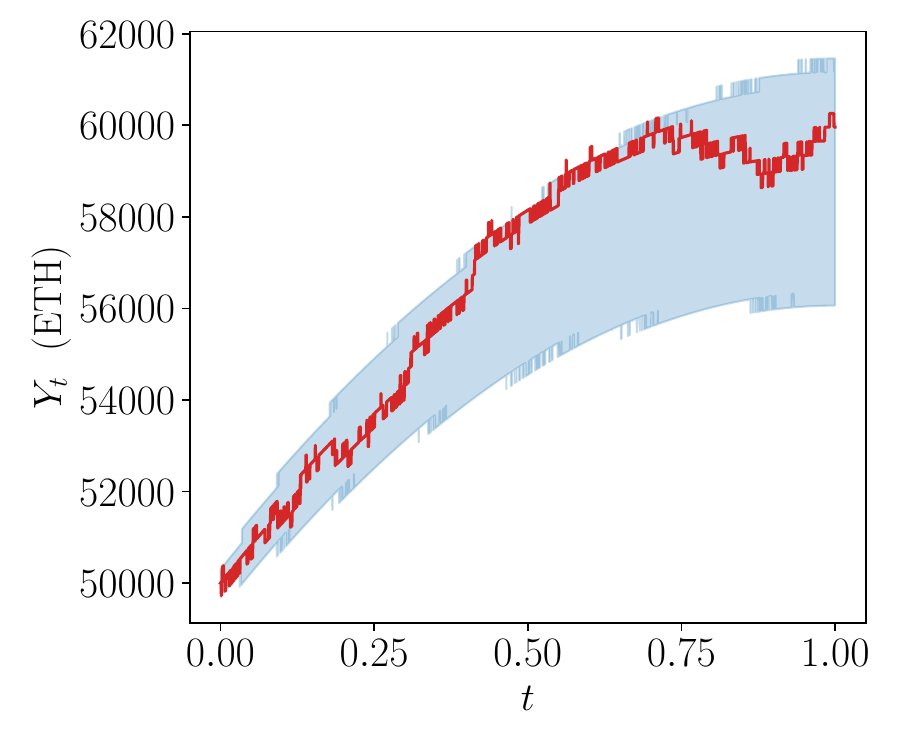}
    \includegraphics[width=0.4\linewidth]{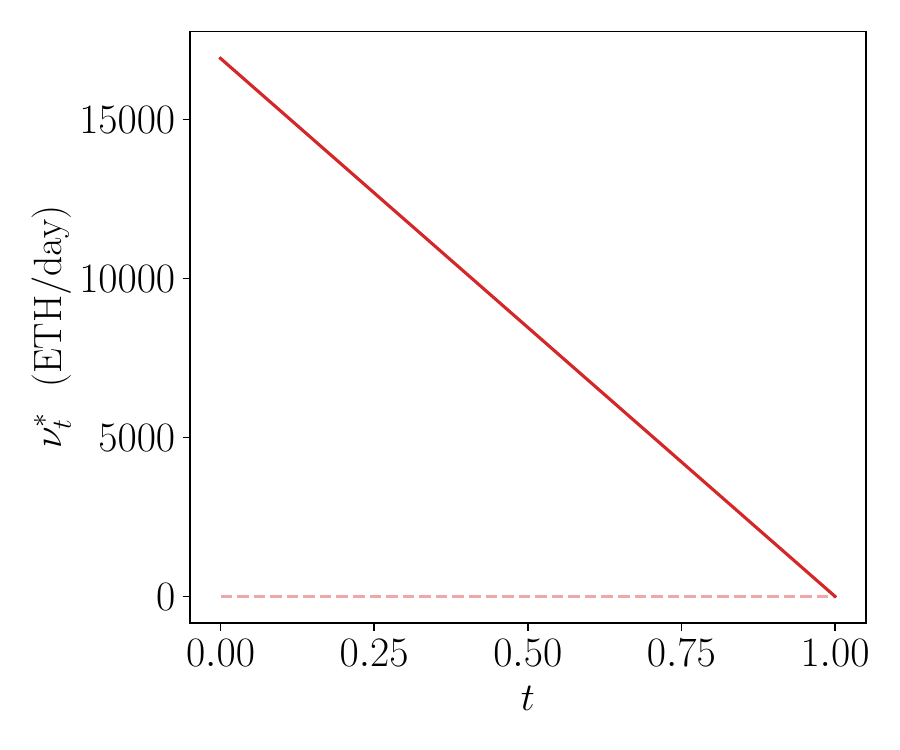}
    \includegraphics[width=0.4\linewidth]{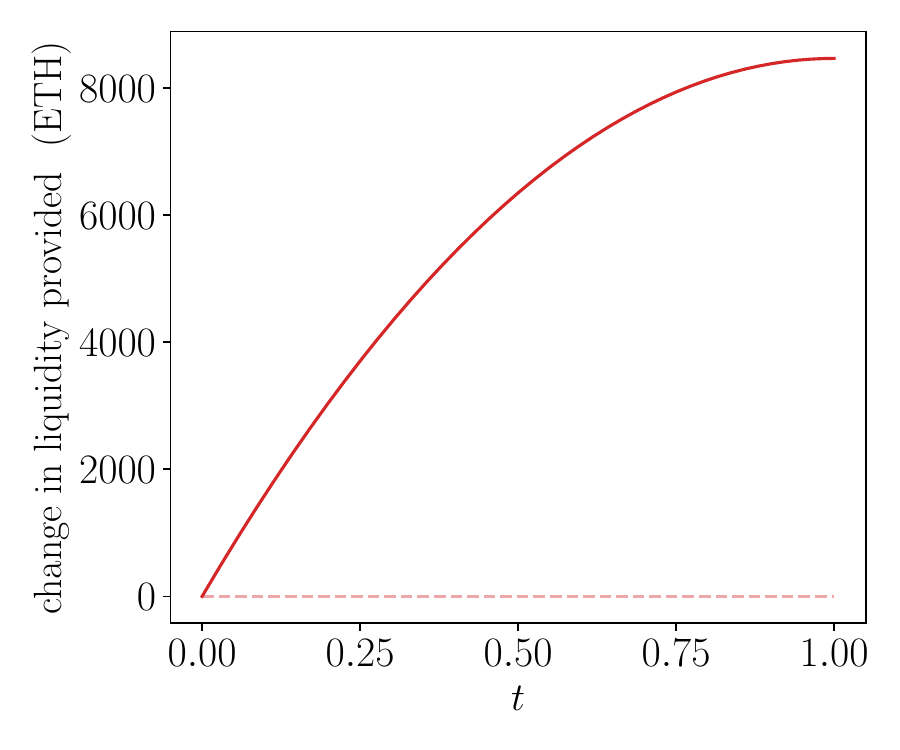}
    \includegraphics[width=0.4\linewidth]{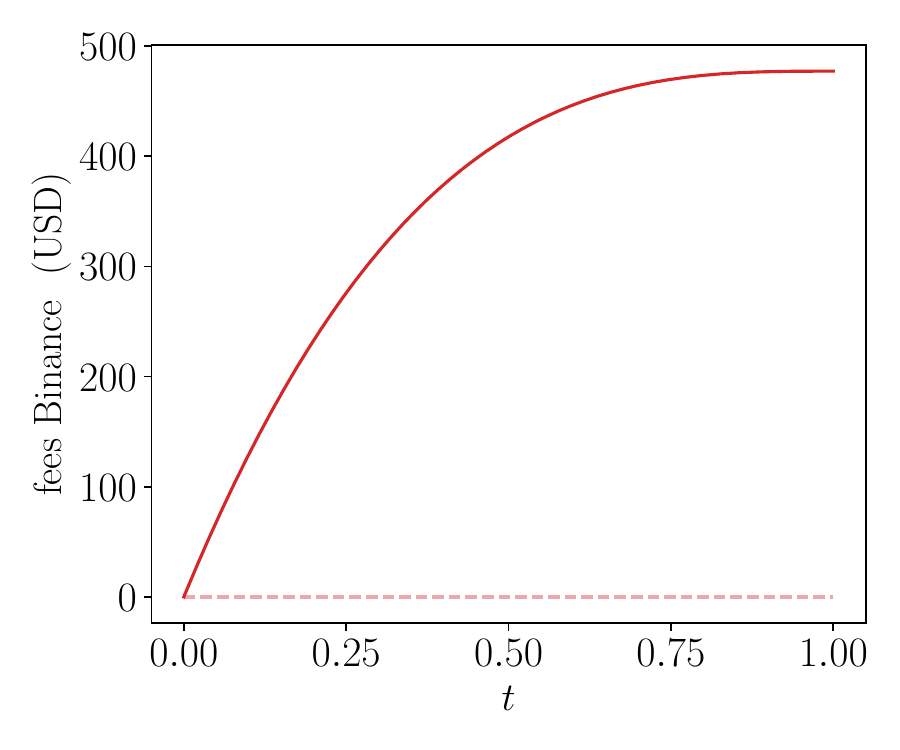}
    \caption{Sample path (together with 90\% confidence bands) of the (i) quantity of ETH in the pool (top left panel), (ii) the optimal speed at which the representative LP adds liquidity to the pool (top right panel), (iii) the cumulative liquidity provided (bottom left panel), and (iv) the cumulative fees paid to the external venue, given by $\int_0^t \tempLP\,(\nu^{*}_u)^2\,\d u$ (bottom right panel). }
    \label{fig:real-1}
\end{figure}

With these model parameters, the representative LP adds roughly 8000 ETH to the pool for which she pays 500 USDT in fees; for comparison, the spread in Binance for ETH-USDT is typically 0.01 which implies that the trading costs in Binance are in the correct order of magnitude. Unlike the case in the previous section, to the naked eye, there is no variability in the trading speed. The confidence bands in the amount of ETH in the pool is due to the trading activity of liquidity takers. 
Next, we study the effect of $\feeV$ and $\tempLP$ in the  optimal trading strategy of the representative LP in Figure \ref{fig:real-2}. \\

\begin{figure}[H]
    \centering
    \includegraphics[width=0.4\linewidth]{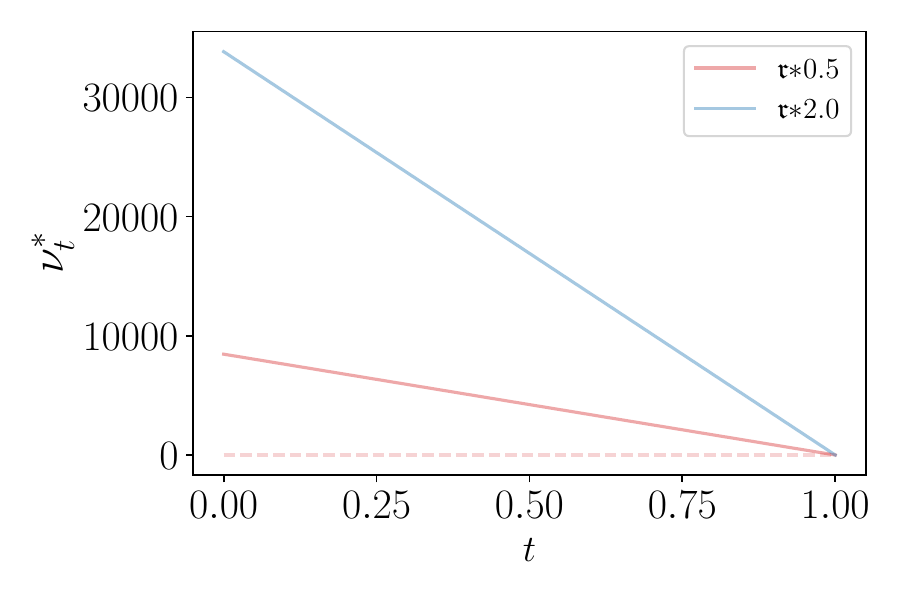}
    \includegraphics[width=0.4\linewidth]{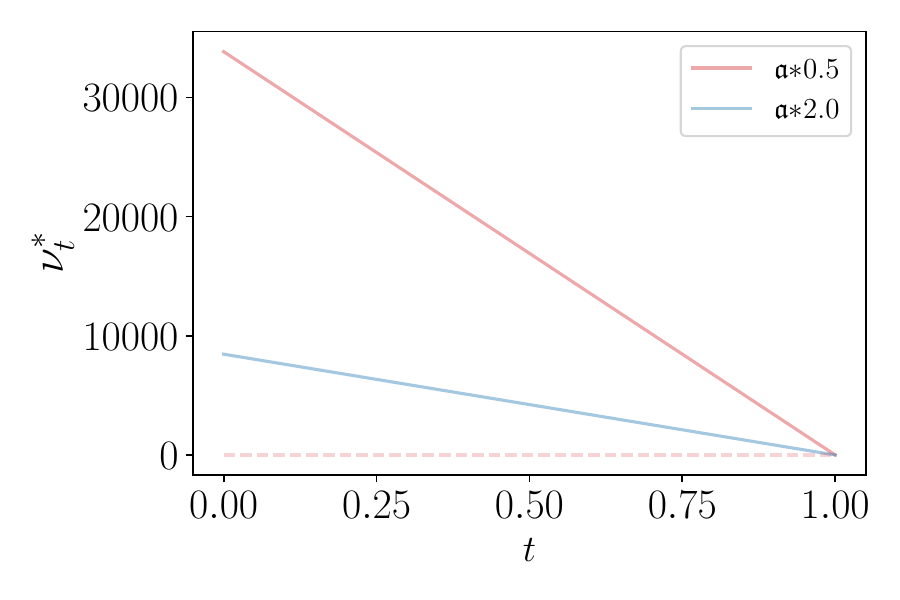}
    \caption{Sample path of the optimal strategy of the LP as model parameters $\feeV$ and $\tempLP$ change. }
    \label{fig:real-2}
\end{figure}

As expected, all else being equal,  the higher the fee charged to LTs, the more the venue collects and the better the contract that the venue offers the representative LP, which translates in more liquidity added to the pool. The effect of the transaction cost parameter is similar to that in the previous section. Higher transaction costs in the external venue diminish the activity in the pool.\\

Finally, we repeat the analysis in Figure \ref{fig:rew_venue} for this new set of model parameters in Figure \ref{fig:real-3}.

\begin{figure}[H]
    \centering
    \includegraphics[width=0.4\linewidth]{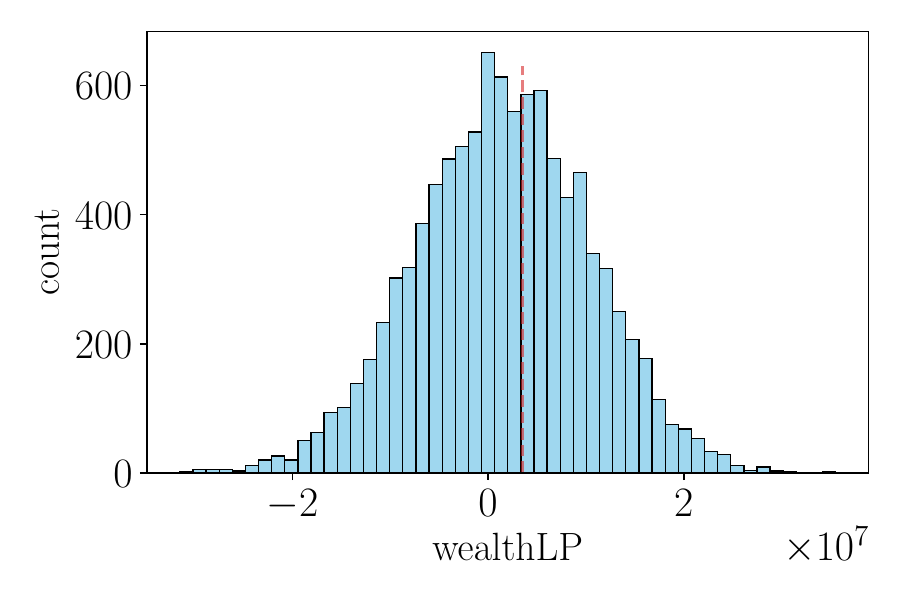}
    \includegraphics[width=0.4\linewidth]{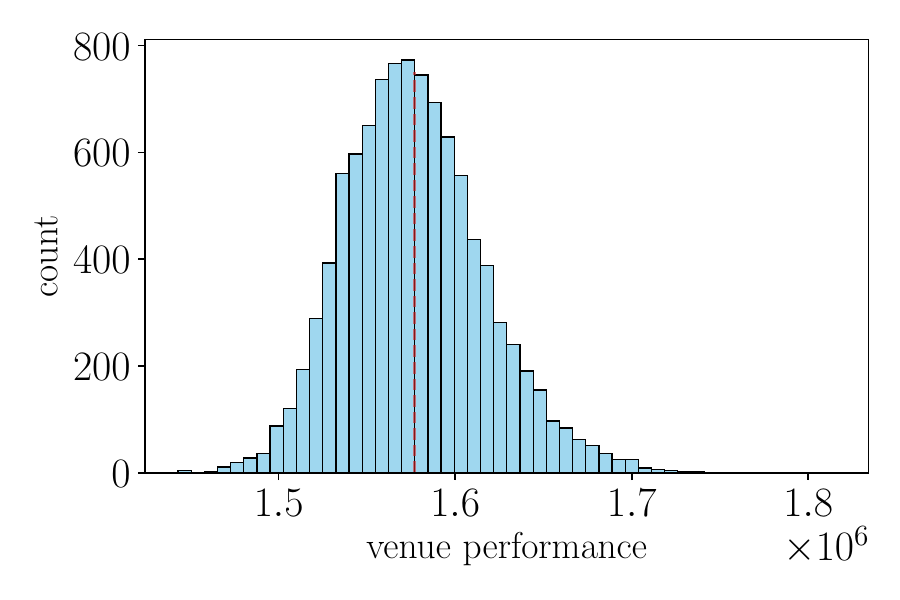}
    \caption{Histograms for the reward $\rew$ paid to the representative LP (left panel) and the performance  of the venue $\feeV\,(\Na_T + \Nb_T) - \rew$ (right panel). The mean values are in vertical dotted lines. }
    \label{fig:real-3}
\end{figure}

We see that the performance of both players is still positive in line with the results in Figure \ref{fig:rew_venue}.
All else being equal,  when $a_2 >0$, both buy order arrivals and sell order arrivals increase with the liquidity in the pool, and this additional order flow is not directional (noise trading). The representative LP exploits this by adding liquidity to the pool and attracting more of these type of trades, which in turn allows the venue to offer a higher reward to the representative LP. We remark that for the above histogram, the transaction cost parameter for trading in the external venue is much higher than the one used in Figure \ref{fig:rew_venue}.

\section{Conclusion}\label{conclusion}
We characterized the Stackelberg equilibrium of a venue and the representative LP. Trading took place in an AMM and while the representative LP aims to maximize profitability, the venue wishes to collect as much fees and order flow as possible. We find that if higher liquidity does not attract noise trading, even when the venue employs the optimal contract, the representative LP does not have incentives to add liquidity to the pool. 
On the other hand, if higher liquidity in the AMM attracts noise trading, then the representative LP adds as much liquidity as the transaction costs in the external venue allow. 
Using data from 1 January 2022 to 30 April 2022, we find a positive correlation of 11\% between the daily average depth in $Y$ for the ETH-USDC pool and the daily order flow, furthermore, the slope of the linear regression is positive but there is no evidence to reject the hypothesis that the slope is equal to zero (the $p$-value is 0.232).
Our work provides insights into the optimal design of these venues going forward and the testable conditions for the well functioning of these venues.

\section*{Acknowledgments:}
We are grateful to the Oxford-Man Institute of Quantitative Finance and the Fintech Chair at Université Paris-Dauphine for support. We thank Fayçal Drissi for providing us with data. \\

\noindent For the purpose of open access, the authors have applied a CC BY public copyright license to any author accepted manuscript arising from this submission.

\bibliographystyle{apalike}
\bibliography{References}

\newpage
\appendix

\section{Proof of Theorem \ref{contractrep} and \ref{theorem: contract rep value}}\label{proofcontractrep}

\subsection{Lemmas}

For any stopping time $\tau$ valued in $[0,T]$, we define
\begin{equation}
    J(\tau, \nu) :=  \E^\nu_\tau \left[ - \exp \left\{ -\gamma\, \left(\rew + \rwealth_{\tau,T}^\nu\right) \right\} \right]\,,
\end{equation}
where $\nu$ is in $\mcA_{\tau,T}$ (the restriction of $\mcA$ on time horizon $[\tau,T]$) and $\rwealth$ is the running wealth
\begin{align}
    \rwealth_{s,t}^\nu &= - \int_s^t \left\{\tempLP\, \nu_{u\shortminus}^2 \right\}\, \d u + \eta\, \int_s^t (S_{u\shortminus} + Z_{u\shortminus})\, \d B_u^{\nu}\\
    &\phantom{{}={}} + \sigma\, \int_s^t Y_{u\shortminus}\, \d W_u + \sum_{i\in\{+,-\}} \int_s^t \left\{ \deltai\, \xi\, \left( S_{u} - \frac{Z_{u\shortminus}\, Y_{u\shortminus}}{Y_{u\shortminus} + \deltai\, \xi} \right) \right\}\, \d \Ni_u\,.
\end{align}
Recall that the value function $V$ is defined as
\begin{equation}
    V_\tau = \esssup_{\nu\in \mcA_{\tau,T}} J(\tau, \nu)\,.
\end{equation}

\begin{lemma}\label{DPPMM}
    Let $\tau$ be a stopping time valued in $[t,T]$. Then
    \begin{equation}
        V_t = \esssup_{\nu\in \mcA_{t,\tau}} \E^\nu_t \left[ \exp\{-\gamma\, \rwealth_{t,\tau}^\nu\}\, V_\tau\right]\,,
    \end{equation}
\end{lemma}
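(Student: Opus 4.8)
The plan is to establish the two inequalities separately, relying on the pathwise additivity $\rwealth_{t,T}^\nu = \rwealth_{t,\tau}^\nu + \rwealth_{\tau,T}^\nu$ and the tower property of conditional expectation under each $\Pb^\nu$. The structural fact I will use throughout is that, for any $\nu\in\mcA_{t,T}$,
\[
-\exp\!\big\{-\gamma(\rew + \rwealth_{t,T}^\nu)\big\} \;=\; \exp\!\big\{-\gamma\,\rwealth_{t,\tau}^\nu\big\}\cdot\Big(\!-\exp\!\big\{-\gamma(\rew+\rwealth_{\tau,T}^\nu)\big\}\Big),
\]
where $\exp\{-\gamma\,\rwealth_{t,\tau}^\nu\}$ is $\mcF_\tau$-measurable, strictly positive, and depends on $\nu$ only through $\nu|_{[t,\tau]}$, while $\rwealth_{\tau,T}^\nu$ together with the $\Pb^\nu$-conditional law given $\mcF_\tau$ depends on $\nu$ only through $\nu|_{[\tau,T]}$ (because $L$ does not depend on $\nu$ and the Radon--Nikodym density $K^\nu_\tau$ is $\mcF_\tau$-measurable).

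For the inequality ``$\le$'': fix $\nu\in\mcA_{t,T}$ and write $\mu:=\nu|_{[t,\tau]}\in\mcA_{t,\tau}$, $\nu':=\nu|_{[\tau,T]}\in\mcA_{\tau,T}$. Conditioning at $\tau$ and using the tower property yields $J(t,\nu)=\E_t^{\mu}\!\big[\exp\{-\gamma\,\rwealth_{t,\tau}^{\mu}\}\,J(\tau,\nu')\big]$. Since $J(\tau,\nu')\le V_\tau\le 0$ a.s.\ by definition of $V_\tau$ as an esssup over $\mcA_{\tau,T}$, and since $\exp\{-\gamma\,\rwealth_{t,\tau}^{\mu}\}>0$, monotonicity of $\E_t^{\mu}$ gives $J(t,\nu)\le \E_t^{\mu}\!\big[\exp\{-\gamma\,\rwealth_{t,\tau}^{\mu}\}\,V_\tau\big]$, which is one element of the family whose esssup appears on the right-hand side of the lemma. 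Taking the esssup over $\nu\in\mcA_{t,T}$ on the left gives the claimed inequality.

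For the inequality ``$\ge$'': I first show the family $\{J(\tau,\nu):\nu\in\mcA_{\tau,T}\}$ is upward directed, so there is a sequence $(\nu^n)\subset\mcA_{\tau,T}$ with $J(\tau,\nu^n)\uparrow V_\tau$ a.s. Indeed, given $\nu^1,\nu^2$, the concatenation $\nu^1\,\ch_{A}+\nu^2\,\ch_{A^c}$ with $A:=\{J(\tau,\nu^1)\ge J(\tau,\nu^2)\}\in\mcF_\tau$ is $\Fb$-progressively measurable, satisfies the $\nu_\infty$-bound, and realises $J(\tau,\nu^1)\vee J(\tau,\nu^2)$ because conditional expectations localise on $\mcF_\tau$-measurable sets. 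Now fix $\mu\in\mcA_{t,\tau}$, concatenate it on $[t,\tau]$ with $\nu^n$ on $[\tau,T]$ to obtain $\tilde\nu^n\in\mcA_{t,T}$, and use the multiplicative identity and the tower property to get $V_t\ge J(t,\tilde\nu^n)=\E_t^{\mu}\!\big[\exp\{-\gamma\,\rwealth_{t,\tau}^{\mu}\}\,J(\tau,\nu^n)\big]$. Since $-\exp\{-\gamma\,\rwealth_{t,\tau}^{\mu}\}\,J(\tau,\nu^n)\ge 0$ decreases a.s.\ to $-\exp\{-\gamma\,\rwealth_{t,\tau}^{\mu}\}\,V_\tau$ and the $n=1$ term is $\Pb^{\mu}$-integrable (by Condition~\ref{assumptionRgamma} together with standard exponential estimates for the bounded-coefficient stochastic integrals in $\rwealth$), conditional monotone convergence gives $V_t\ge \E_t^{\mu}[\exp\{-\gamma\,\rwealth_{t,\tau}^{\mu}\}V_\tau]$; taking the esssup over $\mu\in\mcA_{t,\tau}$ completes the proof.

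The main obstacle is the pasting/measurable-selection argument in the ``$\ge$'' direction: one must check carefully that concatenating two admissible controls at the stopping time $\tau$ preserves $\Fb$-progressive measurability and the uniform bound, and that the change of measure $K^\nu_T\,L_T$ factorises across $\tau$ so that $J(\tau,\cdot)$ transforms correctly under concatenation on $\mcF_\tau$-sets. Once that is in place, upward-directedness of the value family and the (conditional) monotone-convergence passage to the limit are routine, the needed integrability being supplied by Conditions~\ref{assumptionRgamma}--\ref{assumptionReta} and the boundedness of admissible controls.
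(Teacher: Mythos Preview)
Your proposal is correct and follows essentially the same route as the paper: both directions rest on the multiplicative splitting of the exponential at $\tau$, the tower property, and then a monotone approximation $J(\tau,\nu^n)\uparrow V_\tau$ combined with concatenation of controls at $\tau$ and conditional monotone convergence. The only cosmetic differences are that the paper invokes Neveu's Proposition~VI-1-1 to obtain the approximating sequence (whereas you prove upward-directedness by hand via the $\mcF_\tau$-set switching argument), and the paper writes out the factorisation $K_T^{\nu^k}/K_\tau^{\nu^k}=K_T^{\tilde\nu^k}/K_\tau^{\tilde\nu^k}$ explicitly rather than deferring it to the ``main obstacle'' discussion as you do.
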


\begin{proof}
    By tower property and the definition of the value function $V_t$, we have that
    \begin{align}
        V_t &= \esssup_{\nu \in \mcA_{t,\tau}} \E_t^\nu \left[ \exp \left\{ - \gamma\, \rwealth^\nu_{t,\tau} \right\}\, \E^\nu_\tau \left[ - \exp \left\{ -\gamma\, \left(\rew + \rwealth_{\tau,T}^\nu\right) \right\} \right]  \right]\\
        &\leq \esssup_{\nu \in \mcA_{t,\tau}} \E_t^\nu \left[ \exp \left\{ - \gamma\, \rwealth^\nu_{t,\tau} \right\}\, V_\tau  \right]\,.
    \end{align}
    For the other direction, by Proposition VI-1-1 in \cite{neveu1972martingales}, there exists $(\tilde{\nu}^k)_{k\in \N}\subset \mcA_{\tau,T}$ such that $J(\tau, \tilde{\nu}^k) \uparrow V_\tau$. By defining $\nu^k_s = \nu_s\, \ch_{t\leq s< \tau} + \tilde{\nu}^k\, \ch_{\tau\leq s\leq T}$, we obtain
    \begin{align}
        \frac{K_T^{\nu^k}}{K_\tau^{\nu^k}} &= \frac{K_T^{\tilde{\nu}^k}}{K_\tau^{\tilde{\nu}^k}} \quad \text{and}\quad
        Q_{\tau,T}^{\nu^k} = Q_{\tau,T}^{\tilde{\nu}^k}\,.
    \end{align}
    Thus, it follows that
    \begin{equation}
        \begin{split}
            \E^{\nu^k}_\tau \left[ - \exp \left\{ -\gamma\, \left(\rew + \rwealth_{\tau,T}^{\nu^k}\right) \right\} \right] &= \E_\tau \left[ - \frac{K_T^{\nu^k}}{K_\tau^{\nu^k}}\, \exp \left\{ -\gamma\, \left(\rew + \rwealth_{\tau,T}^{\nu^k}\right) \right\} \right]\\
            &= \E_\tau \left[ - \frac{K_T^{\tilde{\nu}^k}}{K_\tau^{\tilde{\nu}^k}}\, \exp \left\{ -\gamma\, \left(\rew + \rwealth_{\tau,T}^{\tilde{\nu}^k}\right) \right\} \right]\\
            &= J(\tau, \tilde{\nu}^k)\,.
        \end{split}
    \end{equation}
    We then obtain
    \begin{equation}
        \begin{split}
            J(t,\nu^k) &= \E_t \left[ - \frac{K_T^{\nu^k}}{K_\tau^{\nu^k}}\, \frac{K_\tau^{\nu^k}}{K_t^{\nu^k}}\, \exp \left\{ -\gamma\, \rwealth_{t,\tau}^{\nu^k} \right\}\, \exp \left\{ -\gamma\, \left(\rew + \rwealth_{\tau,T}^{\nu^k}\right) \right\} \right]\\
            &= \E_t \left[ \frac{K_\tau^{\nu^k}}{K_t^{\nu^k}}\, \exp \left\{ -\gamma\, \rwealth_{t,\tau}^{\nu^k} \right\}\, \E_\tau \left[- \frac{K_T^{\nu^k}}{K_\tau^{\nu^k}}\, \exp \left\{ -\gamma\, \left(\rew + \rwealth_{\tau,T}^{\nu^k}\right) \right\} \right]\right]\\
            &= \E_t \left[ \frac{K_\tau^{\nu}}{K_t^{\nu}}\, \exp \left\{ -\gamma\, \rwealth_{t,\tau}^{\nu} \right\}\, \E^{\nu^k}_\tau \left[  - \exp \left\{ -\gamma\, \left(\rew + \rwealth_{\tau,T}^{{\nu}^k}\right) \right\} \right] \right]\\
            &= \E_t \left[ \frac{K_\tau^{\nu}}{K_t^{\nu}}\, \exp \left\{ -\gamma\, \rwealth_{t,\tau}^{\nu} \right\}\, J(\tau, \tilde{\nu}^k) \right]\,.
        \end{split}
    \end{equation}
    Note that ${K_T^{\nu}}/{K_\tau^{\nu}}$ has mean of unity and is independent of $\F_\tau$. Thus,
    \begin{equation}
        \begin{split}
            \E_t \left[ \frac{K_\tau^{\nu}}{K_t^{\nu}}\, \exp \left\{ -\gamma\, \rwealth_{t,\tau}^{\nu} \right\}\, J(\tau, \tilde{\nu}^k) \right]
            &= \E_t \left[ \frac{K_\tau^{\nu}}{K_t^{\nu}}\, \exp \left\{ -\gamma\, \rwealth_{t,\tau}^{\nu} \right\}\, J(\tau, \tilde{\nu}^k) \right] \E_t \left[\frac{K_T^{\nu}}{K_\tau^{\nu}} \right]\\
            &= \E_t \left[ \frac{K_T^{\nu}}{K_t^{\nu}}\, \exp \left\{ -\gamma\, \rwealth_{t,\tau}^{\nu} \right\}\, J(\tau, \tilde{\nu}^k) \right]\\
            &= \E^{\nu}_t \left[ \exp \left\{ -\gamma\, \rwealth_{t,\tau}^{\nu} \right\}\, J(\tau, \tilde{\nu}^k) \right]
        \end{split}
    \end{equation}
   Due to Condition \ref{assumptionRgamma}, the expression $\E^{\nu}_t \left[ \exp \left\{ -\gamma\, \rwealth_{t,\tau}^{\nu} \right\}\, J(\tau, \tilde{\nu}^1) \right]$ is finite and we can use monotone convergence theorem to obtain
    \begin{equation}
        \begin{split}
            \lim_{k\uparrow \infty} J(t, \nu^k) &=  \E^{\nu}_t \left[ \exp \left\{ -\gamma\, \rwealth_{t,\tau}^{\nu} \right\}\, \lim_{k\to \infty} J(\tau, \tilde{\nu}^k) \right]\\
            &= \E^{\nu}_t \left[ \exp \left\{ -\gamma\, \rwealth_{t,\tau}^{\nu} \right\}\, V_\tau \right]\,.
        \end{split}
    \end{equation}
    Thus, we obtain 
    \begin{equation}
        \begin{split}
            V_t &\geq \lim_{k\uparrow \infty} J(t, \nu^k)\\
            &= \E^{\nu}_t \left[ \exp \left\{ -\gamma\, \rwealth_{t,\tau}^{\nu} \right\}\, V_\tau \right]\,,
        \end{split}
    \end{equation}
    which concludes the proof.
\end{proof}

\begin{lemma}\label{boundproba}
    Let $\Gamma \in \mathcal F$ such that $\mathbb P(\Gamma) >0$. Then there exists $\varepsilon >0$ such that
    $$\mathbb P^\nu (\Gamma) >\varepsilon \qquad \forall \nu \in \mathcal A.$$
\end{lemma}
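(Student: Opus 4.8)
The plan is to compare every $\mathbb{P}^\nu$ with the single reference measure $\mathbb{P}^{0}$ corresponding to the admissible control $\nu\equiv 0$, and to control the relevant Radon--Nikodym density by a constant that does not depend on $\nu$. First I would note that, since $K^{0}_T\equiv 1$, one has $\tfrac{\d\mathbb{P}^{0}}{\d\mathbb{P}}=L_T$, which is strictly positive $\mathbb{P}$-a.s.; hence $\mathbb{P}^{0}\sim\mathbb{P}$, so $\mathbb{P}(\Gamma)>0$ forces $\mathbb{P}^{0}(\Gamma)>0$. Moreover, by the chain rule (using $\tfrac{\d\mathbb{P}^{0}}{\d\mathbb{P}}=L_T>0$), for any $\nu\in\mcA$ one has $\tfrac{\d\mathbb{P}^\nu}{\d\mathbb{P}^{0}}=K^\nu_T$ and therefore $\tfrac{\d\mathbb{P}^{0}}{\d\mathbb{P}^\nu}=(K^\nu_T)^{-1}$. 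Comparing with $\mathbb{P}^{0}$ rather than with $\mathbb{P}$ is convenient because this density involves only the Brownian part.

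Next, for a fixed $\nu\in\mcA$, I would apply the Cauchy--Schwarz inequality under $\mathbb{P}^\nu$:
\[
\mathbb{P}^{0}(\Gamma)=\E^\nu\!\left[\ch_\Gamma\,(K^\nu_T)^{-1}\right]\;\le\;\mathbb{P}^\nu(\Gamma)^{1/2}\,\Big(\E^\nu\big[(K^\nu_T)^{-2}\big]\Big)^{1/2}\,,
\]
which reduces the claim to a $\nu$-uniform bound on $\E^\nu[(K^\nu_T)^{-2}]$. To obtain it I would use that under $\mathbb{P}^\nu$ the process $B^\nu_t=B_t-\int_0^t\tfrac{\nu_s}{\eta}\,\d s$ is a Brownian motion, so that a direct computation gives
\[
(K^\nu_T)^{-2}=\mathcal{E}\!\Big(-2\!\int_0^\cdot\tfrac{\nu_s}{\eta}\,\d B^\nu_s\Big)_T\;\exp\!\Big(\int_0^T\big(\tfrac{\nu_s}{\eta}\big)^2\,\d s\Big)\,.
\]
Here the first factor is a Doléans--Dade exponential which, since $|\nu|\le\nu_\infty$ makes its quadratic variation bounded, satisfies Novikov's criterion and is thus a genuine $\mathbb{P}^\nu$-martingale with unit expectation; the second factor is deterministically bounded by $\exp(T\nu_\infty^2/\eta^2)$. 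Hence $\E^\nu[(K^\nu_T)^{-2}]\le\exp(T\nu_\infty^2/\eta^2)$ for every $\nu\in\mcA$.

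Combining the two displays yields $\mathbb{P}^\nu(\Gamma)\ge\mathbb{P}^{0}(\Gamma)^2\exp(-T\nu_\infty^2/\eta^2)$ for all $\nu\in\mcA$, so one may take, for instance, $\varepsilon:=\tfrac12\,\mathbb{P}^{0}(\Gamma)^2\exp(-T\nu_\infty^2/\eta^2)$, which is strictly positive by the first step. I expect the only delicate point to be the \emph{uniformity} of the bound on $\E^\nu[(K^\nu_T)^{-2}]$: the constant $\exp(T\nu_\infty^2/\eta^2)$ must be the same for all admissible $\nu$, and this is precisely where the a.e.\ bound $|\nu_t|\le\nu_\infty$ built into the definition of $\mcA$ is essential, both to invoke Novikov and to control the exponential correction term. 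Everything else is a routine change-of-measure manipulation.
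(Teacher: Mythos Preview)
Your argument is correct and is in fact cleaner than the paper's. The paper proceeds by contradiction: assuming a sequence $(\nu^n)\subset\mcA$ with $\Pb^{\nu^n}(\Gamma)\to 0$, it extracts a subsequence along which $K_T^{\nu^{n_k}}\to 0$ a.s.\ on $\Gamma$, argues that this forces $\int_0^T\tfrac{\nu^{n_k}_s}{\eta}\,\d B_s\to-\infty$ on a set of positive probability, and then reaches a contradiction via It\^o isometry (which bounds these integrals in $L^2$ by $T\nu_\infty^2/\eta^2$) and Fatou's lemma. Your route is direct rather than by contradiction: you compare every $\Pb^\nu$ to the fixed measure $\Pb^0$, apply Cauchy--Schwarz, and obtain the explicit uniform bound $\E^\nu[(K_T^\nu)^{-2}]\le e^{T\nu_\infty^2/\eta^2}$, which yields a concrete value of $\varepsilon$. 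Both proofs hinge on the same ingredient, namely the a.e.\ bound $|\nu|\le\nu_\infty$; yours has the advantage of producing an explicit constant and of sidestepping the $L_T$ factor entirely by working relative to $\Pb^0$. One cosmetic remark: when you say the second factor is ``deterministically bounded'', the factor $\exp\!\big(\int_0^T(\nu_s/\eta)^2\,\d s\big)$ is itself random; what you mean (and use) is that it is a.s.\ bounded by the deterministic constant $e^{T\nu_\infty^2/\eta^2}$, which suffices.
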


\begin{proof}
    We will prove this by contradiction. Assume there exists $\left(\nu^k\right)_{k\in \N}$ a sequence in $\mcA$ such that $\lim_{n\to \infty} \Pb^{\nu^n}(\Gamma) = 0$. Thus,
    \begin{align}
        0 &= \lim_{n\to \infty} \E^{\nu^n}\left[ \ch_{\Gamma} \right]\\
        &=  \lim_{n\to \infty} \E \left[ \big|K_T^{\nu^n} \ch_{\Gamma}\big| \right]\,.
    \end{align}
    Then we can take a subsequence $(n_k)_{k\in \N} \subseteq (n)_{n\in \N}$ such that $\lim_{k\to \infty} K_T^{\nu^{n_k}} \ch_{\Gamma} = 0$ $\Pb$-almost surely. Because $\Gamma$ has $\Pb$-positive probability, then $\Pb (\lim_{k\to \infty} K_T^{\nu^{n_k}} = 0) > 0$, or equivalently,
    \begin{align}
        \Pb \left( \lim_{k\to \infty} \int_0^T \frac{\nu_t^{n_k}}{\eta}\, \d B_t - \frac{1}{2}\int_0^T \frac{(\nu_t^{n_k})^2}{\eta^2}\, \d t = -\infty \right) >0\,,\\
        \Leftrightarrow  \Pb \left( \lim_{k\to \infty} \int_0^T \frac{\nu_t^{n_k}}{\eta}\, \d B_t = -\infty \right) >0\,. \label{eq: contradiction}
    \end{align}
    However, by Ito isometry,
    \begin{align}
        \E \left[\left(\int_0^T \frac{\nu_u^{n_k}}{\eta}\, \d B_u\right)^2 \right] &= \E \left[\int_0^T \frac{(\nu_u^{n_k})^2}{\eta^2}\, \d u\right]\\
        &\leq \frac{T\, (\nu_\infty)^2}{\eta^2}\,,
    \end{align}
    because $\nu^{n_k}$ is bounded by $\nu_\infty$ $\d \Pb \otimes \d t \text{ a.e.}$ given that  $\nu_t^{n_k} \in \mcA$.
    Thus, the collection of random variables $\left\{ \int_0^T \frac{\nu_u^{n_k}}{\eta}\, \d B_u \right\}_{k\in \N}$ is bounded in $L^2(\Omega, \F_T, \Pb)$. From Equation \eqref{eq: contradiction}, we have
    \begin{equation}
        \begin{split}
            \Pb \left(\liminf_{k\to \infty} \left| \int_0^T \frac{\nu_t^{n_k}}{\eta}\, \d B_t \right|^2 = \infty \right) > 0\\
            \Rightarrow \E \left[ \liminf_{k\to \infty} \left| \int_0^T \frac{\nu_t^{n_k}}{\eta}\, \d B_t \right|^2 \right] = \infty\,.
        \end{split}
    \end{equation}
    However, by Fatou's lemma,
    \begin{equation}
        \begin{split}
            \infty &= \E \left[ \liminf_{k\to \infty} \left| \int_0^T \frac{\nu_t^{n_k}}{\eta}\, \d B_t \right|^2 \right]\\
            &\leq \liminf_{k\to \infty}  \E \left[ \left| \int_0^T \frac{\nu_t^{n_k}}{\eta}\, \d B_t \right|^2 \right]\\
            &\leq \frac{T\, (\nu_\infty)^2}{\eta^2}\,,
        \end{split}
    \end{equation}
    which is a contradiction.
\end{proof}

\begin{lemma}\label{lem:posV}
    For all $t\in \mfT$, we have $V_t < 0$ $\Pb$-almost surely.
\end{lemma}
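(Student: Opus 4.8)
The plan is to first record the trivial bound $V_t\le 0$ and then to promote it to a strict inequality \emph{uniformly} in the control. Since $-\exp\{-\gamma(\rew+\rwealth^\nu_{t,T})\}<0$ pointwise and is $\Pb^\nu$-integrable (by Condition~\ref{assumptionRgamma} together with $|\nu|\le\nu_\infty$), the conditional expectation $J(t,\nu)$ is strictly negative $\Pb$-a.s.\ for each fixed $\nu$. What I really need is one strictly positive, $\mcF_t$-measurable random variable $\delta_t$ with
\begin{equation}
    -J(t,\nu)\;\ge\;\delta_t\qquad\Pb\text{-a.s., for every }\nu\in\mcA_{t,T};
\end{equation}
then $-\delta_t$ is an $\mcF_t$-measurable upper bound of the family $\{J(t,\nu)\}_\nu$, so that $V_t=\esssup_{\nu}J(t,\nu)\le-\delta_t<0$ $\Pb$-a.s.

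The key structural remark is that, under the reference measure $\Pb$, the observable processes $S,X,Y,Z$ and the counting processes $\Nab$ do not depend on $\nu$: only the measure $\Pb^\nu$, through the density $K^\nu_T L_T$, does. Writing $\rwealth^\nu_{t,T}$ under $\Pb$ (i.e.\ replacing $\d B^\nu_u$ by $\d B_u-\tfrac{\nu_u}\eta\d u$), the only $\nu$-dependent terms are $-\int_t^T\tempLP\,\nu_{u-}^2\,\d u\le 0$ and $-\int_t^T(S_u+Z_u)\,\nu_{u-}\,\d u$, the latter bounded in absolute value by $\nu_\infty\int_t^T|S_u+Z_u|\,\d u$. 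Hence, for every $\nu\in\mcA_{t,T}$,
\begin{equation}
    \rew+\rwealth^\nu_{t,T}\;\le\;\Psi_t\;:=\;\rew+\nu_\infty\!\int_t^T|S_u+Z_u|\,\d u+\eta\!\int_t^T(S_u+Z_u)\,\d B_u+\sigma\!\int_t^T Y_u\,\d W_u+\sum_{i\in\{+,-\}}\int_t^T\Deltarwealthi_u\,\d\Ni_u,
\end{equation}
where $\Psi_t$ is $\nu$-free and $\Pb$-a.s.\ finite (continuous paths of $S,Y$, a.s.\ finitely many jumps of $\Nab$ over $[t,T]$, and $\rew\in L^2$). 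Consequently $e^{-\gamma(\rew+\rwealth^\nu_{t,T})}\ge e^{-\gamma\Psi_t}=:\psi_t>0$ $\Pb$-a.s., and therefore $-J(t,\nu)=\E^\nu_t[e^{-\gamma(\rew+\rwealth^\nu_{t,T})}]\ge\E^\nu_t[\psi_t]$.

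It then remains to bound $\E^\nu_t[\psi_t]$ from below by a strictly positive $\mcF_t$-measurable random variable, uniformly in $\nu$. Here I would use a conditional analogue of Lemma~\ref{boundproba}: since $\psi_t>0$ $\Pb$-a.s., for $\varepsilon>0$ small enough the event $\Gamma_\varepsilon=\{\psi_t>\varepsilon\}$ has $\Pb(\Gamma_\varepsilon)>0$, so $\E^\nu_t[\psi_t]\ge\varepsilon\,\Pb^\nu(\Gamma_\varepsilon\mid\mcF_t)$, and the essential infimum over $\nu\in\mcA_{t,T}$ of $\Pb^\nu(\Gamma_\varepsilon\mid\mcF_t)$ is strictly positive $\Pb$-a.s.\ by the same Fatou and $L^2$-boundedness argument as in Lemma~\ref{boundproba}, carried out conditionally on $\mcF_t$ and using $|\nu|\le\nu_\infty$; one then takes $\delta_t$ equal to $\varepsilon$ times this essential infimum.

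The main obstacle is precisely this last step: upgrading the unconditional statement of Lemma~\ref{boundproba} to a conditional, $\nu$-uniform lower bound for $\Pb^\nu(\Gamma_\varepsilon\mid\mcF_t)$, i.e.\ controlling the $\mcF_t$-conditional behaviour under the whole family of equivalent measures $\{\Pb^\nu\}$; everything else is elementary. An alternative that bypasses conditioning on this family is to flip the exponential by conditional Cauchy--Schwarz, $-J(t,\nu)\ge 1/\E^\nu_t[e^{\gamma(\rew+\rwealth^\nu_{t,T})}]$, and then to bound $\E^\nu_t[e^{\gamma(\rew+\rwealth^\nu_{t,T})}]$ uniformly in $\nu$ via Hölder's inequality, using Condition~\ref{assumptionReta} for the $e^{\gamma\rew}$ factor and Novikov-type exponential-moment estimates for the remaining (now $\nu$-free, after discarding the nonpositive drift) terms of $\rwealth^\nu_{t,T}$; this variant, however, needs positive exponential integrability of $\rew$ at a level at least $\gamma$.
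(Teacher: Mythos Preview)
Your proposal is not a complete proof, and you are candid about this: the ``main obstacle'' you flag---producing an $\mcF_t$-measurable, strictly positive lower bound on $\E^\nu_t[\psi_t]$ uniformly in $\nu$---is real, and the conditional version of Lemma~\ref{boundproba} you would need is not established (indeed, even the statement that $\operatorname{essinf}_\nu\Pb^\nu(\Gamma_\varepsilon\mid\mcF_t)>0$ a.s.\ is delicate, since $\Gamma_\varepsilon$ is not $\mcF_t$-measurable and nothing prevents the conditional probability from vanishing on a nontrivial $\mcF_t$-set). Your Cauchy--Schwarz alternative requires $\sup_\nu\E^\nu[e^{\gamma'\rew}]<\infty$ for some $\gamma'>\gamma$, which is Condition~\ref{assumptionReta} with $\zeta'$ replaced by $\gamma'$; the paper imposes no relation between $\gamma$ and $\zeta$, so this is an extra hypothesis.

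The paper sidesteps both difficulties by arguing by contradiction, which is strictly easier than producing a uniform lower bound. One assumes $\Gamma=\{V_t=0\}$ has positive $\Pb$-measure, takes a maximising sequence $(\nu^k)_k$ with $J(t,\nu^k)\uparrow 0$ on $\Gamma$ (Neveu), and extracts a subsequence along which the integrand $K_T^{\nu^{k_m}}\exp\{-\gamma(\rew+\rwealth^{\nu^{k_m}}_{0,T})\}\to 0$ $\Pb$-a.s.\ on $\Gamma$. Taking logarithms, and using exactly your decomposition---that all $\nu$-free terms of $\rwealth^\nu$ are a.s.\ finite and the $\nu$-dependent drift is bounded by $\nu_\infty\int_0^T|S_u+Z_u|\,\d u$---forces $\int_0^T\frac{\nu^{k_m}_u}{\eta}\,\d B_u\to-\infty$ on $\Gamma$. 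But this contradicts the $L^2$-bound $\E\big[(\int_0^T\frac{\nu_u}{\eta}\d B_u)^2\big]\le T\nu_\infty^2/\eta^2$ via Fatou, precisely the mechanism of Lemma~\ref{boundproba}. So your structural observation (isolating the single unbounded $\nu$-dependent term) is the right one; it just feeds a contradiction argument rather than a uniform-bound construction.
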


\begin{proof}
    Let $\Gamma = \{ V_t = 0 \}$. By \cite{neveu1972martingales}, there exists $\left(\nu^k\right)_{k \in \N}$ a sequence in $\mcA_{t,T}$ such that $J(t,\nu^k)\uparrow 0$ on $\Gamma$. From monotone convergence theorem, $\E\left[ \ch_{\Gamma}\, J(t,\nu^k) \right]\to 0$. Then, there is a subsequence $k_1, k_2 , \cdots $ such that
    \begin{equation}
        K_T^{\nu^{k_m}}\,  \exp \left\{ -\gamma\, \left( \rew + \rwealth_{0,T}^{\nu^{k_m}} \right) \right\} \to 0 \quad \Pb-\text{a.s. on $\Gamma$}\,.
    \end{equation}
    This implies
    \begin{equation}
        \begin{split}
            &\int_0^T \frac{\nu^{k_m}_u}{\eta}\, \d B_u + \int_0^T  - \gamma\, \eta (S_{u\shortminus} + Z_{u\shortminus})\, \d B_u + \int_0^T \frac{\gamma\, \eta\, \nu^{k_m}_u\, (S_{u\shortminus} + Z_{u\shortminus})}{\eta}\, \d s\\
            - &\int_0^T  \gamma\, \sigma\, Y_{u\shortminus}\, \d W_u -  \sum_{i\in\{+,-\}} \int_0^T \gamma\, \left\{ \deltai\, \xi\, \left( S_{u} - \frac{Z_{u\shortminus}\, Y_{u\shortminus}}{Y_{u\shortminus} + \deltai\, \xi} \right) \right\}\, \d \Ni_u \to -\infty\, 
        \end{split}
    \end{equation}
$\Pb$-almost surely on $\Gamma$. However, 
\begin{align}
    \sup_{t\in \mfT} Y_t \leq Y_0 + \xi\, (\hatNa_T + \hatNb_T) + \sup_{t\in \mfT} B_t\,,\qquad \text{and} \qquad 
    \sup_{t\in \mfT} Z_t \leq Z_0\, 4^{\hatNa_T + \hatNb_t}\,.
\end{align}
    
Moreover, given that $\nu^{k_m}$ is bounded and using the above inequalities, we conclude that the last four terms in the above expression are finite a.s., therefore,  it has to be the case that
    \begin{equation}
        \int_0^T \frac{\nu^{k_m}_u}{\eta}\, \d B_u \to -\infty\qquad \Pb-\text{a.s. on $\Gamma$}\,.
    \end{equation}
From Lemma \ref{boundproba}, we should have $\Pb(\Gamma)=0$. Thus, we have $V_t < 0$ $\Pb$-almost surely.
\end{proof}

\subsection{Contract representation}\label{appendix 2}

\begin{proof}[\textbf{Proof of Theorem \ref{contractrep}}]
By the DPP in Lemma \ref{DPPMM}, one can show that for any $t\ge s$
$$V_s = \sup_{\nu \in \mcA} \E_s^\nu \left[\, \exp \left\{ -\gamma\, \rwealth^\nu_{s,t} \right\}\, V_t\,  \right]\,.$$
Then, $U^\nu$ defined as
$$U_t^\nu = \exp \left\{ -\gamma\, \rwealth^\nu_{0,t} \right\}\, V_t$$
is a $\Pb^\nu$-supermartingale. By Doob-Meyer decomposition, we can write $U^\nu$ as
\begin{equation}
    \d U^\nu_t = \d M^\nu_t - \d A^{\nu,c}_t - \d A^{\nu, d}_t\,,
\end{equation}
where $M^\nu$ is a $(\Fb, \Pb^\nu)$-martingale, $A^{\nu,c}$ is a continuous non-decreasing process, $A^{\nu, d}$ is a pure-jump non-decreasing process, and $A^{\nu,c}_0 = A^{\nu,d}_0 = 0$. By Martingale Representation Theorem, there exists a $\Fb$-predictable process $(\tilde{A}^{\nu, W}, \tilde{A}^{\nu, B}, \tilde{A}^{\nu, -}, \tilde{A}^{\nu, +})$ such that
\begin{equation}
    \d M^\nu_t = \tilde{A}^{\nu, W}_t\, \d W_t + \tilde{A}^{\nu, B}_t\, \d B^\nu_t + \sum_{i\in\{+,-\}} \tilde{A}^{\nu, i}_t\, \d \tildeNi_t\,.
\end{equation}
By Ito's formula, we have that
\begin{equation}
    \begin{split}
        \d V_t &= U^\nu_{t\shortminus}\, \d \left( \exp \left\{ \gamma\, \rwealth^\nu_{0,t} \right\} \right) + \exp \left\{ \gamma\, \rwealth^\nu_{0,t\shortminus} \right\}\, \d\, U^\nu_t\\
        &\phantom{{}={}}+ \sum_{i\in\{+,-\}} (U^\nu_{t} - U^\nu_{t\shortminus})\left(\exp \left\{ \gamma\, \rwealth^\nu_{0,t} \right\} - \exp \left\{ \gamma\, \rwealth^\nu_{0,t\shortminus} \right\} \right)\, \d \hatNi_t\\
        &\phantom{{}={}} + \d \langle U^{\nu, c}, \exp \left\{ \gamma\, \rwealth^{\nu,c}_{0,\cdot} \right\}\rangle_t
    \end{split}
\end{equation}
where $U^{\nu, c}$ and $\rwealth^{\nu,c}_{0,\cdot}$ are the continuous parts of $U^{\nu}$ and $\rwealth^{\nu}_{0,\cdot}$ respectively. As $\langle \hat \Ni, A^{\nu,d}\rangle = 0$, we have
$$ (U^\nu_{t} - U^\nu_{t\shortminus})\left(\exp \left\{ \gamma\, \rwealth^\nu_{0,t} \right\} - \exp \left\{ \gamma\, \rwealth^\nu_{0,t\shortminus} \right\} \right)\, \d \hatNi_t = \tilde{A}^{\nu,i}_t \left(\exp \left\{ \gamma\, \rwealth^\nu_{0,t} \right\} - \exp \left\{ \gamma\, \rwealth^\nu_{0,t\shortminus} \right\} \right)\, \d \hatNi_t\,.$$
We also calculate
\begin{align}
    \d \left( \exp \left\{ \gamma\, \rwealth^\nu_{0,t} \right\} \right) &= \gamma\, \left( \exp \left\{ \gamma\, \rwealth^\nu_{0,t\shortminus} \right\} \right)\, \d \rwealth^{\nu,c}_{0,t} + \frac{\gamma^2}{2}\, \left( \exp \left\{ \gamma\, \rwealth^\nu_{0,t\shortminus} \right\} \right)\, \d \langle \rwealth^{\nu,c}_{0,\cdot} \rangle_t\\
    &\phantom{{}={}} + \sum_{i\in\{+,-\}}\, \left[ \exp \left\{ \gamma\, \rwealth^\nu_{0,t} \right\} - \exp \left\{ \gamma\, \rwealth^\nu_{0,t\shortminus} \right\}\right]\, \d \hatNi_t\,,\footnotemark
\end{align}\footnotetext{Note that even though the jump processes included in $\rwealth_{0,\cdot}^\nu$ are $\Na$ and not $\hatNa$, we have $\rwealth^\nu_{0,t}= \rwealth^\nu_{0,t\shortminus}$ and $\Deltarwealtha_t = 0$ on the event $\{ \Na_t = \Na_{t\shortminus}+1, Y_{t\shortminus}\leq \xi\}$.} 
and finally, 
\begin{align}
    \d \langle U^{\nu, c}, \exp \left\{ \gamma\, \rwealth^{\nu,c}_{0,\cdot} \right\}\rangle_t &= \gamma\, \exp \left\{ \gamma\, \rwealth^\nu_{0,t\shortminus} \right\}\, \left(\sigma\, Y_{t\shortminus}\, \tilde{A}^{\nu, W}_t + \eta\, (S_{t\shortminus} + Z_{t\shortminus})\, \tilde{A}^{\nu\, B}_t\right)\, \d t\\
    &= \frac{\gamma\, V_{t\shortminus}}{U^{\nu}_{t\shortminus}}\, \left(\sigma\, Y_{t\shortminus}\, \tilde{A}^{\nu, W}_t + \eta\, (S_{t\shortminus} + Z_{t\shortminus})\, \tilde{A}^{\nu\, B}_t\right)\, \d t\,.
\end{align}
Thus, we are able to rewrite $\d V_t$ as
\begin{align}
    \d V_t &= \gamma\, U^\nu_{t\shortminus}\, \left( \exp \left\{ \gamma\, \rwealth^\nu_{0,t\shortminus} \right\} \right)\, \d \rwealth^{\nu,c}_{0,t} + \frac{\gamma^2\, U^\nu_{t\shortminus}}{2}\, \left( \exp \left\{ \gamma\, \rwealth^\nu_{0,\shortminus} \right\} \right)\, \d \langle \rwealth^{\nu,c}_{0,\cdot} \rangle_t\\
    &\phantom{{}={}} + \sum_{i\in\{+,-\}}\, U^\nu_{t\shortminus}\, \left[ \exp \left\{ \gamma\, \rwealth^\nu_{0,t} \right\} - \exp \left\{ \gamma\, \rwealth^\nu_{0,t\shortminus} \right\}\right]\, \d \hat {N}^i_t\\
    &\phantom{{}={}} + \exp \left\{ \gamma\, \rwealth^\nu_{0,t\shortminus} \right\}\, \tilde{A}^{\nu, W}_t\, \d W_t + \exp \left\{ \gamma\, \rwealth^\nu_{0,t\shortminus} \right\}\, \tilde{A}^{\nu, B}_t\, \d B^\nu_t + \sum_{i\in\{+,-\}} \exp \left\{ \gamma\, \rwealth^\nu_{0,t\shortminus} \right\}\, \tilde{A}^{\nu, i}_t\, \left(\d \hatNi_t - \lambdai_t\, \d t\right)\\
    &\phantom{{}={}} - \exp \left\{ \gamma\, \rwealth^\nu_{0,t\shortminus} \right\}\, \d A^{\nu,c}_t - \exp \left\{ \gamma\, \rwealth^\nu_{0,t\shortminus} \right\}\, \d A^{\nu, d}_t + \sum_{i\in\{+,-\}} \tilde{A}^{\nu,i}_t\,\left(\exp \left\{ \gamma\, \rwealth^\nu_{0,t} \right\} - \exp \left\{ \gamma\, \rwealth^\nu_{0,t\shortminus} \right\} \right)\, \d \hat {N}^i_t\\
    &\phantom{{}={}} + \frac{\gamma\, V_{t\shortminus}}{U^{\nu}_{t\shortminus}}\, \left(\sigma\, Y_{t\shortminus}\, \tilde{A}^{\nu, W}_t + \eta\, (S_{t\shortminus} + Z_{t\shortminus})\, \tilde{A}^{\nu\, B}_t\right)\, \d t\,,\\
    &= \gamma\, V_{t\shortminus}\, \d \rwealth^{\nu,c}_{0,t} + \frac{\gamma^2\, V_{t\shortminus}}{2}\,\d \langle \rwealth^{\nu,c}_{0,\cdot} \rangle_t\\
    &\phantom{{}={}} + \sum_{i\in\{+,-\}}\, V_{t\shortminus}\, \left( \exp \left\{ \gamma\, \Deltarwealthi_{t} \right\} - 1\right)\, \d \hat {N}^i_t\\
    &\phantom{{}={}} + \frac{V_{t\shortminus}\, \tilde{A}^{\nu, W}_t}{U^\nu_{t\shortminus}}\, \d W_t + \frac{V_{t\shortminus}\, \tilde{A}^{\nu, B}_t}{U^\nu_{t\shortminus}}\, \d B^\nu_t + \sum_{i\in\{+,-\}} \frac{V_{t\shortminus}\, \tilde{A}^{\nu, i}_t}{U^\nu_{t\shortminus}}\, \left(\d \hatNi_t - \lambdai_t\, \d t\right)\\
    &\phantom{{}={}} - \frac{V_{t\shortminus}}{U^\nu_{t\shortminus}}\, \d A^{\nu,c}_t - \frac{V_{t\shortminus}}{U^\nu_{t\shortminus}}\, \d A^{\nu, d}_t + \sum_{i\in\{+,-\}} \frac{V_{t\shortminus}\, \tilde{A}^{\nu, i}_t}{U^\nu_{t\shortminus}}\,\left(\exp \left\{ \gamma\, \Deltarwealthi_{t} \right\} - 1 \right)\, \d \hat {N}^i_t\\
    &\phantom{{}={}} + \frac{\gamma\, V_{t\shortminus}}{U^{\nu}_{t\shortminus}}\, \left(\sigma\, Y_{t\shortminus}\, \tilde{A}^{\nu, W}_t + \eta\, (S_{t\shortminus} + Z_{t\shortminus})\, \tilde{A}^{\nu\, B}_t\right)\, \d t\,.
\end{align}
Let $P_t = -\frac{1}{\gamma} \log (-V_t)$. Then, by Ito's formula
\begin{align}
    \d P_t &= -\frac{1}{\gamma\, V_{t\shortminus}}\, \d V_t^{c} + \frac{1}{2\,\gamma\, V_{t\shortminus}^2}\, \d \langle V^c \rangle_t - \frac{1}{\gamma}\, \sum_{i\in\{+,-\}} \log \left(\frac{V_t}{V_{t\shortminus}} \right)\, \d \hat \Ni_t - \frac{1}{\gamma\, \Delta A_{t}^{\nu, d}}\, \log \left(\frac{V_t}{V_{t\shortminus}} \right)\, \d A^{\nu, d}_t\,\\
    &= -\frac{1}{\gamma\, V_{t\shortminus}}\, \d V_t^{c} + \frac{1}{2\,\gamma\, V_{t\shortminus}^2}\, \d \langle V^c \rangle_t - \frac{1}{\gamma}\, \sum_{i\in\{+,-\}} \log \left( 1 +\frac{\Delta^i V_t}{V_{t\shortminus}} \right)\, \d \hat \Ni_t - \frac{1}{\gamma\, \Delta A_{t}^{\nu, d}}\, \log \left(1 + \frac{\Delta^d V_t}{V_{t\shortminus}} \right)\, \d A^{\nu, d}_t\,,
\end{align}
with $V^c$ is the continuous part of $V$ and $Y_T = \rew$. We calculate
\begin{align}
    \frac{d\langle V^c\rangle_t}{\d t} &= \left(\frac{V_{t\shortminus}\, \tilde{A}^{\nu, W}_t}{U^\nu_{t\shortminus}} + \gamma\, V_{t\shortminus}\, \sigma\, Y_{t\shortminus}\right)^2 + \left(\frac{V_{t\shortminus}\, \tilde{A}^{\nu, B}_t}{U^\nu_{t\shortminus}} + \gamma\, V_{t\shortminus}\, \eta\, (S_t + Z_{t\shortminus})\right)^2 \,,\\
   \Delta^i V_t &= \left(1 + \frac{\tilde{A}^{\nu,i}_t}{U^\nu_{t\shortminus}}\right)\, V_{t\shortminus}\, \exp \left\{ \gamma\, \Deltarwealthi_{t}  \right\} - V_{t\shortminus}\,,\\
   \Delta^d V_t &=  - \frac{V_{t\shortminus}\, \Delta {A}^{\nu,d}_t}{U^\nu_{t\shortminus}}\,,\\
   \d V^c_t &= \gamma\, V_{t\shortminus}\, \d \rwealth^{\nu,c}_{0,t} + \frac{\gamma^2\, V_{t\shortminus}}{2}\,\d \langle \rwealth^{\nu,c}_{0,\cdot} \rangle_t + \frac{\gamma\, V_{t\shortminus}}{U^{\nu}_{t\shortminus}}\, \left(\sigma\, Y_{t\shortminus}\, \tilde{A}^{\nu, W}_t + \eta\, (S_{t\shortminus} + Z_{t\shortminus})\, \tilde{A}^{\nu\, B}_t\right)\, \d t\\
   &\phantom{{}={}} +\frac{V_{t\shortminus}\, \tilde{A}^{\nu, W}_t}{U^\nu_{t\shortminus}}\, \d W_t + \frac{V_{t\shortminus}\, \tilde{A}^{\nu, B}_t}{U^\nu_{t\shortminus}}\, \d B^\nu_t - \sum_{i\in\{+,-\}} \frac{\lambdai_t\, V_{t\shortminus}\, \tilde{A}^{\nu, i}_t}{U^\nu_{t\shortminus}}\, \d t - \frac{V_{t\shortminus}}{U^\nu_{t\shortminus}}\, \d A^{\nu,c}_t\,.
\end{align}
We may rewrite $\d P_t$ as
\begin{align}
    \d P_t &= -\d \rwealth^{\nu,c}_{0,t} - \frac{\gamma}{2}\,\d \langle \rwealth^{\nu,c}_{0,\cdot} \rangle_t - \frac{\tilde{A}^{\nu, W}_t}{\gamma\, U^\nu_{t\shortminus}}\, \d W_t - \frac{\tilde{A}^{\nu, B}_t}{\gamma\, U^\nu_{t\shortminus}}\, \d B^\nu_t + \sum_{i\in\{+,-\}} \frac{\lambdai_t\, \tilde{A}^{\nu, i}_t}{\gamma\, U^\nu_{t\shortminus}}\, \d t + \frac{1}{\gamma\, U^\nu_{t\shortminus}}\, \d A^{\nu,c}_t\\
    &\phantom{{}={}} + \frac{1}{2\, \gamma}\, \left[ \left(\frac{\tilde{A}^{\nu, W}_t}{U^\nu_{t\shortminus}} + \gamma\, \sigma\, Y_{t\shortminus}\right)^2 + \left(\frac{\tilde{A}^{\nu, B}_t}{U^\nu_{t\shortminus}} + \gamma\, \eta\, (S_t + Z_{t\shortminus})\right)^2 \right]\, \d t\\
    &\phantom{{}={}} - \frac{1}{U^{\nu}_{t\shortminus}}\, \left(\sigma\, Y_{t\shortminus}\, \tilde{A}^{\nu, W}_t + \eta\, (S_{t\shortminus} + Z_{t\shortminus})\, \tilde{A}^{\nu\, B}_t\right)\, \d t\\
    &\phantom{{}={}} - \frac{1}{\gamma}\, \sum_{i\in\{+,-\}} \left\{\log \left( 1 + \frac{\tilde{A}^{\nu,i}_t}{U^\nu_{t\shortminus}} \right) + \gamma\, \Deltarwealthi_{t} \right\}\, \d \hat {N}^i_t\\
    &\phantom{{}={}} - \frac{1}{\gamma\, \Delta A_{t}^{\nu, d}}\, \log \left(1 - \frac{\Delta A^{\nu,d}_t}{U^\nu_{t\shortminus}} \right)\, \d A^{\nu, d}_t\\
    &= \Bigg\{ \tempLP\, \nu_{t\shortminus}^2  + \frac{1}{2\, \gamma}\, \left[ \left(\frac{\tilde{A}^{\nu, W}_t}{U^\nu_{t\shortminus}} + \gamma\, \sigma\, Y_{t\shortminus}\right)^2 + \left(\frac{\tilde{A}^{\nu, B}_t}{U^\nu_{t\shortminus}} + \gamma\, \eta\, (S_t + Z_{t\shortminus})\right)^2 \right]\\
    &\phantom{{}={}} - \frac{\gamma}{2}\, \left[ (\sigma\,Y_{t\shortminus})^2 + \eta^2\, (S_t + Z_{t\shortminus})^2 \right] + \sum_{i\in\{+,-\}} \frac{\lambdai_t\, \tilde{A}^{\nu, i}_t}{\gamma\, U^\nu_{t\shortminus}} + \frac{1}{\gamma\, U^\nu_{t\shortminus}}\, \frac{\d A^{\nu,c}_t}{\d t}\\
    &\phantom{{}={}} - \frac{1}{U^{\nu}_{t\shortminus}}\, \left(\sigma\, Y_{t\shortminus}\, \tilde{A}^{\nu, W}_t + \eta\, (S_{t\shortminus} + Z_{t\shortminus})\, \tilde{A}^{\nu\, B}_t\right)\Bigg\}\, \d t\\
    &\phantom{{}={}} - \frac{1}{\gamma}\, \sum_{i\in\{+,-\}} \left\{\log \left( 1 + \frac{\tilde{A}^{\nu,i}_t}{U^\nu_{t\shortminus}} \right) + \gamma\, \Deltarwealthi_{t} \right\}\, \d \hat {N}^i_t\\
    &\phantom{{}={}} - \frac{1}{\gamma\, \Delta A_{t}^{\nu, d}}\, \log \left(1 - \frac{\Delta A^{\nu,d}_t}{U^\nu_{t\shortminus}} \right)\, \d A^{\nu, d}_t\\
    &\phantom{{}={}} - \left\{ \sigma\, Y_{t\shortminus} + \frac{\tilde{A}^{\nu, W}_t}{\gamma\, U^\nu_{t\shortminus}} \right\}\, \d W_t - \left\{ \eta\, (S_{t\shortminus} + Z_{t\shortminus}) + \frac{\tilde{A}^{\nu, B}_t}{\gamma\, U^\nu_{t\shortminus}} \right\}\, \d B^\nu_t\\
    &= \Bigg\{ \tempLP\, \nu_{t\shortminus}^2 + \frac{1}{2\, \gamma}\, \left[ \left(\frac{\tilde{A}^{\nu, W}_t}{U^\nu_{t\shortminus}} + \gamma\, \sigma\, Y_{t\shortminus}\right)^2 + \left(\frac{\tilde{A}^{\nu, B}_t}{U^\nu_{t\shortminus}} + \gamma\, \eta\, (S_t + Z_{t\shortminus})\right)^2 \right]\\
    &\phantom{{}={}} - \frac{\gamma}{2}\, \left[ (\sigma\,Y_{t\shortminus})^2 + \eta^2\, (S_t + Z_{t\shortminus})^2 \right] + \sum_{i\in\{+,-\}} \frac{\lambdai_t\, \tilde{A}^{\nu, i}_t}{\gamma\, U^\nu_{t\shortminus}} + \frac{1}{\gamma\, U^\nu_{t\shortminus}}\, \frac{\d A^{\nu,c}_t}{\d t}\\
    &\phantom{{}={}} - \frac{1}{U^{\nu}_{t\shortminus}}\, \left(\sigma\, Y_{t\shortminus}\, \tilde{A}^{\nu, W}_t + \eta\, (S_{t\shortminus} + Z_{t\shortminus})\, \tilde{A}^{\nu\, B}_t\right) - \nu_{t\shortminus}\, (S_{t\shortminus} + Z_{t\shortminus}) - \frac{\tilde{A}^{\nu, B}_t\, \nu_{t\shortminus}}{\gamma\, \eta\, U^\nu_{t\shortminus}}\Bigg\}\, \d t\\
    &\phantom{{}={}} - \frac{1}{\gamma}\, \sum_{i\in\{+,-\}} \left\{\log \left( 1 + \frac{\tilde{A}^{\nu,i}_t}{U^\nu_{t\shortminus}} \right) + \gamma\, \Deltarwealthi_{t} \right\}\, \d \hat {N}^i_t\\
    &\phantom{{}={}} - \frac{1}{\gamma\, \Delta A_{t}^{\nu, d}}\, \log \left(1 - \frac{\Delta A^{\nu,d}_t}{U^\nu_{t\shortminus}} \right)\, \d A^{\nu, d}_t\\
    &\phantom{{}={}} - \left\{ \sigma\, Y_{t\shortminus} + \frac{\tilde{A}^{\nu, W}_t}{\gamma\, U^\nu_{t\shortminus}} \right\}\, \d W_t - \left\{ \eta\, (S_{t\shortminus} + Z_{t\shortminus}) + \frac{\tilde{A}^{\nu, B}_t}{\gamma\, U^\nu_{t\shortminus}} \right\}\, \d B_t\,.
\end{align}

Simplifying the above equations, we get
\begin{align}
    \d P_t &= \Bigg\{ \tempLP\, \nu_{t\shortminus}^2 + \frac{1}{2\, \gamma}\, \left[ \left(\frac{\tilde{A}^{\nu, W}_t}{U^\nu_{t\shortminus}}\right)^2 + \left(\frac{\tilde{A}^{\nu, B}_t}{U^\nu_{t\shortminus}}\right)^2 \right]\\
    &\phantom{{}={}} + \sum_{i\in\{+,-\}} \frac{\lambdai_t\, \tilde{A}^{\nu, i}_t}{\gamma\, U^\nu_{t\shortminus}} + \frac{1}{\gamma\, U^\nu_{t\shortminus}}\, \frac{\d A^{\nu,c}_t}{\d t} + \nu_{t\shortminus}\, (S_{t\shortminus} + Z_{t\shortminus}) + \frac{\tilde{A}^{\nu, B}_t\, \nu_{t\shortminus}}{\gamma\, \eta\, U^\nu_{t\shortminus}}\Bigg\}\, \d t\\
    &\phantom{{}={}} - \frac{1}{\gamma}\, \sum_{i\in\{+,-\}} \left\{\log \left( 1 + \frac{\tilde{A}^{\nu,i}_t}{U^\nu_{t\shortminus}} \right) + \gamma\, \Deltarwealthi_{t} \right\}\, \d \hat {N}^i_t\\
    &\phantom{{}={}} - \frac{1}{\gamma\, \Delta A_{t}^{\nu, d}}\, \log \left(1 - \frac{\Delta A^{\nu,d}_t}{U^\nu_{t\shortminus}} \right)\, \d A^{\nu, d}_t\\
    &\phantom{{}={}} - \left\{ \sigma\, Y_{t\shortminus} + \frac{\tilde{A}^{\nu, W}_t}{\gamma\, U^\nu_{t\shortminus}} \right\}\, \d W_t - \left\{ \eta\, (S_{t\shortminus} + Z_{t\shortminus}) + \frac{\tilde{A}^{\nu, B}_t}{\gamma\, U^\nu_{t\shortminus}} \right\}\, \d B_t\,.
\end{align}

If we rewrite 
\begin{align}
    \Ai_t &= - \frac{1}{\gamma}\, \left(\log \left( 1 + \frac{\tilde{A}^{\nu,i}_t}{U^\nu_{t\shortminus}} \right) + \gamma\, \Deltarwealthi_{t} \right)\\
    \AW_t &= - \left( \sigma\, Y_{t\shortminus} + \frac{\tilde{A}^{\nu, W}_t}{\gamma\, U^\nu_{t\shortminus}} \right)\\
    \AB_t &= - \left( \eta\, (S_{t\shortminus} + Z_{t\shortminus}) + \frac{\tilde{A}^{\nu, B}_t}{\gamma\, U^\nu_{t\shortminus}} \right)\,
\end{align}
where $i\in\{+,-\}$, we  have
\begin{align}
    \d P_t &= \Bigg\{ \tempLP\, \nu_{t\shortminus}^2 + \frac{\gamma}{2}\, \left[ \left(\AW_t + \sigma\, Y_{t\shortminus}\right)^2 + \left(\AB_t + \eta\, (S_{t\shortminus} + Z_{t\shortminus})\right)^2 \right]\\
    &\phantom{{}={}} - \sum_{i\in\{+,-\}} \frac{\lambdai_t\,(1 - e^{-\gamma(\Ai_t + \Deltarwealthi_t)})}{\gamma} + \frac{1}{\gamma\, U^\nu_{t\shortminus}}\, \frac{\d A^{\nu,c}_t}{\d t} - \frac{\AB_t\, \nu_{t\shortminus}}{\eta}\Bigg\}\, \d t\\
    &\phantom{{}={}} + \sum_{i\in\{+,-\}} \Ai_t\, \d \hat {N}^i_t - \frac{1}{\gamma\, \Delta A_{t}^{\nu, d}}\, \log \left(1 - \frac{\Delta A^{\nu,d}_t}{U^\nu_{t\shortminus}} \right)\, \d A^{\nu, d}_t\\
    &\phantom{{}={}} + \AW_t\, \d W_t + \AB_t\, \d B_t\,.
\end{align}
Note that
\begin{align}
    P_u &= \rew - \int_u^T  \Bigg\{ \tempLP\, \nu_{t\shortminus}^2 + \frac{\gamma}{2}\, \left[ \left(\AW_t + \sigma\, Y_{t\shortminus}\right)^2 + \left(\AB_t + \eta\, (S_{t\shortminus} + Z_{t\shortminus})\right)^2 \right]\\
    &\phantom{{}={}} - \sum_{i\in\{+,-\}} \frac{\lambdai_t\,(1 - e^{-\gamma(\Ai_t + \Deltarwealthi_t)})}{\gamma} + \frac{1}{\gamma\, U^\nu_{t\shortminus}}\, \frac{\d A^{\nu,c}_t}{\d t} - \frac{\AB_t\, \nu_{t\shortminus}}{\eta}\Bigg\}\, \d t\\
    &\phantom{{}={}} -\sum_{i\in\{+,-\}} \int_u^T  \Ai_t\, \d \hat {N}^i_t - \int_u^T \AW_t\, \d W_t -\int_u^T \AB_t\, \d B_t + \int_u^T \d A^d_t\,,
\end{align}
with  $\d A^d_t = \frac{1}{\gamma\, \Delta A_{t}^{\nu, d}}\, \log \left(1 - \frac{\Delta A^{\nu,d}_t}{U^\nu_{t\shortminus}} \right)\, \d A^{\nu, d}_t$, i.e.
$$A^d_t = \frac 1\gamma \sum_{s\le t}\log \left(1 - \frac{\Delta A^{\nu,d}_t}{U^\nu_{t\shortminus}} \right).$$
This shows in particular that $\Delta a_t = \frac{-\Delta A^{\nu, d}_t}{U^\nu_{t-}} \ge 0$ is independent from $\nu \in \mathcal A$.\\

Therefore, we have $P_T = \rew$ and
\begin{align}
    \d P_t &=  - \Bigg\{\bar h \left(\nu_t, A_t, Z_t, Y_t, S_t \right) -\frac{1}{\gamma\, U^\nu_{t\shortminus}}\, \frac{\d A^{\nu,c}_t}{\d t} \Bigg\}\, \d t +\sum_{i\in\{+,-\}}  \Ai_t\, \d \hat {N}^i_t +  \AW_t\, \d W_t + \AB_t\, \d B_t -  \d A^d_t\,
\end{align}
with
\begin{align}
 \bar h \left(\nu, A, Z, Y, S \right) & = h(\nu, A) + \sum_{i\in\{+,-\}} \frac{\barlambdai(Z, Y, S)\,(1 - e^{-\gamma(\Ai + \Deltarwealthi)})}{\gamma} - \frac{\gamma}{2}\, \left[ \left(\AW + \sigma\, Y\right)^2 + \left(\AB + \eta\, (S + Z)\right)^2 \right].   
\end{align}

We introduce the process $I$ given by
$$I_t = \int_0^t \Bigg\{\bar h \left(\nu_s, A_s, Z_s, Y_s, S_s \right) \d s-\frac{1}{\gamma\, U^\nu_{s\shortminus}}\, \d A^{\nu,c}_s \Bigg\}.$$
By the DPP, we have

\begin{align}
    0 &= \sup_{\nu \in \mathcal A} \mathbb E^\nu \left[ U^\nu_T \right] - V_0 = \sup_{\nu \in \mathcal A} \mathbb E^\nu \left[ U^\nu_T  - M^\nu_T \right] = \sup_{\nu \in \mathcal A} \mathbb E^\nu \left[\int_0^T \left( -\d A^{\nu, c}_t - \d A^{\nu, d}_t \right) \right]\\
    &= \gamma \sup_{\nu \in \mathcal A} \mathbb E^\nu \left[\int_0^T U^\nu_{t-}\left( \d I_t - \bar h \left(\nu_t, A_t, Z_t, Y_t, S_t \right) \d t  + \frac{\d a_t }{\gamma}\right) \right]
\end{align}
where $\d a_t = \frac{-\d A^{\nu, d}_t}{U^\nu_{t-}} \ge 0.$ \\

Moreover, by Lemma \ref{lem:posV}, we have
$$U^\nu_t = e^{-\gamma \rwealth^\nu_{0,t}} V_t \le -\beta_t V_t  <0,$$
where we write
$$\beta_t = -e^{-\gamma \left(  + \eta \int_0^t (S_u + Z_{u\shortminus}) \d B_u + \int_0^t |S_u + Z_{u\shortminus}|\nu_{\infty} \d u + \sigma \int_0^t Y_{t\shortminus} \d W_u + \underset{i=a,b }{\sum}\int_0^t \left\{ \deltai\, \xi\, \left( S_{u} - \frac{Z_{u\shortminus}\, Y_{u\shortminus}}{Y_{u\shortminus} + \deltai\, \xi} \right) \right\}\, \d \Ni_u \right) }.$$
Therefore we have
\begin{align}
    0 & \le \sup_{\nu \in \mathcal A} \mathbb E^\nu \left[\int_0^T -\beta_t V_t\left( \d I_t - \bar h \left(\nu_t, A_t, Z_t, Y_t, S_t \right) \d t  + \frac{\d a_t }{\gamma}\right) \right]\\
    & \le \sup_{\nu \in \mathcal A} \mathbb E^\nu \left[\int_0^T -\beta_t V_t\left( \d I_t - \bar H \left(A_t, Z_t, Y_t, S_t \right) \d t  + \frac{\d a_t }{\gamma}\right) \right].
\end{align}

Let us denote by $\Theta$ the random variable
$$\Theta = \int_0^T -\beta_t V_t\left( \d I_t - \bar H \left(A_t, Z_t, Y_t, S_t \right) \d t  + \frac{\d a_t }{\gamma}\right)\le 0.$$
We then have
$$0\le \sup_{\nu \in \mathcal A} \mathbb E^\nu \left[\Theta \right].$$

Let $\varepsilon >0$. For all $\nu \in \mathcal A$, we have 
$$\mathbb E^\nu \left[\Theta \right]  = \mathbb E^\nu \left[\Theta \mathds{1}_{\Theta \le -\varepsilon} + \Theta \mathds{1}_{\Theta >\varepsilon} \right] \le -\varepsilon \mathbb P^\nu (\Theta \le - \varepsilon ).$$
Therefore,
$$0\le \sup_{\nu \in \mathcal A} -\varepsilon \mathbb P^\nu (\Theta \le - \varepsilon ),$$
which implies that 
$$\inf_{\nu \in \mathcal A}  \mathbb P^\nu (\Theta \le - \varepsilon ) = 0.$$

By Lemma \ref{boundproba}, this implies that 
$$\mathbb P (\Theta \le - \varepsilon ) = 0.$$
This is true for all $\varepsilon >0$, therefore we have that 

$$\Theta = \int_0^T -\beta_t V_t\left( \d I_t - \bar H \left(A_t, Z_t, Y_t, S_t \right) \d t  + \frac{\d a_t }{\gamma}\right) = 0 \qquad \text{a.s.}$$

Since $\beta V > 0,$ $\d I_t - \bar H \left(A_t, Z_t, Y_t, S_t \right)\d t \ge 0 $, and $\d a_t \ge 0,$ this finally implies that 
$$\d I_t  = \bar H \left(A_t, Z_t, Y_t, S_t \right)\d t \quad \text{and} \quad \d a_t = \d A^d_t = 0.$$

It remains to prove that $A\in \Lambda$, i.e.
\begin{equation}
    \sup_{\nu \in \mcA} \sup_{t\in \mfT} \E^\nu \left[ \exp\left\{ -\gamma'\, P_t \right\}  \right] < \infty \quad \text{ for some } \gamma'>\gamma\,.
\end{equation}
Condition \ref{assumptionRgamma} together with Hölder inequality guarantees that there exists $r>0$ such that 
$$ \sup_{\nu \in \mcA}  \E^\nu \left[ | U^\nu_T|^{r+1} \right] < \infty.$$
Therefore, as $U^\nu$ is a negative $\mathbb P^\nu -$supermartingale, we get
$$\sup_{\nu \in \mcA} \sup_{t\in \mfT}  \E^\nu \left[ | U^\nu_t|^{r+1} \right] =  \sup_{\nu \in \mcA}  \E^\nu \left[ | U^\nu_T|^{r+1} \right] < \infty,$$
which leads to the result by using again Hölder inequality and the fact that $e^{-\gamma P_t} = U^\nu_t e^{\gamma Q^\nu_{0,t}}.$

\end{proof}

\begin{proof}[\textbf{Proof of Theorem \ref{theorem: contract rep value}}]
    Let $\rew = P_T^{P_0, A}$, $\bar{P}_t = P_t^{P_0, A} + \rwealth_{0,t}^\nu$. Then
    \begin{align}
        \d e^{-\gamma\, \bar{P}_t} &= \gamma\, e^{-\gamma\, \bar{P}_{t\shortminus}}\, \Bigg\{ \big(H(A_t) - h(\nu_t, A_t)\big)\, \d t - \AW_t\, \d W_t - \AB_t\, \d B_t^{\nu} + \sum_{i\in\{+,-\}} \frac{e^{-\gamma\, (\Deltarwealthi_t + \Ai_t)}-1}{\gamma}\, \d \tildeNi_t\Bigg\}\,.
    \end{align}

    We observe that $H(A)\geq h(\nu, A)$. As $A\in \Lambda$, the process $(-e^{-\gamma\, \bar{P}_t})_{t\in \mfT}$ is a $\Pb^\nu$-supermartingale. Thus,
    \[  \E^\nu \left[ -e^{-\gamma\, \bar{P}_T}\right] \leq -e^{-\gamma\, P_0}\,.  \]
    Equality applies if and only if $H(A_t) = h(\nu_t, A_t)$, $\d t \otimes \d \Pb$ a.s., i.e. $\nu_t = \bar{\nu}(A_t)$.
\end{proof}

\end{document}